\author{Nadiia Chepurko\footnote{MIT. Email: \href{mailto:nadiia@mit.edu}{\texttt{nadiia@mit.edu}}} \and Kenneth L. Clarkson\footnote{IBM Research Almaden. Email: \href{mailto:klclarks@us.ibm.com}{\texttt{klclarks@us.ibm.com}}} \and Praneeth Kacham\footnote{CMU. Email: \href{mailto:pkacham@cs.cmu.edu}{\texttt{pkacham@cs.cmu.edu}}} \and David P. Woodruff\footnote{CMU. Email: \href{mailto:dwoodruf@cs.cmu.edu}{\texttt{dwoodruf@cs.cmu.edu}}}}
\date{}
\title{Near-Optimal Algorithms for Linear Algebra in the Current Matrix Multiplication Time}
\newcommand{\lev}{\textnormal{lev}}
\newcommand{\fast}{\textnormal{fast}}
\newcommand{\OPT}{\textnormal{OPT}}
\renewcommand{\T}[1]{#1^\mathsf{T}}
\renewcommand{\epsilon}{\varepsilon}
\begin{document}
\maketitle
\begin{abstract}
In the numerical linear algebra community, it was suggested that to obtain nearly optimal bounds for various problems such as rank computation, finding a maximal linearly independent subset of columns (a \emph{basis}), regression, or low-rank approximation, a natural way would be to resolve the main open question of Nelson and Nguyen (FOCS, 2013). This question is regarding the logarithmic factors in the sketching dimension of existing  oblivious subspace embeddings that achieve constant-factor approximation.
We show how to bypass this question using a refined sketching technique, and obtain optimal or nearly optimal bounds for these problems.
A key technique we use is an explicit mapping of Indyk based on uncertainty principles and extractors, which after first applying known oblivious subspace embeddings, allows us to quickly spread out the mass of the vector so that sampling is now effective.
We thereby avoid a logarithmic factor in the sketching dimension that is standard in bounds proven using the matrix Chernoff inequality.   
For the fundamental problems of rank computation and finding a basis, our algorithms improve Cheung, Kwok, and Lau (JACM, 2013), and are optimal to within a constant factor and a $\poly(\log\log(n))$-factor, respectively. Further, for constant-factor regression and low-rank approximation we give the first optimal algorithms, for the current matrix multiplication exponent.
\end{abstract}


\section{Introduction}
We obtain several new results for fundamental problems in numerical linear algebra, in many cases removing,
in particular, the \emph{last} log factor to obtain a running time that is truly linear in the input sparsity, and with lower-order terms that are close to optimal.
We note that the bottleneck in improving prior work, including such removal of
last logarithmic factors, involved well-known conjectures to construct \textit{Sparse Johnson-Lindenstrauss transforms} (see Conjecture 14 in \cite{NN13}).

To sidestep these conjectures 
we introduce a new simple matrix sketching
technique which allows for  multiplication by a random
sparse matrix
whose randomly chosen nonzero entries are random signs. The key idea is to compose 
this matrix with an appropriate Flattening transform based on explicit embeddings of $\ell_2$ into $\ell_1$, together with \textsf{OSNAP} embeddings. Using this, we obtain the first oblivious subspace embedding for $k$-dimensional subspaces that has $o(k\log(k))$ rows and that can be applied to a matrix $A$ in time asymptotically less than both $\nnz(A) \log k$ and $k^{\omega} \log k$, where $\nnz(A)$ is the number of nonzero entries in the matrix $A$, and $\omega \approx 2.37$ is the exponent of fast matrix multiplication \cite{alman2021refined}. 
This scheme removes a log factor that has thus far remained both a nuisance and an impediment to
optimal algorithms. Our main embedding result is as follows:

\begin{theorem}[Fast Subspace Embedding, informal Theorem~\ref{thm:sparse-subspace-embedding}]
Given an $n \times k$ matrix, there is a distribution $\calS$ over matrices with $k\poly(\log\log k)$ rows such that, for $\bS \sim \calS$, with probability $\ge 99/100$, for all vectors $x \in \R^{k}$
\begin{equation*}
    \opnorm{Ax} \le \opnorm{\bS Ax} \le \exp(\poly(\log\log k))\opnorm{Ax}.
\end{equation*}
For $\bS \sim \calS$, with probability $\ge 95/100$, the matrix $\bS A$ can be computed in time $O(\gamma^{-1}\nnz(A) + k^{2+\gamma+o(1)})$ for any constant $\gamma > 0$.
\end{theorem}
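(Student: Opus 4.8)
The idea is to realize $\bS$ as a three-fold composition $\bS = R\,F\,T$ and analyze the factors in turn. The first factor $T$ is an \textsf{OSNAP} matrix with $O(1/\gamma)$ nonzeros per column and $k^{1+\gamma}$ rows; by the known \textsf{OSNAP} subspace-embedding guarantees, $T$ is a constant-distortion oblivious subspace embedding for any fixed $k$-dimensional subspace with constant success probability, and the product $TA$ is formed in time $O(\gamma^{-1}\nnz(A))$. The second factor $F\colon\R^{k^{1+\gamma}}\to\R^{k^{1+\gamma+o(1)}}$ is Indyk's explicit flattening map (built from uncertainty principles and extractors, internally via explicit embeddings of $\ell_2$ into $\ell_1$): it is a \emph{deterministic} linear map that is near-isometric in $\ell_2$, distorting every length by at most $\exp(\poly(\log\log k))$, while simultaneously \emph{spreading the mass} of every image vector, so that $\|Fy\|_\infty \le \poly(\log\log k)\,\|Fy\|_2/\sqrt{m}$ for all $y$, where $m=k^{1+\gamma+o(1)}$ is the output dimension. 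Applied to the column space of $TA$ this means the subspace $FTA$ is \emph{incoherent}: all its leverage scores are $O(\poly(\log\log k)\cdot k/m)$. The third factor $R$ is a sparse random $\pm 1$ matrix with $k\,\poly(\log\log k)$ rows, which on the now-flat coordinates behaves like an importance sampler; the whole point of the flattening is that on an incoherent $k$-dimensional subspace such a sparse $R$ already acts as a constant-distortion subspace embedding \emph{without} the extra $\log k$ factor that a matrix-Chernoff argument would force on the sketch dimension.

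For correctness I would chain three bounds, valid for all $x$ simultaneously: $\|TAx\|$ is within a constant factor of $\|Ax\|$ (\textsf{OSNAP}, constant probability); $\|FTAx\|$ is within $\exp(\poly(\log\log k))$ of $\|TAx\|$ (deterministic, from the near-isometry of $F$); and $\|RFTAx\|$ is within a constant factor of $\|FTAx\|$ with probability $\ge 99/100$ (the refined sampling step). Composing and absorbing constants yields $\opnorm{\bS Ax}\in[\opnorm{Ax},\exp(\poly(\log\log k))\opnorm{Ax}]$ with probability $\ge 99/100$ after rescaling $\bS$. The technical heart is the third bound. The textbook proof that a sparse/sampling embedding preserves a $k$-dimensional subspace goes through a matrix-Chernoff (or high-moment) inequality whose dimension prefactor forces $\Theta(k\log k)$ rows; I would instead run a net argument in the style of the analysis of dense Gaussian embeddings: take a $1/2$-net $N$ of the unit sphere of the subspace, $|N|=e^{O(k)}$; for each flat vector $v=FTAx\in N$, show via a Bernstein/Hanson--Wright-type bound that the quadratic form $\|Rv\|^2$ concentrates around $\|v\|^2$ within a constant factor with failure probability $e^{-\Omega(k)}$, using that its variance is controlled by $\|v\|_\infty^2 = O(\poly(\log\log k)/m)\cdot\|v\|_2^2$ and that the same flatness tames collisions under the sparse $R$; union-bound over $N$; and pass from the net to all $x$ by the usual operator-norm bootstrapping. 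The row count forced by the $e^{-\Omega(k)}$ requirement together with the $\poly(\log\log k)$ incoherence is $k\,\poly(\log\log k)$ --- this is exactly where flattening lets us bypass the sparse-Johnson--Lindenstrauss conjectures of \cite{NN13}.

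For the running time, $TA$ costs $O(\gamma^{-1}\nnz(A))$ as above. The map $F$ is applied column by column to the $k$ columns of $TA$, each a vector in $\R^{k^{1+\gamma}}$, using the explicit structured implementation of Indyk's transform (a constant number of extractor-based re-hashing levels, each a fast structured product), at cost $k^{1+\gamma+o(1)}$ per column and hence $k^{2+\gamma+o(1)}$ in total; applying the sparse $R$ to $FTA$ then costs at most $k^{1+o(1)}$ per column. Summing yields $O(\gamma^{-1}\nnz(A)+k^{2+\gamma+o(1)})$. The reason the running time holds only with probability $\ge 95/100$ (rather than the $\ge 99/100$ of the embedding property, or with certainty) is that the number of nonzero entries of the sparse factors $T$ and $R$ --- and hence the cost of the sparse matrix products --- is a random variable that concentrates around its mean; conditioning on that typical event gives the stated time.

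I expect the main obstacle to be the refined sampling bound of the second paragraph: proving that on a $\poly(\log\log k)$-incoherent $k$-dimensional subspace a sparse $\pm 1$ matrix with only $k\,\poly(\log\log k)$ rows is a constant-distortion subspace embedding with constant success probability, i.e. extracting the quantitative gain from flatness through a net/Bernstein argument with an $e^{-\Omega(k)}$ per-point tail rather than retreating to matrix Chernoff and paying the $\log k$; handling the interaction between the sparsity of $R$ and the flatness in that tail bound is the delicate part. A secondary difficulty is purely bookkeeping: choosing the parameters of Indyk's transform so that it is simultaneously near-isometric, sufficiently flattening, near-dimension-preserving, and near-linear-time on a single $k^{1+\gamma+o(1)}$-dimensional vector, and then propagating the three distortions, the three failure probabilities, and the various $o(1)$ terms through the composition so that the final statement reads as claimed.
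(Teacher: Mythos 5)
Your high-level decomposition (OSNAP, then Indyk's flattening map, then a sparse random $\pm1$ matrix) is the paper's decomposition, but the analysis you propose for the last two stages diverges from what Indyk's map actually provides, and this creates real gaps.

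\textbf{What Indyk's map gives is not $\ell_\infty$-incoherence.} You characterize the flattened subspace by an $\ell_\infty$ bound, $\|Fy\|_\infty \le \poly(\log\log k)\,\|Fy\|_2/\sqrt m$, and then plan to feed that into a Bernstein/Hanson--Wright variance bound. But Indyk's construction (Theorem~\ref{thm:indyk-multi-level} in the paper) gives an $\ell_1$ lower bound $\|\calF x\|_1 \ge \sqrt n/4$ together with $\|\calF x\|_2 \le 1$, and these imply (Corollary~\ref{cor:indyk-multi-level}) only that $\Omega(n)$ coordinates of $\calF x$ have magnitude at least $\approx 1/\sqrt n$. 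That is a \emph{many-large-coordinates} guarantee, which does \emph{not} bound $\|\calF x\|_\infty$: a vector with one entry equal to $1/2$ and the remaining entries equal to $1/\sqrt n$ satisfies the $\ell_1$ and $\ell_2$ constraints but has $\ell_\infty$ norm $1/2$. So the variance control you invoke in the net/Bernstein step is not available, and that step, which you identify as the technical heart, would fail as stated.

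\textbf{The paper's argument for the sparse matrix is different, and of a different kind.} The lower bound on $\|\bG v\|$ in Theorem~\ref{thm:sparse-embedding} is an anti-concentration/hitting argument, not a concentration-of-quadratic-form argument: each row of $\bG$ has constant probability of having a nonzero at one of the $\Omega(k)$ large coordinates of $v$, and conditioned on a hit the random sign makes $|\bG_{i*}v| \ge \eta$ with probability $\ge 1/2$; a Chernoff bound over the $M$ rows then yields $\|\bG v\|^2 \ge \Omega(M\eta^2)$. The upper bound comes separately from the operator-norm estimate $\opnorm{\bG} \le \sqrt{(\max_i\sum_j|\bG_{ij}|)(\max_j\sum_i|\bG_{ij}|)}$ using the $\epll k$ row/column sparsity. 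The net argument is then used only for the lower bound. Crucially, the resulting distortion of the sparse-sign stage is $\epll k$, not constant as you claim; the gap between the operator-norm bound $\epll k$ and the contraction bound $\poly(\log\log k)/\epll k$ is exactly where the $\exp(\poly\log\log k)$ factor in the final statement comes from.

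\textbf{You omit the second OSNAP stage, and this changes the row count.} The paper first applies OSNAP to reduce to $k^{1+\gamma}$ rows, then a second OSNAP (with $\mu=1/\log k$) to reach $O(k\log k)$ rows, and only then applies Indyk's map. You apply Indyk's map directly to $\R^{k^{1+\gamma}}$. Then the large-coordinate threshold becomes $\eta\approx 1/\sqrt{k^{1+\gamma}}$ and the number of large coordinates is $\Theta(k^{1+\gamma})$, so the contraction lower bound is $\|\bG v\|^2 \gtrsim M\eta^2 \approx M/k^{1+\gamma}$. With $M=k\,\poly(\log\log k)$ this is $\poly(\log\log k)/k^{\gamma}$, a polynomial loss, which pushes the final distortion to $k^{\gamma}\cdot\epll k$ rather than $\epll k$. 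Restoring the $\epll k$ distortion would require $M=k^{1+\gamma}\poly(\log\log k)$ rows, defeating the point. The intermediate reduction to $O(k\log k)$ rows at cost $k^{2+\gamma}\log^2 k$ (absorbed into $k^{2+\gamma+o(1)}$) is therefore essential before flattening.

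The running-time accounting and the reason the time bound holds only with probability $\ge 95/100$ are in the spirit of the paper. But without fixing the three points above, the proposal does not yield the claimed $k\poly(\log\log k)$ rows with $\exp(\poly\log\log k)$ distortion.
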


Using our subspace embedding, together with additional ideas, we obtain nearly optimal (up to $\log\log$ factors in the sub-linear terms) running times for fundamental problems in classical linear algebra including computing matrix rank, finding a set of linearly independent rows, and linear regression. Further, for regression and low-rank approximation, we obtain the first optimal algorithms for the current matrix multiplication exponent. We begin with least-squares regression:

\begin{theorem}[Least-Squares Regression, informal Theorem \ref{thm:linear-regression}]
Given a full rank $n \times k$ matrix $A$, $k \le n$, and vector~$b$, there exists an algorithm that computes $\hat{x}$ such that $\|A\hat{x} - b \|_2 \leq (1+\eps)\min_{x} \|A x - b \|_2$ in time 
\begin{equation*}
  O\left(\frac{\nnz(A)}{\gamma} + k^{\omega}\poly(\log\log(k)) +\frac{1}{ \poly(\varepsilon)}k^{2+o(1)}n^{\gamma+o(1)}\right)  
\end{equation*}
for any constant $\gamma > 0$ small enough. 
\end{theorem}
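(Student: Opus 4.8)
The plan is to use the Fast Subspace Embedding purely as a \emph{preconditioner}, so that $A$ is touched by a near-linear-time routine only once, and to recover the $(1+\eps)$ guarantee afterwards by working on sketched or sampled instances whose sizes depend on $\poly(k/\eps)$ rather than on $n$. First I would apply the embedding $\bS$ of the Fast Subspace Embedding theorem to the $n\times(k+1)$ matrix $[A\ b]$ (instantiating its parameter with $k+1$ in place of $k$), so that $\opnorm{\bS(Ax-b)}\in[\opnorm{Ax-b},\kappa\opnorm{Ax-b}]$ for every $x$, where $\kappa:=\exp(\poly(\log\log k))$; this produces $\bS A\in\R^{r\times k}$ and $\bS b\in\R^{r}$ with $r=k\poly(\log\log k)$ in time $O(\nnz(A)/\gamma+k^{2+\gamma+o(1)})$. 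Using that $A$, hence $\bS A$, has full column rank, compute a QR (or URV) factorization $\bS A=QR$ with $R\in\R^{k\times k}$ invertible; a block algorithm does this in $O(r\,k^{\omega-1})=O(k^{\omega}\poly(\log\log k))$ time. Set $\tilde A:=AR^{-1}$, never formed explicitly (we keep $A$ and $R^{-1}$ separately, multiplying a vector by each in turn). Since $\opnorm{\bS\tilde A x}=\opnorm{Qx}=\opnorm{x}$, the two embedding inequalities give $\opnorm{\tilde Ax}\in[\kappa^{-1},1]\opnorm{x}$, so $\tilde A$ has condition number at most $\kappa$, i.e.\ $\log\kappa(\tilde A)=\poly(\log\log k)$.

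Second, I would solve $\min_y\opnorm{\tilde A y-b}$ to relative error $1+\eps$ and return $\hat x=R^{-1}\hat y$; the change of variables gives $\opnorm{A\hat x-b}=\opnorm{\tilde A\hat y-b}$, so this transfers exactly. For a warm start, the sketched problem $\min_y\opnorm{\bS(\tilde Ay-b)}=\min_y\opnorm{Qy-\bS b}$ has the closed-form solution $y_0=\T Q(\bS b)$, computable in $O(rk)=O(k^{2}\poly(\log\log k))$ time, and by optimality of $y_0$ for the sketched problem together with the lower embedding bound applied to $\tilde Ay_0-b$ (which lies in the column span of $[A\ b]$) one gets $\opnorm{\tilde Ay_0-b}\le\kappa\cdot\OPT$, hence an initial excess $\opnorm{\tilde Ay_0-b}^2-\OPT^2\le\kappa^2\OPT^2$. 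It remains to shrink this excess to $\le\eps\,\OPT^2$. Since $\tilde A$ is well conditioned, any preconditioned iterative solver — Richardson $y_{t+1}=y_t-\T{\tilde A}(\tilde Ay_t-b)$, or LSQR/CG on $\tilde A$ — contracts the excess by a factor $1-\exp(-\poly(\log\log k))$ per step. Iterating against $\tilde A$ itself, however, costs one $\nnz(A)$-pass per step and would put both a $\poly(\log\log k)$ and a $\log(1/\eps)$ factor onto the $\nnz(A)$ term, so instead I would first pass to a reduced instance: apply one further sketch $T$ of dimension $\poly(k/\eps)\times n$ — a short random-sign/\textsf{OSNAP}-type matrix, or a reweighted sample of rows drawn proportionally to the squared row norms of $\tilde A$ (easy to estimate, as the well-conditioning of $\tilde A$ pins its leverage scores to within a $\poly(\log\log k)$ factor of those norms) — chosen so that $\min_y\opnorm{T(\tilde Ay-b)}$ is a $(1+\eps)$-accurate surrogate. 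Forming $T\tilde A=(TA)R^{-1}$ uses one more inexpensive pass over $A$ plus $\poly(k/\eps)\cdot k^{2}$ arithmetic, and solving the surrogate (by the preconditioned iteration above, whose per-step cost is now only $\poly(k/\eps)$, or directly since it is small) costs $\frac{1}{\poly(\eps)}k^{2+o(1)}n^{\gamma+o(1)}$. Summing the three stages and rescaling $\eps$ by a constant yields the claimed bound.

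The crux, and the step I expect to be the main obstacle, is precisely this high-precision stage. The Fast Subspace Embedding delivers only an $\exp(\poly(\log\log k))$-factor distortion, not a $(1+\eps)$-factor one, and composing it with a $(1+\eps)$-subspace embedding still inherits the large distortion; meanwhile the most direct route to $(1+\eps)$, preconditioned iteration against $A$, costs $\exp(\poly(\log\log k))\log(1/\eps)$ sparsity-passes. The design constraint is therefore to confine \emph{every} $\eps$- and iteration-dependent computation to either the $r\times k$ sketch $\bS A$ or a $\poly(k/\eps)\times k$ surrogate, so that $\nnz(A)$ is paid exactly once (with the $1/\gamma$ inherited from the embedding), the matrix-multiplication exponent appears only in the single dense QR/preconditioner computation $k^{\omega}\poly(\log\log k)$, and all accuracy-dependent work is absorbed into $\frac{1}{\poly(\eps)}k^{2+o(1)}n^{\gamma+o(1)}$. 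One also has to track the failure probabilities of the embedding, of the sampling step, and of the iteration, union-bounding them to a constant, and to invoke the full-rank hypothesis only where legitimate, namely to guarantee that $R$ is invertible.
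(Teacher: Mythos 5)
Your architecture is essentially the paper's: apply the $\exp(\poly(\log\log k))$-distortion fast embedding once to get a preconditioner $R$ (so $\tilde A = AR^{-1}$ has $\kappa(\tilde A)\le\epll{k}$) and a warm start $y_0=\T{Q}\bS b$ that matches the paper's $x_{\text{start}}=R^{-1}(\bS_{\fast}A)^+\bS_{\fast}b$, then pass to a $\poly(k/\eps)\times k$ surrogate that is $(1+\eps)$-accurate and iterate there. Your warm-start bound $\opnorm{\tilde A y_0-b}\le\kappa\cdot\OPT$ via the lower embedding inequality is correct and mirrors the paper's argument that $x_{\text{start}}$ is an $\epll{k}$-approximate solution.

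The genuine gap is the formation of the $(1+\eps)$-accurate surrogate within the stated budget --- which you yourself flag as the main obstacle, and which neither of your two candidates resolves as sketched. An OSNAP $T$ with $(1+\eps)$ distortion carries $\Theta(1/\eps)$ nonzeros per column, so $TA$ costs $\Omega(\nnz(A)/\eps)$, planting an $\eps^{-1}$ on the $\nnz(A)$ term that the theorem disallows; a dense $\poly(k/\eps)\times n$ sign matrix is far worse. The leverage-score-sample route is the correct one (it is the paper's Theorem~\ref{thm:final-epsilon-subspace-embedding}), but two things go wrong. First, a small slip: the row norms $\opnorm{A_{i*}R^{-1}}^2$ pin the leverage scores only to within a $\kappa^2=\exp(\poly(\log\log k))$ factor, not $\poly(\log\log k)$ (though this still yields a $\poly(k/\eps)$-row sample). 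Second, and crucially, ``easy to estimate'' hides the entire difficulty: even constant-factor estimates of all $n$ row norms via a standard JL sketch with $O(\log n)$ columns cost $\Theta((\nnz(A)+k^2)\log n)$, exactly the $\log n$ on $\nnz(A)$ the theorem is designed to remove. The paper's device (Lemma~\ref{lma:sampling-from-product}) is to use a Gaussian test matrix with only $O(1/\gamma)$ columns, giving $n^{\gamma}\log n$-factor over-estimates of the norms in $O(\gamma^{-1}(\nnz(A)+k^2))$ time, over-sample proportionally, compute accurate norms only for the $O(s\,n^{\gamma}\log n)$ rows in the preliminary sample, and rejection-sample down to the true leverage-score distribution; this two-stage sample-then-reject step is where $n^{\gamma+o(1)}$ enters the bound. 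Your write-up instead charges $n^{\gamma}$ to ``solving the surrogate,'' but a $\poly(k/\eps)\times k$ instance has no $n$-dependence to pay for; the $n^{\gamma}$ belongs to the surrogate-formation step, which your plan leaves unspecified.
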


We note that for constant $\eps$ and $k = n^{\Omega(1)}$, the running time obtained is within a $\poly(\log\log(n))$ factor of  optimal, for the current matrix multiplication constant. Further, it improves on prior work \cite{CW13,mm13,NN13,bourgain2015toward,cohen2015uniform,cohen2015optimal} describing algorithms with an additional $\log(n)$ factor multiplying either the leading $\nnz(A)$ term, or that is $\nnz(A)$ time but has a $k^{\omega} \log k$ additive term or worse. We note that our additive term is only $k^{\omega}\poly(\log \log k)$, for the current matrix multiplication exponent $\omega$, when $k = n^{\Omega(1)}$.  
Importantly, up to a $\poly(\log \log k)$ factor, our bound is best possible, and thus we remove the last logarithmic factor even in the additive term. As we explain more below, the issue with previous work is that to obtain a sketching dimension of $O(k)$, for constant $\epsilon$, one needs either $\nnz(A) k$ time to directly perform a multiplication with a dense Sub-Gaussian matrix, or at least $k^{\omega}\log k$ time to compose a dense Sub-Gaussian sketch with a sparse sketch. We avoid this using our new subspace embedding, given by   Theorem~\ref{thm:sparse-subspace-embedding}. 

We note that simply sketching on the left with a CountSketch matrix and solving the sketched problem attains an optimal $O(\nnz(A))$ running time for $k = O(n^c)$ for a sufficiently small constant $c > 0$,  and so our theorems are 
most interesting when $k = \Omega(n^c)$. 

Next, we show a similar result holds for low-rank approximation (LRA): 

\begin{theorem}[LRA in Current Matrix Multiplication Time, informal Theorem \ref{thm:final-low-rank-approximation}] 
Given $\eps > 0$, an $n \times d$ matrix $A$ and $k \le \min(n,d)$, $k = \max(n,d)^{\Omega(1)}$, there exists an algorithm that runs in $$O\left(\nnz(A) + \frac{(n+d)k^{\omega-1}}{\eps} + \frac{(n+d)k^{1.01}}{\eps} + \poly(\varepsilon^{-1}k)\right)$$ time and outputs two matrices $V \in \R^{n \times k}$ and $\tilde{X} \in \R^{k \times d}$, with $\T{V}V = I_k$, such that
\begin{equation*}
	\frnorm{A - V \cdot \tilde{X}} \le (1+\varepsilon)\frnorm{A-[A]_k}.
\end{equation*}
\end{theorem}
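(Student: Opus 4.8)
The plan is to follow the ``sketch both sides, solve a tiny problem, lift back'' template for $(1+\eps)$ low-rank approximation, but to carry out every step that classically costs a logarithmic factor --- or an $\eps$-dependence worse than $1/\eps$ --- using the fast subspace embedding of Theorem~\ref{thm:sparse-subspace-embedding} together with rectangular fast matrix multiplication. The crucial discipline is that the only ``tall'' $n\times\Theta(k/\eps)$ or $\Theta(k/\eps)\times d$ matrices that appear --- essentially $AR$ and a sparse sketch of $A$ --- are never orthonormalized or SVD'd directly; all heavy linear algebra happens on sketched $\poly(k/\eps)$-size matrices, and the tall matrices are only ever multiplied by objects that already have rank $k$, which is what keeps the matmul cost at $(n+d)k^{\omega-1}/\eps$ rather than $(n+d)k^{\omega-1}/\eps^{\omega-1}$.

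\textbf{Step 1 (right sketch).} Take $R\in\R^{d\times m}$, $m=O(k/\eps)$, a sparse (CountSketch-type) projection-cost-preserving sketch, and form $C:=AR$ in $O(\nnz(A))$ time; then $\mathrm{colspan}(C)$ contains a rank-$k$ matrix with residual at most $(1+\eps)\OPT$, where $\OPT:=\frnorm{A-[A]_k}$. The obstacle here is that computing the best rank-$k$ approximation of $A$ inside $\mathrm{colspan}(C)$ naively needs $\T CA$, costing $\nnz(A)\cdot m$. To bypass this, apply $\bS$ of Theorem~\ref{thm:sparse-subspace-embedding} to $C$ in time $O(\nnz(C)/\gamma + m^{2+\gamma+o(1)}) = O((n+d)k^{1.01}/\eps + \poly(\eps^{-1}k))$ (with $\gamma=0.01$ and $\nnz(C)\le nm$), obtain from the small matrix $\bS C$ a $\poly(\log\log(k/\eps))$-distortion basis, and --- since a constant-distortion basis suffices for leverage-score estimation --- sample/sketch once more to reduce to a genuine $(1+\eps)$ affine embedding for the residual regression, without the usual $\log$ factor in the number of samples.

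\textbf{Steps 2--3 (tiny problem, then lift).} Using a second sparse sketch $\bar S$ with $O(k/\eps)$ rows, form $\bar SA$ and $\bar SC$; combined with the sketches of Step~1 this reduces the task to $\min_{\mathrm{rank}(X)\le k}\frnorm{S_\ell\,C\,X\,S_r - S_\ell A S_r}$ on matrices of size $\poly(k/\eps)$, solved exactly by a truncated SVD in $\poly(\eps^{-1}k)$ time; write its optimizer as $X = W_\ell W_r$ with $W_\ell\in\R^{m\times k}$. Then $V$ is an orthonormal basis of the $n\times k$ matrix $CW_\ell$, computed by rectangular fast matrix multiplication in $O(nmk^{\omega-2}) = O(nk^{\omega-1}/\eps)$ time followed by a QR of the $n\times k$ result; and $\tilde X$ is obtained from $\min_X\frnorm{VX - A}$ via $\tilde X = (S'V)^+(S'A)$ for a sparse sketch $S'$ with $O(k/\eps)$ rows, where $S'A$ costs $O(\nnz(A))$ and the product costs $O(dmk^{\omega-2}) = O(dk^{\omega-1}/\eps)$. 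Composing the error guarantees --- the projection-cost-preserving property of $R$ and the affine-embedding properties of $S_\ell,S_r,S'$ (each tuned to a $(1+O(\eps))$ factor), together with $\bS$ entering only as a constant-distortion preconditioner whose error is then driven down by the subsequent $(1+\eps)$ sampling step --- gives $\frnorm{A - V\tilde X}\le (1+\eps)\OPT$ after rescaling $\eps$; the failure probability is a union bound over $O(1)$ sketches, each amplified to the needed constant.

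\textbf{Main obstacles.} The most delicate point is reconciling the two conflicting demands: the $(1+\eps)$ guarantee forces sketch dimension $\Theta(k/\eps)$, whereas Theorem~\ref{thm:sparse-subspace-embedding} supplies only an $o(k\log k)$-row embedding with $\poly(\log\log)$ distortion --- so $\bS$ can serve only as a fast low-distortion preconditioner, and one must verify that the subsequent sampling-to-$(1+\eps)$ step inherits a $\poly(\log\log)$ rather than $\log$ blow-up, which is exactly what places the intermediate term at $(n+d)k^{1.01}/\eps$ (absorbing $(k/\eps)\poly(\log\log(k/\eps))$ into $k^{1.01}/\eps$). The second obstacle is bookkeeping the chained sketches: each inner sketch must be an affine embedding for the particular rank-$k$-constrained regression to which it is applied --- not merely a subspace embedding --- and one must check that the rank-$k$ truncation commutes with the sketches up to $(1+O(\eps))$ error, which is where the standard analysis of sketch-based LRA is invoked. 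Finally, one must route every matrix product through rectangular fast matrix multiplication with the ``tall $\times$ rank-$k$'' shape so that $\eps$ appears only to the first power outside the $\poly(\eps^{-1}k)$ term.
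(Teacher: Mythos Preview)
Your high–level template (sketch, solve small, lift) is right, and you correctly identify that every tall product must stay in the ``$n\times\Theta(k/\eps)$ times rank-$k$'' shape to land at $(n+d)k^{\omega-1}/\eps$. But the execution diverges from the paper in a way that leaves real gaps.

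\textbf{The paper does not use Theorem~\ref{thm:sparse-subspace-embedding} here.} It says so explicitly: ``we do not need to utilize our subspace embedding construction in this algorithm.'' The actual route is: (i) apply CountSketch $\bT$ with $\Theta(k^2)$ columns and CountSketch $\bS$ with $\Theta(k^4)$ rows to get a $\poly(k)$-size matrix, run the Boutsidis--Drineas--Magdon-Ismail CUR/BSS algorithm on it to select $O(k)$ columns of $A\bT$, and orthonormalize to an $O(k)$-dimensional $U$ with $\frnorm{(I-U\T{U})A}^2\le 3\,\OPT_A$; (ii) boost to $(1+\eps)$ by \emph{residual sampling} \`a la Deshpande--Vempala (Lemma~\ref{lma:residual-sampling-our-version}), sampling $O(k/\eps)$ additional columns; (iii) sketch-and-solve inside the resulting $O(k/\eps)$-dimensional span to extract $V$; (iv) repeat the CUR+residual-sampling recipe on rows, now using the exact leverage scores of $V$, to get the right factor $\tilde X$.

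\textbf{Where your Step~1 breaks.} You posit a sparse ``CountSketch-type'' projection-cost preserving sketch $R$ with $m=O(k/\eps)$ columns computable in $O(\nnz(A))$ time. That object is not known to exist: CountSketch needs $\Theta(k^2/\eps^2)$ columns to be PCP (or even for $\mathrm{colspan}(AR)$ to contain a $(1+\eps)$ approximation), and the $\tilde O(k/\eps^2)$-column sketches (OSNAP, SRHT) cost $\nnz(A)\cdot\polylog$ to apply. If you fall back to $m=\Theta(k^2/\eps^2)$, then $\nnz(C)\le nm$ already forces the cost of applying $\bS$ to $C$ up to $\Theta(nk^2/\eps^2)$, blowing the budget. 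The paper sidesteps this by first landing in an $O(k)$-dimensional (not $O(k/\eps)$) constant-factor subspace via CUR, and only then paying the $1/\eps$ via residual sampling.

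\textbf{Where the $(n+d)k^{1.01}/\eps$ term actually comes from.} It is not from applying $\bS$ or from absorbing $\poly(\log\log)$ factors. It is the $s\,d\,n^{\gamma}\log^2 n$ term of Lemma~\ref{lma:sampling-from-product} (the two-stage Gaussian trick: $O(1/\gamma)$ columns for coarse estimates, then reject using $O(\log n)$-column estimates on the few survivors) applied with $s=\Theta(k/\eps)$ to sample residual columns/rows. The hypothesis $k=\max(n,d)^{\Omega(1)}$ is precisely what converts $n^{\gamma}$ and $d^{\gamma}$ into $k^{O(\gamma)}$, which for $\gamma$ a small constant gives the $k^{1.01}$. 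Your proposal never invokes this assumption, so you have no mechanism to turn the $n^{\gamma}$ you would incur into a power of $k$.

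In short: the missing ingredients are the CUR-then-residual-sampling pipeline and the coarse-Gaussian sampling of Lemma~\ref{lma:sampling-from-product}; the fast embedding of Theorem~\ref{thm:sparse-subspace-embedding} is a red herring for this theorem.
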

For the current matrix multiplication exponent, the running time is $O(\nnz(A) + (n+d)k^{\omega-1})$ for constant $\varepsilon$. In contrast, existing low rank approximation algorithms
\cite{CW13,mm13,NN13,bourgain2015toward,cohen2015dimensionality,cohen2015uniform,cohen2015optimal,cohen2017input} take time at least $\nnz(A) \log n$ or $d k^{\omega-1} \log k$ or worse. 
Thus, as with least squares regression, we remove the last logarithmic factor in both the $\nnz(A)$ term and the 
leading additive term. 

We also give constructions of $1+\varepsilon$ subspace embeddings with $O(k\log(k)/\varepsilon^2)$ rows that have better running times than earlier subspace embeddings with $O(k\log(k)/\varepsilon^2)$ rows, such as approximate leverage score sampling and \textsf{OSNAP} embeddings.

\begin{theorem}[Subspace Embeddings, informal Theorem~\ref{thm:final-epsilon-subspace-embedding}] Given a matrix $A \in \R^{n \times k}$, there is a \emph{non-oblivious} subspace embedding $\bS$ with $O(k\log(k)/\varepsilon^2)$ rows that can be applied to the matrix $A$ in time
	$O(\nnz(A) + k^{\omega}\poly(\log\log k) + \poly(\varepsilon^{-1})k^{2.1+o(1)})$ for $k = n^{\Omega(1)}$.
\end{theorem}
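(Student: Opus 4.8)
The plan is to bootstrap the crude, fast embedding of Theorem~\ref{thm:sparse-subspace-embedding} into a $(1+\varepsilon)$-accurate one by (approximate) leverage-score sampling of the rows of $A$; the resulting $\bS$ is \emph{non-oblivious} precisely because its sampling probabilities are read off from $A$. First I would compute $B:=\bS_0A$ for $\bS_0\sim\calS$ with $k\,\poly(\log\log k)$ rows, which with constant probability preserves $\|\cdot\|_2$ on the column space of $A$ to within $\kappa=\exp(\poly(\log\log k))$; by Theorem~\ref{thm:sparse-subspace-embedding} this costs $O(\nnz(A)+k^{2+\gamma+o(1)})$, absorbed into the claimed bound. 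A thin $QR$ factorization $B=QR$ then costs $O(k^{\omega}\poly(\log\log k))$ because $B$ has only $k\,\poly(\log\log k)$ rows, and $M:=AR^{-1}$ has the same column space as $A$ with all singular values in $[1/\kappa,\kappa]$; hence the true leverage scores satisfy $\lev_i(A)=\lev_i(M)$ and $\|\T{e_i}M\|_2^2 \in[\lev_i(A)/\kappa^2,\kappa^2\lev_i(A)]$.

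The technical heart is to upgrade these coarse $\kappa^2$-factor estimates to estimates correct to within an $O(1)$ factor of the \emph{true} $\lev_i(A)$ --- so that leverage-score sampling needs only $O(k\log k/\varepsilon^2)$ rows, with no $\poly(\log\log k)$ or $\kappa$ blowup --- and to do this in $O(\nnz(A)+\poly(\varepsilon^{-1})k^{2.1+o(1)})$ time, i.e.\ with \emph{no} $\log n$ factor on the input-sparsity term. The textbook estimator forms $AR^{-1}G$ for a Gaussian $G$ with $\Theta(\log n)$ columns, costing $\Theta(\nnz(A)\log n)$ --- exactly the last log factor we are trying to avoid. Instead I would route this through the flattening machinery from earlier in the paper: composing $A$ directly (never forming $AR^{-1}$) with Indyk's explicit extractor-based embedding spreads out the mass of the relevant row vectors, so that a \emph{sparse}-sign probe with only $\poly(\log\log k)$ nonzeros per row --- or, equivalently, a short uniform coordinate subsample --- recovers each $\|\T{e_i}M\|_2^2$ to within an $O(1)$ factor with the required failure probability; the $\poly(\varepsilon^{-1})k^{2.1+o(1)}$ additive term comes from applying this map and probe to the already $\poly(k)$-sized sketched objects. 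Removing the residual $\kappa^2$ error, and thus not paying it in the final row count, is handled by a short refinement loop (a coarse sampling round using the $\kappa^2$-accurate scores, followed by re-orthogonalization), each round again flattening-and-probing so that it contributes only lower-order cost.

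Finally I would let $\bS$ retain $O(k\log k/\varepsilon^2)$ rows of $A$, sampled i.i.d.\ with probability proportional to the refined leverage-score estimates and rescaled; applying $\bS$ to $A$ touches only the sampled rows and costs $O(\nnz(A))$. The standard approximate-leverage-score / matrix-Chernoff analysis then shows $\bS A$ is a $(1+\varepsilon)$ subspace embedding for $A$, and summing the pieces --- $O(\nnz(A))$ for the coarse embedding and the final sampling, $O(k^{\omega}\poly(\log\log k))$ for $QR$, and $O(\poly(\varepsilon^{-1})k^{2.1+o(1)})$ for the leverage-score estimation --- yields the stated running time. The main obstacle is the middle step: getting leverage-score estimates that are simultaneously (i) accurate against the \emph{true} scores, so the sampled dimension has no $\poly(\log\log k)$ or $\log n$ blowup, and (ii) computable in $O(\nnz(A))$ rather than $O(\nnz(A)\log n)$ time; this is exactly what the explicit flattening/extractor embedding buys, and it is the crux inherited from the earlier sections.
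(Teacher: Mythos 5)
Your high-level skeleton (apply the crude $\epll{k}$-distortion embedding, QR-factor the sketch to obtain a preconditioner $R$, read off approximate leverage scores from $\opnorm{A_{i*}R}$, then sample rows of $A$) matches the paper's Algorithm~\ref{alg:leverage-score-sampling}. But the two technical steps you identify as ``the crux'' are handled quite differently in the paper, and the mechanisms you propose for them do not work.

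First, the paper does \emph{not} attempt to refine the $\epll{k}$-accurate scores to $O(1)$-accurate ones. It samples directly against the coarse scores, accepting $\Theta(\varepsilon^{-2}k\,\epll{k})$ sampled rows (Theorem~\ref{thm:leverage-score-sampling} with $\beta = 1/\epll{k}$), and then composes $\bS_{\lev}$ with an ordinary \textsf{OSNAP} to shrink to the advertised $O(\varepsilon^{-2}k\log k)$ rows; the \textsf{OSNAP} is cheap because it acts on the already-small matrix $\bS_{\lev}A$. Your ``short refinement loop'' to remove the $\kappa^2$ error is not needed, is not spelled out, and if you tried to make it rigorous you would essentially be reinventing a second (harder) algorithm; the \textsf{OSNAP} composition is the missing, and much simpler, ingredient.

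Second, and more seriously, your proposed route around the $\nnz(A)\log n$ barrier --- flatten the row vectors $A_{i*}R^{-1}$ via Indyk's extractor map and hit them with a $\poly(\log\log k)$-sparse probe --- does not work. (a) Cost: to flatten all $n$ rows you would form $A\cdot(R\,\calF^{\T})$ with $\calF$ of size $k\,\epll{k}\times k$, costing $\nnz(A)\cdot k\,\epll{k}$, far beyond the budget; there is no way to apply $\calF$ to $n$ separate $k$-dimensional row vectors in $O(\nnz(A))$ time. (b) Accuracy: the flattening guarantee (Corollary~\ref{cor:indyk-multi-level}) only says a constant \emph{fraction} of coordinates are large. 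A uniform subsample or $\poly(\log\log k)$-sparse sign probe of a flattened $k$-dimensional vector therefore still fails to estimate its norm to within $O(1)$ with probability roughly $\exp(-\poly(\log\log k))$, which cannot be union-bounded over $n$ rows; you would still need $\Omega(\log n)$ probes per row. Flattening in this paper is a tool for embedding a $k$-dimensional \emph{column} subspace (one subspace, one net argument), not for $n$ independent row-norm estimates. The paper's actual mechanism (Lemma~\ref{lma:sampling-from-product}) is different in kind: multiply $AR$ by a Gaussian with only $O(1/\gamma)$ columns to get $n^{\gamma}\log n$-factor over-estimates in $O(\gamma^{-1}\nnz(A))$ time, Bernoulli-sample against the over-estimates so only $O(s\,n^{\gamma}\log n)$ rows survive, then compute accurate row norms for the survivors alone using a precomputed $R\bG_3$ with $O(\log n)$ columns, and finally \emph{reject} each survivor with the appropriate probability so the surviving marginals equal those of true leverage-score sampling. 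This two-stage sample-then-reject scheme is what removes the $\log n$ factor from the $\nnz(A)$ term, and it is the piece your proposal is missing.
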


%

Finally, we obtain faster algorithms for computing the rank of a matrix and finding a full-rank set of rows.  

\begin{theorem}[Matrix Rank and Finding a Basis, informal Theorem \ref{thm rank est} and \ref{thm:independent_rows}]
Given an $n \times d$ matrix $A$, there exists a randomized algorithm to compute $k = \rank(A)$ in $O(\nnz(A) + k^\omega)$ time, where $\omega$ is the matrix multiplication constant. Further, the algorithm can find a set of $k$ linearly independent rows in $O(\nnz(A)+ k^\omega \log\log(n))$ time.
\end{theorem}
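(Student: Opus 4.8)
The plan is a two-phase reduction: first use sketches computable in input-sparsity time to shrink $A$ to a tiny matrix of the same rank, then finish with fast dense linear algebra. The workhorse is the fast subspace embedding of Theorem~\ref{thm:sparse-subspace-embedding}, applied on the left of $A$ to produce $\bS A$ with only $k\poly(\log\log k)$ rows; since that embedding is a (distorted, but finite-distortion) subspace embedding for $\mathrm{col}(A)$ it is injective there, so $\mathrm{rank}(\bS A)=\mathrm{rank}(A)$ — and because an oblivious embedding for $k$-dimensional subspaces works verbatim for any matrix whose column space has dimension at most $k$, the same statement applies even though $A$ has $d>k$ columns. Applying an analogous sketch to the columns of $A$ compresses that side too. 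Since $k$ is not known in advance, I would wrap everything in a geometric search: run the pipeline for guesses $r=1,2,4,\dots$ of $k$, using the embedding instantiated for dimension $r$, and halt once the reported rank is certified correct (for $r\ge k$ the embedding preserves rank exactly, so the reported value locks in at $k$). Care is needed so the $\nnz(A)$ cost is incurred once, not once per guess: only the cheapest sparse sketches should touch $A$, arranged so their total input-sparsity work telescopes, or one first extracts a crude constant-factor estimate of $k$ and only then runs the search in a narrow window.

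For the rank, after compression we hold an $O(k\poly\log\log k)\times O(k\poly\log\log k)$ matrix $M$ with $\mathrm{rank}(M)=\mathrm{rank}(A)$, obtained in time $O(\nnz(A)+r^\omega)$. Reading off $\mathrm{rank}(M)$ by a rank-revealing factorization would cost $O((r\poly\log\log r)^\omega)=O(r^\omega\poly\log\log r)$, which overshoots the claimed bound by exactly the stray $\poly\log\log$ factor. To kill it, I would run a second, dimension-\emph{tight} reduction on the now-small matrix $M$: multiply on each side by a dense Gaussian with $r+1$ rows, which preserves rank with probability $1$ because $\mathrm{rank}(M)\le r$, landing at a genuinely $(r+1)\times(r+1)$ matrix, and then call an $O(r^\omega)$-time exact rank routine. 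Showing that this second reduction itself fits inside the $O(r^\omega)$ budget is the crux, and the step I expect to be the main obstacle: generic rectangular multiplication of an $(r+1)\times r\poly\log\log r$ matrix by an $r\poly\log\log r\times r\poly\log\log r$ matrix already costs $r^{\omega+o(1)}$, so one must either exploit extra structure of the sketched matrix or replace the last sparse sketch by one with a genuinely $O(r)$-row rank-preservation guarantee — a property strictly weaker than being a subspace embedding, which is plausibly attainable without resolving the Nelson--Nguyen-type conjectures. This is precisely why the rank bound comes out clean as $O(\nnz(A)+k^\omega)$ while the basis bound must carry an extra $\log\log n$.

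For finding $k$ linearly independent rows, we must return honest rows of $A$, so we sketch only the columns: $B:=A\,\T{R}$ where $R$ is a subspace embedding for $\mathrm{row}(A)$ with $O(k\poly\log\log k)$ rows, so $B\in\R^{n\times O(k\poly\log\log k)}$ and a set of rows of $B$ is linearly independent iff the corresponding rows of $A$ are. We then thin the $n$ candidate rows down to $O(k\poly\log\log k)$ of them while keeping rank $k$, by sampling rows of $B$ with probabilities proportional to (approximate) leverage scores — the scores coming from a subspace embedding of $B$ — and invoking a matrix-concentration bound to certify the sample still has full rank; this sampling/estimation step is exactly what forces a $\log\log n$-type overhead (for the accuracy of the leverage estimates and for the sample size needed to retain rank with high probability over all $n$ rows). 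Finally, on the $O(k\poly\log\log k)\times O(k\poly\log\log k)$ matrix whose rows are the sampled sketched rows, run a blocked Gaussian-elimination / rank-revealing decomposition in $O(k^\omega\poly\log\log k)$ time to pick out a maximal linearly independent subset, and output the matching original rows of $A$; the sketching costs $O(\nnz(A))$, the rest is $O(k^\omega\log\log n)$, giving the claimed running time.
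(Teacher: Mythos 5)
Your proposal takes a genuinely different route from the paper, and both parts have gaps that the paper's actual construction is specifically designed to avoid.

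For rank computation, the paper does not use the fast subspace embedding of Theorem~\ref{thm:sparse-subspace-embedding} at all. It uses the much simpler rank-preserving sketches of \citet{CKL13} (Theorem~\ref{thm:ckl-rank-preserving}): sparse matrices with $O(1)$ nonzeros per column, computable in $O(\nnz(A))$ time, that satisfy the clean guarantee $\rank(\bS A)=\min(\rank(A),z_S/c)$ for a constant $c$. This is exactly the ``genuinely $O(r)$-row rank-preservation guarantee strictly weaker than a subspace embedding'' that you conjecture might be attainable at the end of your second paragraph --- it already exists, and the paper uses it. The other key idea you miss is how the sketch size is chosen: rather than a geometric search over guesses $r$ for $k$, the paper picks a \emph{single} size $z=c(\nnz(A)/\log n)^{1/2}$ determined by $\nnz(A)$, sketches once to a $z\times z$ matrix with $\nnz(SAR)\le z^2 = O(\nnz(A)/\log n)$, and runs the \emph{old} $O(\nnz\cdot\log k+k^\omega)$ CKL-RE algorithm on the sketch; the shrunken $\nnz$ absorbs the extra $\log k$. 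The case analysis (if the returned rank hits the cap $z/c$, then $k\ge(\nnz(A)/\log n)^{1/2}$, hence $\nnz(A)\log k\le k^\omega$ and CKL-RE on the original $A$ is already within budget) replaces your search, and the clean $\min(\cdot,z_S/c)$ guarantee gives an unambiguous stopping criterion --- your ``certify the reported rank is correct'' step is actually nontrivial for a generic finite-distortion subspace embedding, which may silently underreport rank without any signal.

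For the linearly independent rows, your pipeline is missing the paper's central preprocessing device, the \textsc{RowReduction} procedure (Lemma~\ref{lma:row-reduction}). You propose to leverage-score sample the $n$ rows of the column-sketched matrix $B$ directly, but the sampling primitive (Lemma~\ref{lma:sampling-from-product} / Theorem~\ref{thm:final-epsilon-subspace-embedding}) costs $O(\gamma^{-1}\nnz(B)+n^\gamma k^{2+o(1)}+\cdots)$: to keep the $\nnz$ term input-sparsity you need $\gamma=\Theta(1)$, but then the $n^\gamma k^{2+o(1)}$ term is too large; conversely, taking $\gamma=1/\log n$ to kill $n^\gamma$ blows the $\nnz$ term up to $\nnz(A)\log n$. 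The paper resolves this by first applying \textsc{RowReduction} $\Theta(\log\log n)$ times, using rank-preserving sketches again, to shrink \emph{both} $\nnz$ and the number of rows by $\poly(\log n)$ factors (Lemma~\ref{lma:row-reduction} shows each application keeps rank, cuts $\nnz$ by a constant, cuts rows by a constant, and costs $O(\nnz+k^\omega)$). Only after that reduction does it run leverage-score sampling on the small matrix with $\gamma=1/\log n$, at which point both terms fit in $O(\nnz(A)+k^\omega\poly(\log\log n)+k^{2+o(1)})$. Your sketch-then-sample-then-factor outline matches the spirit of steps 3--5 of Algorithm~\ref{alg:lin-independent}, but without the iterated \textsc{RowReduction} preprocessing the stated running time is not achieved.
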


We note that this result improves prior work by \citet{CKL13}, in the case of matrices with real numbers, who obtain an $O(\nnz(A)\log(k) + k^\omega )$ time algorithm to compute matrix rank and an $O( \log(n)(\nnz(A) + k^\omega))$ time algorithm to find a full-rank set of rows. 

The following table lists our running times for $k\le n$ and $k = n^{\Omega(1)}$, assuming $\omega > 2$,
and putting some terms to constant values (such as 2.1 instead of $2+\gamma$). See theorem statements for exact running times.
\begin{center}
\begin{tabular}{l l}\hline
	Application & Running time (up to constant factors)\\ \hline
	$\varepsilon$ Subspace Embeddings & $\nnz(A) + \varepsilon^{-3}k^{2.1+o(1)} + k^{\omega}\poly(\log\log(k))$\\
	$\varepsilon$ approximate linear regression & $\nnz(A) + \varepsilon^{-3}k^{2.1+o(1)} + k^{\omega}\poly(\log\log(k))$\\
	Linearly Independent Rows & $\nnz(A) + k^\omega \poly(\log\log(k)) + k^{2+o(1)}$\\
	$0.01$ Low-Rank Approximation & $\nnz(A) + (n+d)k^{\omega-1}$\\
	\hline
\end{tabular}
\end{center}
\section{Related Work}

\paragraph{Matrix Sketching.}
The \textit{sketch and solve} paradigm \cite{clarkson2015input,woodruff2014sketching} was designed to reduce the dimensionality of a problem, while maintaining enough structure such that a solution to the smaller problem remains an approximate solution to the original one. This approach has been pivotal in speeding up basic linear algebra primitives such as least-squares regression \cite{s06, rokhlin2008fast, clarkson2015input}, $\ell_p$ regression \cite{cohen2015lewis, wang2019tight}, low-rank approximation \cite{NN13, cohen2017input, li2020input}, linear and semi-definite programming \cite{cohen2019solving,jiang2020inv, jiang2020faster}, solving non-convex optimization problems such as $\ell_p$ low-rank approximation \cite{song2017low,song2019relative,ban2019ptas}, and training neural networks \cite{bakshi2019learning,brand2020training}. For a comprehensive overview we refer the reader to the aforementioned papers and citations therein.  Several applications use rank computation, finding a full rank subset of rows/columns, leverage score sampling, and computing subspace embeddings, as key algorithmic primitives. In addition to being used as a black box, we believe our techniques will be useful in sharpening bounds for several such applications.

\section{Preliminaries}
\label{subsec nota}
\paragraph{Computational Model} Throughout the paper, we work with matrices having real numbers and assume that all elementary arithmetic operations on real numbers can be computed in $O(1)$ time.

Let $A^+$ denote the Moore-Penrose pseudo-inverse of matrix $A \in \R^{n \times d}$, equal to $V\Sigma^{-1}U^\top$ when
$A$ has ``thin'' Singular Value Decomposition (SVD) $A=U\Sigma V^\top$, so that $\Sigma$ is a square invertible matrix. 
We note that $AA^+$ is the projection matrix onto the column span of the matrix $A$. 
Let $\opnorm{A}$ denote the spectral norm ($\ell_2 \rightarrow \ell_2$ operator norm)  of $A$ and $\frnorm{A}$ denote the Frobenius norm $(\sum_{i,j}A_{ij}^2)^{1/2}$.
Let $\kappa(A)=\opnorm{A^+}\opnorm{A}$ denote the condition number of $A$. 
We write $a\pm b$ to denote the set $\{c \mid |c-a|\le |b|\}$, and $c=a\pm b$ to denote the condition
that $c$ is in the set $a\pm b$.
Let $[m]=\{1\ldots m\}$ for an integer $m \ge 1$. For $i \in [n]$, $A_{i*}$ denotes the $i$-th row of $A$ and for $j \in [d]$, $A_{*j}$ denotes the $j$-th column of $A$. We use bold symbols such as $\bA$, $\bS$ to emphasize that these objects are explicitly sampled from an appropriate distribution.

As mentioned, $\nnz(A)$ is the number of nonzero entries of $A$, and we assume $\nnz(A)\ge \rows$,
i.e., there are no rows composed entirely of zeros. 
We let $[A]_k$ denote the best rank-$k$ approximation to~$A$ in Frobenius norm and operator norm.
Further, for an $n \times d$ matrix $A$ and $S \subseteq [n]$, we use the notation $A_{S}$ to denote the restriction of the rows of $A$ to the subset indexed by $S$, and for $S \subseteq [d]$ we use the notation $A^{S}$ to denote the restriction of the columns of $A$ to the subset indexed by $S$.

Let $n^\omega$ be the time needed to multiply two $n\times n$ matrices. See \cite{demmel2007fast} and references therein for ways of computing other linear algebra primitives such as QR decomposition, SVD, and a matrix inverse, in  $O(n^{\omega})$ time. Given an $n \times d$ matrix $A$, $n \ge d$, we can orthogonalize its columns in time $O(nd^{\omega-1})$ as follows: first compute the product $\T{A}A$ in time $nd^{\omega-1}$, compute SVD of $\T{A}A$ in time $O(d^{\omega})$ to obtain $V,\Sigma$ such that $\T{A}A = V\Sigma^2\T{V}$, and then compute $AV\Sigma^{-1}$ in time $O(nd^{\omega-1})$ to obtain an orthonormal basis. 

For a matrix $A$, let $U$ be a matrix with orthonormal columns and $\text{colspan}(A) = \text{colspan}(U)$. The leverage score of the $i$-th row of $A$, $\ell_i^2$, is defined as $\opnorm{U_{i*}}^2$.

\begin{lemma}[Known constructions of sketching matrices]\label{lem embed}
For a given matrix $A\in\R^{\rows\times\colms}$ with $\rA=\rank(A)$, these constructions give $\eps$-embeddings
with failure probability $1-c$, for given constant~$c$. Here the sketching matrix $S$ is an $\eps$-embedding 
if with constant probability, $\opnorm{SAx} = (1\pm\eps)\opnorm{Ax}$ simultaneously for all $x\in\R^d$.
\begin{itemize}
\item
There is a sketching matrix $T\in\R^{m_T\times \rows}$ with sketching dimension $m_T = O(\eps^{-2}\rA^{1+\mu}\log \rA)$ such
that $TA$ can be computed in $O(\mu^{-1}\nnz(A)/\eps)$ time (see, e.g., \cite{cohen2016nearly}),
with $1/\mu\eps$ non-zero entries per column, in this form called here an \textsf{OSNAP},
and in earlier forms with $1$ non-zero per column called a \emph{CountSketch} \cite{CW13} matrix, 
or \emph{sparse embedding}. 
The sparsest version $\hat T\in\R^{m_{\hat T} \times \rows}$
has $m_{\hat T} = O(\eps^{-2}\rA^2)$, with
$\hat TA$ computable in $O(\nnz(A))$ time; $\hat T$ has one nonzero entry
per column. A less sparse version $\bar T$ of \textsf{OSNAP} has
$m_T = O(\eps^{-2}\rA \log (nd))$, $O(\log(nd)/\eps)$ entries per column,
and failure probability $1/\poly(\rows\colms)$.

\item
There is a sketching matrix $H\in\R^{m_H\times \rows}$ with $m_H = O(\eps^{-2}\rA\log(\rows\rA))$ such
that $HA$ can be computed in $O(\rows\colms\log\rows)$ time (see e.g. \cite{boutsidis2013improved}).
This is called an \textsf{SRHT} (Sampled Randomized Hadamard Transform) matrix.
The matrix $H= \hat H D$, where the rows of $\hat H$ are a random subset of the rows of a Hadamard matrix,
and $D$ is a diagonal matrix whose diagonal entries are $\pm 1$.
\item
If matrix $L\in \R^{m_L\times \rows}$ is chosen using leverage score sampling (see Theorem~\ref{thm:leverage-score-sampling}),
then there is $m_L = O(\eps^{-2}\rA\log \rA)$ so that $L$ is an $\eps$-embedding \cite{rudelson1999random, recht2011simpler}.
\item If matrix $G \in \R^{m_G \times n}$ with $m_G = O(\varepsilon^{-2}k)$ is an appropriately scaled matrix with i.i.d normal or Sub-Gaussian random variables, then $G$ is an $\varepsilon$ embedding.
\end{itemize}

These embeddings
can be composed, so that for example $S=H_ST_S$ is a ``two-stage'' $\eps$-embedding for $A$, where $T_S$ is an \textsf{OSNAP} matrix, and $H_S$ is an \textsf{SRHT},
so that $H_ST_SA$ can be computed in
$O(\eps^{-1}\mu\nnz(A) + \eps^{-2}n\rA^{1+1/\mu}\log^2(\rA/\eps))$ time, and the sketching dimension is
$m_{H_S}=O(\eps^{-2}\rA\log (\rA/\eps))$.
The space needed is $O(\rows +m_{H_S}\colms)$.
\end{lemma}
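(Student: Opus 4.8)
The plan is to assemble the statement from the cited black-box constructions, checking each sketching dimension and running time, and then to derive the ``two-stage'' claim by showing that the composition of two $\eps$-embeddings is again an $\eps$-embedding after rescaling $\eps$ by an absolute constant. For the first item I would invoke the \textsf{OSNAP} analysis of \cite{cohen2016nearly} (which builds on \cite{CW13,NN13}): a matrix with $s=O(1/(\mu\eps))$ random $\pm1$ entries per column placed in uniformly random rows among $m_T=O(\eps^{-2}\rA^{1+\mu}\log\rA)$ rows is an $\eps$-embedding for any fixed $\rA$-dimensional subspace with constant failure probability; specializing to $s=1$ gives the sparsest version with $m_{\hat T}=O(\eps^{-2}\rA^2)$ and $O(\nnz(A))$ application time, while taking $s=O(\log(nd)/\eps)$ and $m_T=O(\eps^{-2}\rA\log(nd))$ pushes the failure probability down to $1/\poly(nd)$ via a union bound over a net of the unit sphere of the subspace, and the time $O(\mu^{-1}\nnz(A)/\eps)$ is immediate since each nonzero of $A$ is touched $s$ times. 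For the \textsf{SRHT} I would cite \cite{boutsidis2013improved}, which shows that a random row-subset of a Hadamard matrix times a random $\pm1$ diagonal, with $m_H=O(\eps^{-2}\rA\log(\rows\rA))$ rows, is an $\eps$-embedding and that $HA$ is computed columnwise by the fast Walsh--Hadamard transform in $O(\rows\colms\log\rows)$ time. For leverage-score sampling I would invoke Theorem~\ref{thm:leverage-score-sampling} together with the matrix-Chernoff arguments of \cite{rudelson1999random,recht2011simpler}, giving $m_L=O(\eps^{-2}\rA\log\rA)$; and for the dense sub-Gaussian case, a standard net argument over the unit sphere of an $\rA$-dimensional subspace yields $m_G=O(\eps^{-2}\rA)$.

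For the composition claim, the key point is that if $T_S$ is an $\eps$-embedding for the column span of $A$, then $T_S A$ has the same rank $\rA$ as $A$ (the embedding property preserves injectivity on the column span), so it suffices for the second-stage sketch $H_S$ to be an $\eps$-embedding for the at-most-$\rA$-dimensional subspace spanned by the columns of $T_S A$; hence $m_{H_S}=O(\eps^{-2}\rA\log(\rA/\eps))$ rows suffice, \emph{independently} of the intermediate dimension $m_{T_S}$. Then for every $x$, $\opnorm{H_S T_S A x}=(1\pm\eps)\opnorm{T_S A x}=(1\pm\eps)^2\opnorm{A x}\subseteq(1\pm 3\eps)\opnorm{A x}$, so replacing $\eps$ by $\eps/3$ throughout gives a genuine $\eps$-embedding; the overall failure probability is at most the sum of the two, still a small constant (or $1/\poly(\rows\colms)$ if the denser \textsf{OSNAP} variant is used for $T_S$). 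The running time is the cost of forming $T_S A$ plus the cost of applying the \textsf{SRHT} $H_S$ to the $m_{T_S}\times\colms$ matrix $T_S A$, namely $O(m_{T_S}\colms\log m_{T_S})$; substituting the \textsf{OSNAP} parameters for $m_{T_S}$ and collecting logarithmic factors gives the claimed $O(\eps^{-1}\mu\,\nnz(A)+\eps^{-2}n\rA^{1+1/\mu}\log^2(\rA/\eps))$ bound, and the $O(\rows+m_{H_S}\colms)$ space bound follows from a streaming implementation that processes $A$ without ever storing more than a single column together with the running sketch.

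I do not expect a real obstacle here, as the lemma compiles established results; the only genuine content is the composition bookkeeping, whose one conceptual subtlety is observing that the second-stage sketch need only embed an $\rA$-dimensional subspace, so its dimension does not inherit the blow-up of the first stage. The remaining work is purely arithmetic: tracking how the two $\eps$'s, the two failure probabilities, and the various logarithmic powers combine, and matching the precise exponents and $\log$ factors in the composed running time once the \textsf{OSNAP} parameters are plugged into the \textsf{SRHT} cost. None of this presents a genuine difficulty.
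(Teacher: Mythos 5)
The paper itself provides no proof of this lemma; it is presented as a compilation of known results with citations to \cite{cohen2016nearly,CW13,boutsidis2013improved,rudelson1999random,recht2011simpler}. Your proposal is correct and makes explicit exactly what the paper leaves implicit: each bullet is invoked as a black box, and the composition claim follows because (i) an $\eps$-embedding preserves the rank of the column span, so the second-stage \textsf{SRHT} need only embed an $r_A$-dimensional subspace rather than the full $m_{T_S}$-dimensional ambient space, and (ii) the two $(1\pm\eps)$ distortions multiply to give $(1\pm O(\eps))$, absorbed by rescaling $\eps$. You correctly identify this dimension-does-not-inherit-the-blow-up observation as the only genuine content of the composition; the rest is bookkeeping.

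One small caution: the running-time and exponent expressions in the paper's composed bound appear to have internal inconsistencies (the OSNAP per-column sparsity is stated as $1/(\mu\eps)$ and its application time as $O(\mu^{-1}\nnz(A)/\eps)$, yet the composed time is written with $\eps^{-1}\mu\,\nnz(A)$ and the exponent $r_A^{1+1/\mu}$ rather than $r_A^{1+\mu}$, and the additive term is written with $n$ where $d$ would be expected from applying an \textsf{SRHT} to an $m_{T_S}\times d$ matrix). Your proposal says only that you would substitute the OSNAP parameters into the \textsf{SRHT} cost and collect logarithms, which is the right plan; if you carried out that arithmetic you would land on a bound of the form $O(\mu^{-1}\eps^{-1}\nnz(A) + \eps^{-2} d\, r_A^{1+\mu}\log^2(\cdot))$, and would want to flag that the paper's displayed expression seems to have swapped $\mu\leftrightarrow 1/\mu$ and $n\leftrightarrow d$. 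This does not affect the validity of your argument.
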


We also require the following notions of projection cost preserving sketches and affine embeddings.

\begin{Definition}[Projection Cost Preserving Sketch\cite{cohen2015dimensionality}]
\label{def PCP}
Given a matrix $A \in \mathbb{R}^{n \times d}$, $\eps>0$ and an integer $k\in [d]$, a sketch $SA \in\mathbb{R}^{s \times d} $ is a projection-cost preserving sketch of $A$ if for all rank-$k$ projection matrices $P$, 
\begin{equation*}
   (1-\eps)\frnorm{A (I - P)}^2\leq \frnorm{SA(I - P)}^2 \leq (1+\eps) \frnorm{A (I - P)}^2
\end{equation*}
\end{Definition}
We note that sometimes Projection Cost Preserving
Sketches allow an additive scalar in the definition,
see, e.g., \cite{mm20}. We do not need such an additive term here. 

\begin{Definition}[Affine Embeddings\cite{CW13}]
Given matrices $A,B$, let $X^* = \text{argmin}_X\frnorm{AX - B}$ and $\tilde{B} = AX^* - B$. A matrix $S$ is an affine embedding for $(A, B)$ if for all matrices $X$,
\begin{equation*}
    \frnorm{S(AX - B)}^2 - \frnorm{S\tilde{B}}^2 = (1 \pm \varepsilon)\frnorm{AX - B}^2 - \frnorm{\tilde{B}}^2.
\end{equation*}
\end{Definition}
Many subspace embedding distributions for the column space of $A$ satisfy the affine embedding property. Importantly, the number of rows in $S$ depends only on the rank of the matrix $A$ and has no dependence on number of columns in the matrix $B$. See \cite{CW13} for properties required of a distribution to be an affine embedding.

Throughout the paper, we use the following fact numerous times: for any matrices $A, B$,  and $C$,  we have $\frnorm{A - BC}^2 \ge \frnorm{A - AC^+C}^2$. This is just the Pythagorean theorem, which says that the best approximation of $A$ inside the rowspace of $C$ is obtained by projecting each of the rows of $A$ onto the rowspace of matrix $C$.

\section{Technical Overview}
The only known oblivious subspace embedding for a $k$ dimensional subspace with $o(k\log(k))$ rows is a dense matrix of $O(k)$ rows with independent Sub-Gaussian random variables. This embedding can be applied to a matrix $A$ in time $\Omega(\nnz(A) \cdot k)$. All other subspace embedding constructions that are faster to apply have at least $\Omega(k \log(k))$ rows. Obtaining a subspace embedding with few rows is important to speed up the further downstream tasks such as finding a maximal set of linearly independent rows of a matrix, computing approximate leverage scores, low rank approximation, etc. 

We analyze the properties required of a $k$-dimensional subspace $V \subseteq \R^d$, $d = \tilde{O}(k)$, such that a sparse random sign matrix with $o(k\log(k))$ rows can be a subspace embedding for $V$. The advantage of the sparsity is that the embedding can be applied to a vector quickly. Suppose every unit vector in the subspace $V$ has at least a constant $c$ fraction of coordinates that have a magnitude of at least $\tilde{\Omega}(1/\sqrt{k})$. Let $x$ be an arbitrary unit vector in the subspace $V$. Now consider a random matrix $\bG$ where each entry is either $0$ with  probability $1-p$ and $\pm 1$ with probability $p/2$ each. For $p = \Theta(1/d)$, as at least a constant $c$ fraction of the coordinates of the vector $x$ have a magnitude $\tilde{\Omega}(1/\sqrt{k})$, each row of the matrix $\bG$ has $\Omega(1)$ probability of hitting one of the large coordinates of the vector $x$. Conditioned on a row $\bG_{i*}$ hitting one of the large coordinates of $x$, we have $|\bG_{i*}x| \ge \tilde{\Omega}(1/\sqrt{k})$ with probability $\ge 1/2$ by using the random signs. Thus, with at least a constant probability, for a row $\bG_{i*}$, $|{\bG_{i*}x}|^2 \ge \tilde{\Omega}(1/k)$. If the matrix $\bG$ has $\Omega(k)$ rows, using the Chernoff bound, we have that with very high probability, $\opnorm{\bG x}^2 \ge \tilde{\Omega}(1)$, which suffices to union bound over a suitable net of unit vectors in a $k$-dimensional subspace. On the other hand, showing that $\opnorm{\bG}$ is small and that it does not increase the norm of any unit vector  by a lot is much easier. For  the probability $p$ that we consider, each row and column of the matrix $\bG$ only has $O(1)$ nonzero entries with high probability. As all the nonzero entries are at either  $\pm 1$, we can bound the operator norm $\opnorm{\bG}$ by $O(1)$. This implies that for any unit vector $x$, $\opnorm{\bG x}^2 \le O(1)$.

The above argument shows that if a subspace has the property that every unit vector in the subspace has a \emph{large} number of \emph{large} coordinates, then a random sparse sign matrix is a subspace embedding with small distortion for that subspace. We call subspaces having this property \emph{flat}. But of course, the column space of the matrix to which we want to apply the embedding may not have this property. Let $V_1 \subseteq \R^n$ be the column space of the given matrix $A$. If we can find a linear map $\calF$ that maps vectors in the subspace $V_1$ to a \emph{flat} subspace $V_2$ and if $\calF$ preserves the Euclidean norms of the vectors, then we have that $\opnorm{\bG \calF x} \approx \opnorm{\calF x} \approx \opnorm{x}$ for all vectors $x \in V_1$. As we show later, by paying some cost in running time, we can assume that $n = O(k\log(k))$ by first applying a series of suitable \textsf{OSNAP} embeddings. To obtain such a mapping $\calF$, we use the $\ell_2 \rightarrow \ell_1$ embedding $F$ of \cite{indyk2007uncertainty}.
We show that recursively applying the linear map $F$ gives a linear map $\calF : n \rightarrow n^{1+o(1)}$ with the property that for all unit vectors $x$, $\opnorm{\calF x} \approx 1$ and $\|{\calF x}\|_1 \ge \tilde{\Omega}(\sqrt{n})$. This property immediately shows that the vector $\calF x$ must have a large number of large coordinates and therefore that the subspace range($\calF$) is \emph{flat}. We only obtain that a $1/n^{o(1)}$ fraction of the coordinates are large but it is sufficient for our purposes. We also show that the sequence of \textsf{OSNAP}, the mapping of \cite{indyk2007uncertainty} which we call Indyk, and the sparse random sign embeddings can be applied to a matrix $A \in \R^{n \times k}$ in time $O(\gamma^{-1}\nnz(A) + k^{2+\gamma+o(1)})$ for any constant $\gamma > 0$. 

\paragraph{$1+\varepsilon$ Subspace Embeddings.} We use our $\exp(\poly(\log\log k))$ distortion subspace embedding construction to obtain $1+\varepsilon$ \emph{non-oblivious} subspace embeddings using approximate leverage scores obtained by using a preconditioner. Let $A \in \R^{n \times k}$. Earlier algorithms to compute approximate leverage scores can be described as follows : (i) Compute $\bS A$ where $\bS$ is a subspace embedding for the column space of $A$, (ii) Compute an orthonormal matrix $Q$ and matrix $R^{-1}$ such that $\bS A = QR^{-1}$, and (iii) Compute the approximate leverage scores $\tilde{\ell}_i^2 = \opnorm{A_{i}R}^2$. 

Thus, to make computing approximate leverage scores faster, we need a subspace embedding $\bS$ that can be quickly applied to matrix $A$ to make step (i) faster while also having a fewer number of rows to make the computation of the QR-decomposition in step (ii) faster. As discussed, our subspace embedding construction $\bS$ has both of these desired properties. In step (iii), instead of computing $\opnorm{A_{i*}R}^2$ exactly, a Gaussian matrix $\bG$ with $O(\log(n))$ columns is used so that for all the rows $i \in [n]$, $\opnorm{A_{i*}R\bG}^2 \approx \opnorm{A_{i*}R}^2$, which is a standard idea \cite{DMMW12}. However, computing the matrix $AR\bG$ takes $\Omega(\nnz(A)\log(n))$ time. We consider using a Gaussian matrix with only $O(1/\gamma)$ columns for an absolute constant $\gamma > 0$, which is also a standard idea in this area. Consider an arbitrary vector $v$ and let $\bg$ be a vector of i.i.d. normal random variables. Then we have the probability that $|\langle v, \bg\rangle| \le \opnorm{v}/n^{\gamma}$ is at most $1/n^\gamma$. If $\bg_1, \ldots, \bg_t$ are independent Gaussian vectors for $t = O(1/\gamma)$, then at least one of the values $|\langle v, \bg_i\rangle|$ is at least $\opnorm{v}/n^\gamma$ with probability $\ge 1 - 1/n^2$. If $\bG$ is a matrix with $\bg_j$ as its columns, we therefore have that $\opnorm{A_{i*}R\bG}^2 \ge \opnorm{A_{i*}R}^2/n^{2\gamma}$ for all $i$. We also argue that $\opnorm{A_{i*}R\bG}^2 = O(\opnorm{A_{i*}R}^2\log(n))$ for all $i \in [n]$. Now the matrix $AR\bG$ and the approximations $\opnorm{A_{i*}R\bG}^2$ can be computed in time $O(\gamma^{-1}(\nnz(A)+k^2))$.
Therefore we can obtain over-estimates to the leverage scores. Using over-estimates to the leverage score sampling probabilities, we first sample rows and then compute accurate leverage scores only for the rows that are sampled.
Then we employ a rejection step, in which we reject rows randomly based on the probabilities computed using accurate leverage scores, and finally we show that we obtain a sample from the leverage score sampling distribution. As we compute accurate leverage scores only for the rows that are sampled in the first stage, we do not incur the $O(\nnz(A)\log(n))$ factor. We then compose our leverage score embedding with an OSNAP embedding to obtain a $1+\varepsilon$ embedding with $O(k\log(k)/\varepsilon^2)$ rows, which is faster than previous constructions.

\paragraph{Computing Linearly Independent Rows.} We give an algorithm to compute a maximal set of linearly independent rows of a matrix $A \in \R^{n \times d}$ of rank $k$ in time $O(\nnz(A) + k^{\omega}\poly(\log\log(n)))$. Using the rank-preserving sketches of \cite{CKL13}, we can assume without loss of generality that $d = ck$ for a constant $c$. The crucial idea here is that a leverage score sample of the matrix $A$, with high probability, must contain a set of $k$ linearly independent rows. Therefore, directly applying the above leverage score sampling algorithm for constant $\varepsilon$ gives, in time $O(\gamma^{-1}\nnz(A) + n^{\gamma}k^{2+o(1)} + k^{\omega}\poly(\log\log(n)))$, for any constant $\gamma$, a set of $O(k\exp(\poly(\log\log k)))$ rows of the matrix $A$ that must contain a set of $k$ linearly independent rows. To obtain a running time that does not depend on $\gamma$, we show that instead of running leverage score sampling on the matrix $A$, we can apply reductions as in \cite{CKL13} to reduce the problem to computing linearly independent rows of a sub-matrix $A'$ with $\nnz(A') \le \nnz(A)/\poly(\log(n))$ and with $n/\poly(\log(n))$ rows. This reduction can be performed in time $O(\nnz(A) + k^{\omega}\log\log(n))$. After this reduction, we perform leverage score sampling for the matrix $A'$ as described above with constant $\varepsilon$ and $\gamma = O(1/\log(n))$ to obtain a matrix $\bS_{\lev}$ that selects and scales $O(k\exp(\poly(\log\log k)))$ rows randomly according to the leverage score distribution such that for all $x$, $\opnorm{\bS_{\lev}A'x} = (1 \pm 1/2)\opnorm{A'x}$. In particular, the guarantee implies that $\text{rowspace}(\bS_{\lev}A') = \text{rowspace}(A')$. Therefore there are $k$ linearly independent rows among the $O(k\exp(\poly(\log\log k)))$ rows sampled by $\bS_{\lev}$. Now we can again apply the recursive row reduction procedure mentioned above to the matrix $\bS_{\lev}A'$, to finally obtain, in time $O(k^{2+o(1)} + k^{\omega}\poly(\log\log k))$, a set of $O(k)$ rows that, with high probability, contain a set of $k$ linearly independent rows. These rows can now be identified in time $O(k^{\omega})$. Thus,  we obtain that in time $O(\nnz(A) + k^{\omega}\poly(\log\log n) + k^{2+o(1)})$, we can compute a set of $k$ linearly independent rows of a rank $k$ matrix~$A$. As discussed above, the subspace embedding having $k\poly(\log\log k)$ rows turns out to be crucial to obtain a running time that depends on $k^{\omega}\poly(\log\log n)$ instead of the $k^{\omega}\log(n)$ dependence of earlier algorithms.

\paragraph{Low Rank Approximation.} Finally, we give an algorithm to compute a $(1+\varepsilon)$-approximate rank-$k$ approximation to an arbitrary matrix $A$. We note that we do not need to utilize our subspace embedding construction in this algorithm, though we include it as it is also a fundamental problem in linear algebra for which we remove the last logarithmic factor. We compute a low rank approximation in two stages: (i) we first find a rank $k$ orthonormal matrix $V$ whose columns span a $1+\varepsilon$ approximation. (ii) we then find a right factor $\tilde{X}$ such that $V \cdot \tilde{X}$ is a $(1+\varepsilon)$ rank-$k$ approximation. We obtain the left factor $V$ by using projection-cost preserving sketches and subspace embeddings along with the CUR decomposition algorithm from \cite{boutsidis2017optimal}, to first obtain an $O(k)$-dimensional subspace that spans an $O(1)$-approximate rank-$k$ low rank approximation. We then perform the residual sampling algorithm of \cite{deshpande2006matrix} to obtain a set of $O(k/\varepsilon)$ columns of the matrix $A$,  which along with the $O(k)$ dimensional subspace we already found, span a $(1+\varepsilon)$-approximation. We then use affine embeddings to compute a left factor $V$ that spans a $(1+\varepsilon)$-approximation. 

After finding a left factor $V$, the matrix $\T{V}A$ is the optimal right factor but it takes $\Omega(\nnz(A) \cdot k)$ time to compute this matrix. We then run the CUR decomposition algorithm of \citet{boutsidis2017optimal} using the matrix $V$ we found to obtain a right factor $\tilde{X}$ such that $\frnorm{V \cdot \tilde{X} - A} \le (1+\varepsilon)\frnorm{A - [A]_k}$. 

\section{Flattening the vectors}
In this section, we argue that there is a linear mapping $\calF : \R^{n} \rightarrow \R^{n^{1+o(1)}}$ such that for any unit vector $x \in \R^{n}$, the set
\begin{equation*}
	\text{Large}(\calF x) := \{i \in [n^{1+o(1)}]\,|\, |(\calF x)_i| \ge \frac{1}{\sqrt{n} \cdot \epll{n}}\}
\end{equation*}
has size $|\text{Large}(\calF x)| = \Omega(n)$.

We show that an explicit $\ell_2 \rightarrow \ell_1$ linear embedding construction of Indyk \cite{indyk2007uncertainty} can be used to obtain such a mapping $\calF$. First we define $(\varepsilon, l)$ extractors as follows.

\begin{Definition}[$(\varepsilon, l)$ extractors]
	A bipartite graph $G = (A, B, E)$, $A = [a]$ and $B = [b]$, with each left node having degree $d$ is an $(\varepsilon, l)$ extractor if it has the following property. Let $\calP$ be \emph{any} distribution over the set $A$ such that for all $i \in [a]$, $\Pr_{\calP}[i] \le 1/l$. Consider the distribution over $B$ generated by the following process:
	\begin{enumerate}
		\item Sample $i \in A$ from distribution $\calP$
		\item Sample $t \in [d]$ uniformly at random and set $j = \Gamma_{G}(i)_t$. Here $\Gamma_{G}(i)$ is the ordered set of neighbors of $i$ in the graph $G$ and $\Gamma_G(i)_t$ is the $t$-th neighbor in the ordered set.
	\end{enumerate}
	Let $G(\calP)$ be the distribution of the element $j$ sampled by the above process and let $\calI$ be the uniform distribution over the set $B$. The graph $G$ is an $(\varepsilon, l) $ extractor if $\sum_{j \in B}|\Pr_{G(\calP)}[j] - 1/b| \le \varepsilon$. We stress that this property must hold for every distribution $\calP$ with $\Pr_{\calP}[i] \le 1/l$ for all $i$.
\end{Definition}

See \cite{indyk2007uncertainty} and references therein for explicit constructions of extractors. Indyk uses the following extractor: Fix a $\delta = \Omega(1/\sqrt{n})$ and let $L = O(1/\delta^2)$ and $s = \sqrt{n}$. Let $G$ be an $(\varepsilon, l)$ extractor with $A = [Ln]$, $B = [b]$ for $b = n^{1/2-\kappa}$, $\kappa > 0$, $l = (1-\delta)^2s/L$, left degree $d = (\log a)^{O(1)} = (\log Ln)^{O(1)}$ and right degree $\Delta = O(nLd/b)$.

In the following it will be helpful to have an abbreviation.

\begin{Definition}
Let $\epll{n}$ denote the class of functions in $\exp(\poly(\log\log(n)))$ as integer $n\rightarrow\infty$.
\end{Definition}

\begin{theorem}[Theorem~1.1 of \cite{indyk2007uncertainty}]
	For any $\zeta, \kappa > 0$, there is an explicit linear mapping $F : \R^{n} \rightarrow \R^{m}$, $m = O(nLd) = n\log^{O(1)}(n)/\zeta^{O(1)}$ and a partitioning of the coordinate set $[m]$ into sets $B_1, \ldots, B_b$, for $b = n^{1/2-\kappa}$, each of size at most $\Delta = n^{1/2+\kappa} \epll{n}/\zeta^{O(1)}$, such that for any $x \in \R^n$, $\opnorm{x}=1$,
	\begin{equation*}
		(1-O(\zeta))\sqrt{Ldb} \le \sum_{j=1}^b \opnorm{(Fx)_{B_j}} \le \sqrt{Ldb}.
	\end{equation*}
Without loss of generality, we can assume that all the partitions $B_j$ have the same size $\Delta$ by appending $0$-valued coordinates and so we have $m = n \cdot \epll{n}/\zeta^{O(1)}$.
\end{theorem}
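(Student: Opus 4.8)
The plan is to realize $F$ as a composition of a \emph{flattening} stage and the linear map induced by the extractor $G$, and to prove the bound by tracking squared coordinate mass through both stages. The flattening stage is $R=[R_1;\dots;R_L]\colon\R^n\to\R^{Ln}$, a vertical stack of $L=O(1/\delta^2)$ \emph{explicit} orthogonal transforms built from DFT-type maps, so that $\|Rx\|_2^2=L\|x\|_2^2$ and --- this is where the uncertainty principle enters --- for \emph{every} unit $x$ the vector $y:=Rx$ (total mass $\|y\|_2^2=L$) has all but a $\delta$-fraction of its mass on coordinates with $y_i^2\le\tau^2$, where the threshold $\tau=\epll{n}/n^{1/4}$ is calibrated so that such a coordinate contributes at most $1/l$ to the probability vector appearing below. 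The second stage reads $G=([Ln],[b],E)$ as a linear map: index the output set $[m]$ by the $m=|E|=Lnd$ edges, set $(Fx)_e:=y_i$ for each edge $e$ leaving left-vertex $i$, and let $B_j$ be the set of edges entering right-vertex $j$; then $F$ is linear, $m=nLd=n\log^{O(1)}(n)/\zeta^{O(1)}$, and after padding the $B_j$ to a common size, $|B_j|\le\Delta=O(nLd/b)=n^{1/2+\kappa}\epll{n}/\zeta^{O(1)}$ and $m=n\,\epll{n}/\zeta^{O(1)}$, matching the claimed parameters; the internal quantities $\delta$ and the extractor error $\varepsilon$ are chosen as suitable functions of $\zeta$.

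For the bound, write $z_j:=\|(Fx)_{B_j}\|_2$. Since a simple bipartite graph has at most one $i$--$j$ edge, $z_j^2=\sum_{i\sim j}y_i^2$, hence $\sum_j z_j^2=\sum_i d\,y_i^2=d\|y\|_2^2=dL$, and the upper bound $\sum_j z_j\le\sqrt b\,\bigl(\sum_j z_j^2\bigr)^{1/2}=\sqrt{Ldb}$ is just Cauchy--Schwarz, tight exactly when $(z_j)_j$ is flat; so the content is the matching lower bound. Let $\calP_x$ be the distribution on $[Ln]$ with $\calP_x(i)=y_i^2/L$; unwinding the definition of the pushforward $G(\calP_x)$ shows $\Pr_{G(\calP_x)}[j]=\tfrac{1}{dL}\sum_{i\sim j}y_i^2=z_j^2/(dL)$, so it suffices to prove $\sum_j\sqrt{\Pr_{G(\calP_x)}[j]}\ge(1-O(\zeta))\sqrt b$, i.e.\ that $G(\calP_x)$ is close to uniform. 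Decompose $\calP_x=\calP^{\mathrm{flat}}+\calP^{\mathrm{spike}}$ by splitting coordinates at $\tau$: the flattening guarantee gives $\|\calP^{\mathrm{spike}}\|_1\le\delta$, while normalizing $\calP^{\mathrm{flat}}$ to a probability distribution $\bar\calP^{\mathrm{flat}}$ gives $\|\bar\calP^{\mathrm{flat}}\|_\infty\le1/l$ (this is exactly the calibration of $\tau$, using $l=(1-\delta)^2\sqrt n/L$). Hence the extractor property applies to $\bar\calP^{\mathrm{flat}}$, giving $\|G(\bar\calP^{\mathrm{flat}})-\calI\|_1\le\varepsilon$; since $G$ is a Markov kernel and $\|\calP^{\mathrm{flat}}\|_1\ge1-\delta$, we get $\Pr_{G(\calP_x)}[j]\ge(1-\delta)\Pr_{G(\bar\calP^{\mathrm{flat}})}[j]$ for every $j$. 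A short convexity estimate shows that any distribution $\varepsilon$-close to uniform on $[b]$ has $\sum_j\sqrt{q_j}\ge\sqrt b\,(1-O(\varepsilon))$, and combining these gives $\sum_j z_j=\sqrt{dL}\sum_j\sqrt{\Pr_{G(\calP_x)}[j]}\ge\sqrt{dL}\cdot\sqrt{1-\delta}\cdot\sqrt b\,(1-O(\varepsilon))=(1-O(\zeta))\sqrt{Ldb}$.

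The main obstacle is the flattening stage --- exhibiting \emph{explicit} orthogonal $R_1,\dots,R_L$ for which \emph{every} unit vector loses only a $\delta$-fraction of its $\ell_2^2$-mass to coordinates above $\tau=\epll{n}/n^{1/4}$. A single DFT flattens a width-$w$ ``bump'' (which has $\|x\|_\infty\sim w^{-1/2}$) only down to $\ell_\infty\sim n^{-1/4}$ --- the worst case $w=\sqrt n$ being essentially a fixed point of Fourier transforms and their dilated and modulated variants alike --- and general superpositions are harder still, so one must combine $\poly(\log n)$ carefully chosen DFT-type maps and analyze them through a Donoho--Stark-type uncertainty inequality, which is the source of the $\exp(\poly(\log\log n))$ loss and of the relation $L=O(1/\delta^2)$; crucially, the quantifier over all $x$ must come from the uncertainty principle, not from a net-and-union-bound argument. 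By contrast the extractor side only needs an \emph{explicit} extractor taking min-entropy $\approx\tfrac12\log n-O(\log(1/\zeta))$ to $(\tfrac12-\kappa)\log n$ output bits with seed length $O(\log\log n)$, which is available off the shelf (see the references in \cite{indyk2007uncertainty}); and the remaining ingredients --- the norm accounting, the Cauchy--Schwarz upper bound, and the estimate $\sum_j\sqrt{q_j}\ge\sqrt b\,(1-O(\varepsilon))$ for near-uniform distributions --- are elementary.
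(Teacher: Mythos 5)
The paper does not actually prove this statement: it is imported verbatim as Theorem~1.1 of \cite{indyk2007uncertainty}, and the only glimpse the paper gives of Indyk's argument is the quoted inequality $\sum_{j=1}^b \bigl|\tfrac{1}{\rho^2 d}\|z_{\Gamma_G(j)}\|^2 - 1/b\bigr| \le \epsilon$ with $\rho \ge \sqrt{L}(1-\delta)$ inside the proof of Lemma~\ref{lma:indyk-embedding-lemma}. Your reconstruction is consistent with that sketch and, as far as it goes, is correct: identifying $(Fx)_e=y_{i(e)}$ with $y=Dx$, observing $z_j^2=\|(Fx)_{B_j}\|^2 = dL\cdot\Pr_{G(\calP_x)}[j]$, getting the upper bound $\sum_j z_j\le\sqrt{Ldb}$ from Cauchy--Schwarz, and deriving the lower bound from the fact that any $q$ with $\sum_j|q_j-1/b|\le\varepsilon$ satisfies $\sum_j(\sqrt{q_j}-1/\sqrt b)^2\le\varepsilon$ and hence $\sum_j\sqrt{q_j}\ge\sqrt b\,(1-\varepsilon/2)$; combining with $\|\calP^{\mathrm{flat}}\|_1\ge 1-O(\delta)$ gives $(1-O(\zeta))\sqrt{Ldb}$ with $\delta=\zeta$, $\varepsilon=\zeta^2$. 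Your threshold-at-$\tau$ decomposition is equivalent to Indyk's top-$s$ truncation (your $\|\calP^{\mathrm{spike}}\|_1\le\delta$ should really be $\le 2\delta$, but that is only a constant).

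Two cautions. First, the flattening lemma --- that there exist $L=O(1/\delta^2)$ \emph{explicit} orthonormal matrices so that for \emph{every} unit $x$ at most an $O(\delta)$-fraction of $\|Dx\|_2^2$ lies above the threshold --- is the entire technical content of Indyk's theorem; you state it and correctly flag it as the obstacle, but do not prove it, so this proposal is a reduction to Indyk's uncertainty-principle construction rather than a self-contained proof. Second, your attribution of the $\epll{n}$ loss and of the ``$\poly(\log n)$ DFT-type maps'' to the flattening stage is off: $L=O(1/\zeta^2)$ (so $O(1)$ orthogonal pieces for constant $\zeta$), and the $\epll{n}$ factor appearing in $\Delta$ and $m$ comes from the extractor's left degree $d=(\log a)^{O(1)}=\poly(\log n)$, not from the uncertainty-principle analysis. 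This misattribution does not affect the validity of the bound-chasing, but would mislead a reader trying to tighten parameters.
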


We now prove the following lemma which essentially shows that an application of Indyk's embedding  to a unit vector shrinks the Euclidean norm by a lot, while keeping the $\ell_1$ norm $\Omega(1)$.
\begin{lemma}
	Let $n$ be an arbitrary integer and $0 < \zeta, \kappa < c$ for a small enough constant $c$. There is an explicit linear mapping $F: \R^n \rightarrow \R^m$ for $m = n\cdot\epll{n}/\zeta^{O(1)}$ and a partitioning of $[m]$ into equal sized sets $B_1,\ldots, B_{b}$ where
$
		b = n^{1/2-\kappa}
$
	and each set $B_j$ satisfies
$
		|B_j| = \Delta = n^{1/2 + \kappa}\epll{n}/\zeta^{O(1)},
$
	such that for any $x \in \R^n$, we have
	\begin{equation*}
		(1-O(\zeta))\opnorm{x} \le \sum_{j=1}^{b} \opnorm{(Fx)_{B_j}} \le \opnorm{x}
	\end{equation*}
	and
	\begin{equation*}
		\frac{1}{b}(1-O(\zeta))\opnorm{x}^2 \le \opnorm{Fx}^2 = \sum_{j=1}^{b}\opnorm{(Fx)_{B_j}}^2 \le \frac{1}{b}\opnorm{x}^2.
	\end{equation*}
	\label{lma:indyk-embedding-lemma}
\end{lemma}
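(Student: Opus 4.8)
The plan is to obtain $F$ from Indyk's map of the cited Theorem~1.1 by a single global rescaling, and then to read off the two displays from the theorem together with two applications of Cauchy--Schwarz. Concretely, let $F_0 : \R^n \to \R^m$ be the explicit map of Theorem~1.1 with the coordinate set already padded with zero coordinates so that each block $B_j$ has size exactly $\Delta = n^{1/2+\kappa}\epll{n}/\zeta^{O(1)}$ and $b = n^{1/2-\kappa}$, and set $F := \tfrac{1}{\sqrt{Ldb}}\,F_0$. Since $F$ is linear and every inequality to be proved is homogeneous in $x$ (degree one in the first display, degree two in the second), it suffices to establish them for unit vectors $x$; the case $x = 0$ is trivial. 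I would also shrink the constant $c$ in the hypothesis at the outset so that all the hidden $O(\zeta)$ terms are $< 1$.

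For the first display, with $\opnorm{x}=1$ Theorem~1.1 gives $(1-O(\zeta))\sqrt{Ldb} \le \sum_{j=1}^b \opnorm{(F_0x)_{B_j}} \le \sqrt{Ldb}$, and dividing through by $\sqrt{Ldb}$ yields exactly $(1-O(\zeta)) \le \sum_{j=1}^b \opnorm{(Fx)_{B_j}} \le 1 = \opnorm{x}$. For the second display, first note the identity $\opnorm{Fx}^2 = \sum_{j=1}^b \opnorm{(Fx)_{B_j}}^2$, which holds because $B_1,\dots,B_b$ partition $[m]$. The lower bound then follows by pairing the vector $(\opnorm{(Fx)_{B_j}})_{j}$ with the all-ones vector in Cauchy--Schwarz: $\sum_j \opnorm{(Fx)_{B_j}} \le \sqrt{b}\,\bigl(\sum_j \opnorm{(Fx)_{B_j}}^2\bigr)^{1/2} = \sqrt{b}\,\opnorm{Fx}$, so $\opnorm{Fx}^2 \ge \tfrac1b\bigl(\sum_j \opnorm{(Fx)_{B_j}}\bigr)^2 \ge \tfrac1b(1-O(\zeta))^2\opnorm{x}^2 \ge \tfrac1b(1-O(\zeta))\opnorm{x}^2$, using the first display already proved and $(1-t)^2 \ge 1-2t$.

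The upper bound $\opnorm{Fx}^2 \le \tfrac1b\opnorm{x}^2$ is the one step that does not follow from the black-box statement of Theorem~1.1: that statement bounds $\sum_j \opnorm{(F_0x)_{B_j}}$ from above but gives no upper bound on $\sum_j \opnorm{(F_0x)_{B_j}}^2$, and the gap between these quantities is precisely the factor $b$ we need. To close it I would open up Indyk's construction just enough to see that $F_0$ acts on $\ell_2$ as a fixed scalar times a map with (near-)orthonormal columns: $F_0$ is the composition of a flattening map that multiplies $\ell_2^2$ norms by $L$ (up to a $1+O(\zeta)$ factor) with the extractor fan-out, which copies each of the resulting $Ln$ coordinates, with $\pm$ signs, to its $d$ right-neighbours and hence multiplies $\ell_2^2$ norm by exactly $d$. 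Thus $\opnorm{F_0x}^2 \le Ld\,\opnorm{x}^2$ — absorbing the $1+O(\zeta)$ into $F_0$ by scaling it down by a $1/\sqrt{1+O(\zeta)}$ factor, which preserves the conclusion of Theorem~1.1 and only weakens the already-proved lower bounds by another $1-O(\zeta)$ — and therefore $\opnorm{Fx}^2 = \opnorm{F_0x}^2/(Ldb) \le \opnorm{x}^2/b$.

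I expect this last point to be the only real obstacle; everything else (the partition identity, the two Cauchy--Schwarz estimates, the reduction to unit vectors by homogeneity, and the block-size padding) is routine. If one wanted to avoid inspecting the construction, the alternative is to have Theorem~1.1 re-exported with the extra clause $\opnorm{F_0x}_2^2 = (1\pm O(\zeta))Ld\,\opnorm{x}_2^2$ added to its statement, since the flattening and the extractor fan-out are each manifestly (scaled) $\ell_2$-isometries; I would state it that way in the final write-up to keep the argument self-contained.
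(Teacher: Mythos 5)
Your proof is correct, and the first and third pieces (reading off the $\ell_1$ display from Theorem~1.1, and the identity $\opnorm{Fx}^2=\sum_j\opnorm{(Fx)_{B_j}}^2$) match the paper. The interesting divergence is the lower bound in the second display: you derive $\sum_j \opnorm{(Fx)_{B_j}}^2 \ge \tfrac1b(1-O(\zeta))^2$ from the already-established $\ell_1$ bound via Cauchy--Schwarz, whereas the paper goes back into Indyk's extractor argument (using the displayed inequality $\sum_j |\tfrac{1}{\rho^2 d}\opnorm{z_{\Gamma_G(j)}}^2 - 1/b|\le\epsilon$ with $\rho\ge\sqrt{L}(1-\delta)$, $\delta=\zeta$, $\epsilon=\zeta^2$) to bound $\sum_j\opnorm{(Dx)_{\Gamma_G(j)}}^2$ from below directly. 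Your route is more modular and needs less of the construction; it is a perfectly valid replacement.

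On the $\ell_2$ upper bound, you correctly diagnose that it does not follow black-box from Theorem~1.1 and that one must look inside the construction. The paper handles this exactly where you expect it to: it observes that $D$ is a concatenation of $L$ \emph{orthonormal} matrices, so $\opnorm{Dx}^2 = L$ for unit $x$ (with no $1+O(\zeta)$ slack), and that the left degree of the extractor graph is \emph{exactly} $d$, so each coordinate of $Dx$ is copied into exactly $d$ of the sets $\Gamma_G(j)$ and hence $\opnorm{F_0x}^2 = \sum_j\opnorm{(Dx)_{\Gamma_G(j)}}^2 = d\opnorm{Dx}^2 = Ld$. Two small imprecisions in your description are worth fixing: the flattening $D$ is an exact $\sqrt{L}$-scaled isometry, so the $(1+O(\zeta))$ correction you absorb into $F_0$ is unnecessary; and there are no $\pm$ signs in the fan-out $(F_0x)_{B_j}=(Dx)_{\Gamma_G(j)}$ --- it is pure coordinate duplication. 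Neither affects validity (your version is merely slightly pessimistic), but the cleaner statement is that $\opnorm{F_0x}^2 = Ld\opnorm{x}^2$ holds with equality, which is what the paper uses and which also means the $\ell_2$ lower bound is actually an equality and both the paper's extractor argument and your Cauchy--Schwarz step over-deliver by a factor $(1-O(\zeta))$ that was never really needed.
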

\begin{proof}
In the proof of the above theorem, Indyk uses the  $(\varepsilon, l)$ construction specified above with $\delta = \zeta$ and $\varepsilon = \zeta^2$. Indyk also defines $(Fx)_{B_j} := (Dx)_{\Gamma_G(j)}$ for $j \in [b]$, where $D$ is a concatenation of certain $L$ orthonormal matrices and $\Gamma_G(j) \subseteq A$ is the set of neighbors of $j \in B$ in the graph $G$. For any unit vector $x$, we have $\opnorm{Dx}^2 = L$ and as the left degree of $G$ is exactly equal to $d$, we have $\opnorm{Fx}^2 = \sum_{j}\opnorm{(Fx)_{B_j}}^2 = \sum_j \opnorm{(Dx)_{\Gamma_G(j)}}^2 = d\opnorm{Dx}^2 = Ld$.

Let $y = Dx$ and let $S$ be the set of the $s$ largest magnitude entries of $y$. Define $z = y_{[a]-S}$ where $z$ is obtained by zeroing out the coordinates of the set $S$. \citet{indyk2007uncertainty} showed that
\begin{equation*}
	\sum_{j=1}^b \left|\frac{1}{\rho^2d}\opnorm{z_{\Gamma_{G}(j)}}^2 - 1/b\right| \le \epsilon
\end{equation*}
where $\rho \ge \sqrt{L}(1-\delta)$. The inequality implies that
$
	\sum_j \opnorm{z_{\Gamma_{G}(j)}}^2 \ge \rho^2d(1-\epsilon) \ge Ld (1-\delta)^2(1-\epsilon).
$
As $\opnorm{y_{\Gamma_G(j)}} \ge \opnorm{z_{\Gamma_G(j)}}$, we get
$
	\sum_j \opnorm{y_{\Gamma_G(j)}}^2 \ge Ld(1-\delta)^2(1-\epsilon)
$
and plugging in $\delta = \zeta$ and $\epsilon = \zeta^2$,  we obtain
$
	Ld(1-O(\zeta)) \le \sum_{j=1}^b \opnorm{(Fx)_{B_j}}^2 \le Ld.
$
Hence, the matrix $F/\sqrt{Ldb}$ satisfies that for any vector $x$,
\begin{equation*}
	\frac{1}{b}(1-O(\zeta))\opnorm{x}^2 \le \sum_{j=1}^b \opnorm{(\frac{F}{\sqrt{Ldb}}x)_{B_j}}^2 \le \frac{1}{b}\opnorm{x}^2.
\end{equation*}
From the above theorem, we already have
\begin{equation*}
	(1-O(\zeta))\opnorm{x} \le \sum_{j=1}^b \opnorm{(\frac{F}{\sqrt{Ldb}}x)_{B_j}} \le \opnorm{x}.
\end{equation*}
Therefore, scaling the matrix $F$ gives the proof.
\end{proof}

We apply the above lemma recursively to each of the partitions $B_j$ for $\Theta(\log\log(n))$ levels to obtain the following theorem. 

\begin{theorem}
	Given any $n$, there is an explicit map $\calF : \R^n \rightarrow \R^m$ with $m = n\cdot \epll{n}$ such that for all unit vectors $x \in \R^n$, we have
	\begin{equation*}
		\|\calF x\|_1 \ge \frac{\sqrt{n}}{4}	
	\end{equation*}
	and
	\begin{equation*}
		\frac{1}{2} \le \opnorm{\calF x}^2 \le 1.
	\end{equation*}
	Further, given any vector $x$, the vector $\calF x$ can be computed in $n^{1+o(1)}$ time.
	\label{thm:indyk-multi-level}
\end{theorem}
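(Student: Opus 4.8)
The plan is to build $\calF$ by applying the map $F$ of Lemma~\ref{lma:indyk-embedding-lemma} recursively, block by block, for $\ell = \Theta(\log\log n)$ levels, and rescaling at the end. Fix a small constant $\kappa$ and a parameter $\zeta = \Theta(1/\log\log n)$. Set $D_0 = n$, and for $i \ge 1$ let $D_i = D_{i-1}^{1/2+\kappa}\,\epll{D_{i-1}}/\zeta^{O(1)}$ and $b_i = D_{i-1}^{1/2-\kappa}$ be the common block size and the number of blocks that Lemma~\ref{lma:indyk-embedding-lemma}, instantiated with ambient dimension $D_{i-1}$, produces; write $N_i = \prod_{j=1}^i b_j$, with $N_0 = 1$. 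The level-$i$ map takes a vector in $\R^{N_{i-1}D_{i-1}}$, regards it as $N_{i-1}$ coordinate blocks of size $D_{i-1}$, and applies the Lemma-$F$ for dimension $D_{i-1}$ to each block independently; since $F$ pads all of its output blocks to the common size $D_i$, the level-$i$ output consists of $N_i$ blocks of size $D_i$. Let $\hat\calF$ be the composition of these $\ell$ block-diagonal maps, and set $\calF = \sqrt{N_\ell}\cdot\hat\calF$. I would stop the recursion at the depth $\ell$ where $D_i$ reaches its floor $\Theta(\zeta^{-O(1)}) = \poly(\log\log n)$; one checks $\ell = \Theta(\log\log n)$ and that throughout the recursion $D_{i-1}$ stays large enough (at least $\poly(\log\log n)$, so that $\zeta = \Omega(1/\sqrt{D_{i-1}})$) for Lemma~\ref{lma:indyk-embedding-lemma} to apply.

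For the two norm bounds I would argue level by level, using that $\hat\calF$ acts block-diagonally. For the Euclidean norm, Lemma~\ref{lma:indyk-embedding-lemma} gives $\frac{1-O(\zeta)}{b_i}\opnorm{w}^2 \le \opnorm{Fw}^2 \le \frac{1}{b_i}\opnorm{w}^2$ for every block $w \in \R^{D_{i-1}}$; summing over the $N_{i-1}$ level-$(i-1)$ blocks of $\hat\calF x$ (whose squared norms sum to $\opnorm{\hat\calF_{i-1}x}^2$) gives $\opnorm{\hat\calF_i x}^2 \in \bigl[\frac{1-O(\zeta)}{b_i}\opnorm{\hat\calF_{i-1}x}^2,\ \frac{1}{b_i}\opnorm{\hat\calF_{i-1}x}^2\bigr]$, hence $\opnorm{\hat\calF_\ell x}^2 \in [(1-O(\zeta))^\ell/N_\ell,\ 1/N_\ell]$, so $\opnorm{\calF x}^2 = N_\ell\opnorm{\hat\calF_\ell x}^2 \in [(1-O(\zeta))^\ell, 1]$, which lies in $[1/2,1]$ once $\zeta\ell$ is a small enough constant. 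For the $\ell_1$ norm I would track $\sigma_i := \sum_{w}\opnorm{w}$, the sum of the Euclidean norms of the $N_i$ blocks of $\hat\calF_i x$: Lemma~\ref{lma:indyk-embedding-lemma} gives $\sum_{j}\opnorm{(Fw)_{B_j}} \ge (1-O(\zeta))\opnorm{w}$ for each block $w$, and summing over $w$ yields $\sigma_i \ge (1-O(\zeta))\sigma_{i-1}$, so $\sigma_\ell \ge (1-O(\zeta))^\ell\opnorm{x} = (1-O(\zeta))^\ell$. Finally, since $\opnorm{u}\le\|u\|_1$ for every vector $u$, applying this to each of the $N_\ell$ final blocks gives $\|\hat\calF_\ell x\|_1 \ge \sigma_\ell$, whence $\|\calF x\|_1 = \sqrt{N_\ell}\,\|\hat\calF_\ell x\|_1 \ge (1-O(\zeta))^\ell\sqrt{N_\ell}$.

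It remains to show $N_\ell = \Omega(n)$ and $m = N_\ell D_\ell = n\cdot\epll n$. The key point is that $F$ never decreases the (block) dimension: $D_i = D_{i-1}^{1/2+\kappa}\epll{D_{i-1}}/\zeta^{O(1)} \ge D_{i-1}^{1/2+\kappa}$, so $\log D_i \ge (1/2+\kappa)\log D_{i-1}$ and therefore $\log D_i \ge (1/2+\kappa)^i\log n$ for every $i$. Hence
\[
  \log N_\ell \;=\; (1/2-\kappa)\sum_{i=0}^{\ell-1}\log D_i \;\ge\; (1/2-\kappa)\log n\sum_{i=0}^{\ell-1}(1/2+\kappa)^i \;=\; \bigl(1-(1/2+\kappa)^\ell\bigr)\log n .
\]
Choosing $\ell$ to be the least integer with $(1/2+\kappa)^\ell \le 1/\log_2 n$ — which is $\Theta(\log\log n)$ since $\kappa$ is a small constant — makes $n^{(1/2+\kappa)^\ell}\le 2$, so $N_\ell \ge n/2$, and thus $\|\calF x\|_1 \ge (1-O(\zeta))^\ell\sqrt{n/2} \ge \sqrt n/4$ once $\zeta$ (a suitable $\Theta(1/\log\log n)$) is chosen so that $(1-O(\zeta))^\ell\ge 1/\sqrt2$. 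For the output dimension, $m = m_\ell = n\prod_{i=1}^\ell\bigl(b_iD_i/D_{i-1}\bigr) = n\prod_{i=1}^\ell \epll{D_{i-1}}/\zeta^{O(1)}$; each factor is at most $\epll{n}\cdot(\log\log n)^{O(1)} \le \epll n$, and a product of $\ell = O(\log\log n)$ such factors is $\exp\!\bigl(O(\log\log n)\cdot\poly(\log\log n)\bigr) = \epll n$, giving $m = n\cdot\epll n$.

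For the running time, level $i$ applies $F$, which by \cite{indyk2007uncertainty} runs in $D_{i-1}^{1+o(1)}$ time, to each of the $N_{i-1}$ blocks, costing $N_{i-1}D_{i-1}^{1+o(1)} = m_{i-1}\cdot D_{i-1}^{o(1)} \le n^{1+o(1)}$ (using $m_{i-1} \le n\cdot\epll n = n^{1+o(1)}$ and $D_{i-1}\le n$); summing over the $\ell = O(\log\log n)$ levels gives $n^{1+o(1)}$, and the final scaling by $\sqrt{N_\ell}$ is free. The main obstacle I expect is the bookkeeping in the third paragraph: one must verify that the per-level dimension blow-up $\epll{D_{i-1}}/\zeta^{O(1)}$ does not eat into the flattening gain. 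This works precisely because Indyk's map is non-shrinking, which forces $\log D_i \ge (1/2+\kappa)^i\log n$ and hence $N_\ell = \prod b_i \ge n^{1-(1/2+\kappa)^\ell} = \Omega(n)$, and because $\zeta$ is taken to be $\Theta(1/\log\log n)$ rather than a constant, so that $(1-O(\zeta))^\ell$ stays bounded away from $0$ while the accumulated $\zeta^{-O(1)}$ overheads still multiply out to only $\epll n$; one also has to confirm that the recursion remains valid as the block size descends to its $\poly(\log\log n)$ floor and to stop at exactly the right depth $\ell$.
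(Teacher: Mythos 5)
Your proposal follows essentially the same route as the paper's proof: recursively apply Lemma~\ref{lma:indyk-embedding-lemma} block-diagonally for $\Theta(\log\log n)$ levels with $\zeta = \Theta(1/\log\log n)$, track the Euclidean norm and the sum of block Euclidean norms level by level, lower-bound the number of final-level blocks by $n/2$ via $\log D_i \ge (1/2+\kappa)^i\log n$ and the geometric series, use $\opnorm{u}\le\|u\|_1$ per block, and rescale by $\sqrt{N_\ell}$ at the end. The only difference is that you spell out the per-level $\epll{D_{i-1}}/\zeta^{O(1)}$ dimension-overhead bookkeeping and the stopping condition more explicitly than the paper does, which the paper leaves implicit; the argument is correct.
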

\begin{proof}
Let $N = \Theta(\log\log(n))$ and $\zeta = \Theta(1/\log\log(n))$. Let $B_1,\ldots, B_{b_1}$ be the partitions of the coordinates of the range of $F$ from the Lemma~\ref{lma:indyk-embedding-lemma}. We recursively apply the lemma for each of the partitions for $N$ levels to obtain $\calF : \R^{n} \rightarrow \R^{m}$ for $m = n\cdot \epll{n}$. Define $n_0 = n$ and let $n_i$ be the number of entries in each of the $i$-th level partitions. Also, let $b_0 = 1$ and $b_i$ be the number of partitions an $(i-1)$-th level partition is mapped into. From Lemma~\ref{lma:indyk-embedding-lemma}, we have
\begin{equation*}
	b_i = n_{i-1}^{1/2-\kappa}
\end{equation*}
and
\begin{equation*}
	n_i = n_{i-1}^{1/2+\kappa}\epll{n_{i-1}}/\zeta^{O(1)}.
\end{equation*}
The following lemma lower bounds the number of partitions in the $N$-th level.
\begin{lemma}
	The total number of partitions in the $N$-th level is given by $B = b_0 \cdot b_1 \cdots b_N$ and 
	\begin{equation*}
		B \ge n/2.
	\end{equation*}
\end{lemma}
\begin{proof}
	We have $B = b_1 \cdots b_N = (n_0 \cdots n_{N-1})^{1/2 - \kappa}$. As $n_i \ge n^{(1/2 + \kappa)^i}$, we have that $n_0 \cdots n_{N-1} \ge n^{\sum_{i=0}^{N-1}(1/2+\kappa)^{i}}$. Now,
$
		\sum_{i=0}^{N-1}(1/2+\kappa)^i = {(1 - (1/2+\kappa)^N)}/{(1/2 - \kappa)}
$
	which implies $B \ge n^{1 - (1/2+\kappa)^{N}}$. For $N = \Theta(\log\log(n))$, $(1/2 + \kappa)^N \le 1/\poly(\log(n))$ and $B \ge n/2$.
\end{proof}
This lemma implies that the $N$-th level has the partitions $\calB_1, \ldots, \calB_B$ of $[m]$  with $B \ge n/2$ and $|\calB_j| = \epll{n}$ such that for any unit vector $x$, 
\begin{equation*}
	\frac{1}{2}\opnorm{x} \le (1-O(\zeta))^{N}\opnorm{x} \le \sum_{j=1}^B\opnorm{(\calF x)_{\calB_j}} \le \opnorm{x}
\end{equation*}
and
\begin{equation*}
	\frac{1}{2B}\opnorm{x}^2 \le \frac{(1-O(\zeta))^N}{B}\opnorm{x}^2 \le \sum_{j=1}^B\opnorm{(\calF x)_{\calB_j}}^2 \le \frac{1}{B}\opnorm{x}^2.
\end{equation*}
Finally, for a unit vector $x$,
\begin{equation*}
\frac{1}{2} =	\frac{1}{2}\opnorm{x} \le \sum_{j=1}^B \opnorm{(\calF x)_{\calB_j}} \le \sum_{j=1}^B \|{(\calF x)_{\calB_j}}\|_1 = \|\calF x\|_1
\end{equation*}
and
\begin{equation*}
	\frac{1}{2B} = \frac{1}{2B}\opnorm{x}^2 \le \opnorm{\calF x}^2 = \sum_{j=1}^B \opnorm{(\calF x)_{\calB_j}}^2 \le \frac{1}{B}\opnorm{x}^2 = \frac{1}{B}.
\end{equation*}
By scaling the map $\calF$ by $\sqrt{B}$, we complete the proof.
\end{proof}

We now have the following corollary.
\begin{corollary}
	Given any unit vector $x$, at least $\Theta(n)$ coordinates of the vector $\calF x \in \R^m$ have an absolute value of at least $\eta = {1}/{(\sqrt{n}\cdot \epll{n})}$.
	\label{cor:indyk-multi-level}
\end{corollary}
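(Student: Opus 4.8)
The plan is to read off the corollary from the two estimates in Theorem~\ref{thm:indyk-multi-level}: for every unit vector $x$ we have $\|\calF x\|_1 \ge \sqrt{n}/4$ and $\opnorm{\calF x}^2 \le 1$, while $\calF x$ lives in $\R^m$ with $m = n\cdot\epll{n}$. The standard principle at work is that a vector with large $\ell_1$ norm but small $\ell_2$ norm cannot have its mass concentrated on few coordinates nor spread too thinly, so it must contain many ``heavy'' coordinates; the quantitative version is a truncation bound together with Cauchy--Schwarz.

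Concretely, I would fix a unit vector $x$, write $T := \text{Large}(\calF x) = \{i \in [m] : |(\calF x)_i| \ge \eta\}$ with $\eta = 1/(\sqrt{n}\cdot\epll{n})$, and split $\|\calF x\|_1 = \sum_{i\in T}|(\calF x)_i| + \sum_{i\notin T}|(\calF x)_i|$. The ``light'' part is bounded by $m\eta$. Since $m\eta = \sqrt{n}\cdot\epll{n}/\epll{n}$, I would observe that the $\poly(\log\log n)$ factor hidden in the denominator of $\eta$ may be taken a constant factor (say $8$) larger than the one appearing in $m$ --- this keeps it in the class $\epll{n}$ --- so that $m\eta \le \sqrt{n}/8$, and hence $\sum_{i\in T}|(\calF x)_i| \ge \sqrt{n}/4 - \sqrt{n}/8 = \sqrt{n}/8$. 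Then by Cauchy--Schwarz, $\sum_{i\in T}|(\calF x)_i| \le \sqrt{|T|}\cdot\bigl(\sum_{i\in T}(\calF x)_i^2\bigr)^{1/2} \le \sqrt{|T|}\cdot\opnorm{\calF x} \le \sqrt{|T|}$. Combining the two bounds yields $\sqrt{|T|} \ge \sqrt{n}/8$, i.e., $|T| \ge n/64 = \Omega(n)$, which is the asserted lower bound of $\Theta(n)$ heavy coordinates.

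There is no genuine obstacle here --- the argument is elementary, just a combination of a truncation estimate and Cauchy--Schwarz. The only point demanding a bit of care is the bookkeeping of the $\epll{n}$ factors: one must check that the (unspecified) $\exp(\poly(\log\log n))$ function in the denominator of $\eta$ can be chosen to dominate, by a constant factor, the corresponding function hiding in $m = n\cdot\epll{n}$, which is immediate since the class $\epll{n}$ is closed under multiplication by constants (indeed under multiplication). Equivalently, one may simply fix $\eta$ once and for all with this slack built in, and then the corollary follows verbatim.
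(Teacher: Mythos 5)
Your proof is correct and is essentially identical to the paper's: the same split of $\|\calF x\|_1$ over $T$ and its complement, the same $m\eta$ bound on the light part, the same Cauchy--Schwarz step with $\opnorm{\calF x}\le 1$, and the same final bound $|T|\ge n/64$. You simply make explicit the $\epll{n}$ bookkeeping that the paper leaves to the phrase ``for appropriate $\eta$ chosen based on $m$.''
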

\begin{proof}
	Let $m'$ be the number of coordinates of $\calF x$ with an absolute value of at least $\eta$. Let $T \subseteq [m]$  be the set of indices of those coordinates. Then
\begin{align*}
	\frac{1}{4}\sqrt{n} \le \|\calF x\|_1 &= \sum_{i \notin T}|(\calF x)_i| + \sum_{i \in T}|(\calF x)_i|\\
	&\le \frac{m}{\sqrt{n}\cdot \epll{n}} + \sqrt{\sum_{i \in T} (\calF x)_i^2}\sqrt{|T|}\\
	&\le \frac{n \cdot \epll{n}}{\sqrt{n}\cdot \epll{n}} + \sqrt{m'}.
\end{align*}
Here we use the Cauchy-Schwarz inequality and the fact that $\opnorm{\calF x}^2 \le 1$.
For appropriate $\eta$ chosen based on $m$, the above inequality implies that
\begin{equation*}
	\sqrt{m'} \ge \sqrt{n}/8 \implies m' \ge n/64
\end{equation*}
which shows that an $\Omega(n)$ fraction of the coordinates of $\calF x$ have an absolute value of at least $\eta$.
\end{proof}
Thus, applying Lemma~\ref{lma:indyk-embedding-lemma} for $N = \Theta(\log\log(n))$ levels gives an $n$ dimensional subspace of $\R^m$ for $m = n\cdot\epll{n}$ such that for every unit vector $x$, the vector $\calF x$ has a \emph{large} number of \emph{large} coordinates.
\section{Fast Subspace Embeddings}
\begin{algorithm2e}
\caption{\textsc{FastEmbedding}}
\KwIn{$A \in \R^{n \times k}, \gamma > 0$}
\KwOut{A subspace embedding $\bS A$ with $O(k\cdot\epll{k})$ rows}
\DontPrintSemicolon
$\bS_1 \gets $ \textsf{OSNAP}($A$, $\gamma$) with $O(k^{1+\gamma+o(1)})$ rows\;
$\bS_2 \gets $ \textsf{OSNAP}($\bS_1 A$, $O(1/\log(n))$) with $O(k\log(k))$ rows\;
$\calF \gets $ Indyk Embedding for $\R^{O(k\log(k))}$ for $\Theta(\log\log(k))$ levels with $r = k\cdot\epll{k}$ rows\;
$m \gets k \cdot \poly(\log\log k)$, $p \gets \epll{k}/r$\;
$\bG \gets $ $m \times r$ random matrix where each entry is independently $0$ with probability $1-p$, and $\pm 1$ with probability $p/2$ each\;
$\bS A \gets \kappa \cdot \bG \cdot \calF \cdot \bS_2 \cdot \bS_1 A$ where $\kappa$ is an appropriate scaling factor\;
\Return{$\bS A$}
\end{algorithm2e}
Let $A$ be an arbitrary $n \times k$ matrix with $\nnz(A)$ nonzero entries. We design a random matrix $\bS$ with $k \cdot \poly(\log\log(k))$ rows such that with probability $\ge 9/10$, for all vectors $x$,
\begin{equation*}
\opnorm{x}	\le \opnorm{\bS Ax} \le \epll{k} \opnorm{x}.
\end{equation*}
The matrix $\bS A$ can be computed in time $\nnz(A) + k^{2.1+o(1)}$. The matrix $\bS$ is constructed as a composition of various oblivious subspace embeddings. 

We first apply \textsf{OSNAP} $\bS_1$ with $\mu = 0.1$ to obtain an $O(k^{1.1}\log(k)) \times k$ matrix $\bS_1A$ in time $O(\nnz(A))$. Now, $\nnz(\bS_1A) = O(k^{2.1}\log(k))$. Therefore, we can apply \textsf{OSNAP} $\bS_2$ with $\mu = 1/\log(k)$, to obtain an $O(k\log k)\times k$ matrix $\bS_2\bS_1A$ in time $O(\nnz(\bS_1 A) \cdot 1/\mu) = O(k^{2.1}\log^2(k))$. We also have with probability $\ge 98/100$ that
\begin{equation*}
	\opnorm{\bS_2 \bS_1Ax} \in (1 \pm 3/10)\opnorm{Ax}
\end{equation*}
for all vectors $x \in \R^k$. We then use the flattening transform $\calF$ to obtain a constant subspace embedding for the matrix $\bS_2 \cdot \bS_1 \cdot A$ which also has the property that every unit vector in the column space of the matrix $\calF \cdot \bS_2 \cdot \bS_1 \cdot A$ has a large number of large entries.
\begin{theorem}[Indyk Embedding, Theorem~\ref{thm:indyk-multi-level} and Corollary~\ref{cor:indyk-multi-level}] Given any $n$, there is an explicit linear map/matrix $\calF \in \R^{m \times n}$ with $m = n \cdot \epll{n}$ such that for any vector $x \in \R^n$,
\begin{equation*}
	\frac{1}{2}\opnorm{x} \le \opnorm{\calF x} \le \opnorm{x}
\end{equation*}
and for any unit vector $x$, at least $\Theta(n)$ coordinates of the vector $\calF x$ have an absolute value of at least $1/(\sqrt{n} \cdot \epll{n})$. Given a vector $x \in \R^n$, the explicit map $\calF x$ can be computed in time $n^{1+o(1)}$.	
\label{thm:final-indyk-version}
\end{theorem}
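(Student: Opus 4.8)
The plan is to read the statement off the two results that immediately precede it. I would take $\calF$ to be the $\Theta(\log\log n)$-level composition of Indyk's embedding constructed in Theorem~\ref{thm:indyk-multi-level}, in its final rescaled form --- the one scaled by $\sqrt{B}$ so that $\frac12 \le \opnorm{\calF x}^2 \le 1$ for every unit vector $x$. With this choice, the dimension bound $m = n\cdot\epll{n}$ and the $n^{1+o(1)}$ bound on the time to compute $\calF x$ are exactly what Theorem~\ref{thm:indyk-multi-level} already asserts (the running time because $\calF$ is an explicit composition of $\Theta(\log\log n)$ levels of Indyk's map, each evaluable in near-linear time in its input size, and the dimension blows up by only $\epll{n} = n^{o(1)}$ over all levels).

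For the operator-norm bound, Theorem~\ref{thm:indyk-multi-level} gives $\frac12 \le \opnorm{\calF x}^2 \le 1$ for all unit vectors $x$. Since $\calF$ is linear, I would apply this to $x/\opnorm{x}$ for an arbitrary nonzero $x$ and multiply through by $\opnorm{x}^2$ to get
\begin{equation*}
  \frac12\,\opnorm{x}^2 \,\le\, \opnorm{\calF x}^2 \,\le\, \opnorm{x}^2,
\end{equation*}
so that, taking square roots and using $1/\sqrt2 \ge 1/2$, $\frac12\opnorm{x} \le \opnorm{\calF x} \le \opnorm{x}$ (the case $x=0$ being trivial); this merely records the sharper $1/\sqrt2$ lower bound in the cleaner form $1/2$. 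The ``many large coordinates'' property is precisely Corollary~\ref{cor:indyk-multi-level}: for a unit vector $x$, at least $\Omega(n)$ coordinates of $\calF x$ have magnitude at least $\eta = 1/(\sqrt{n}\cdot\epll{n})$ --- and that corollary is proved for exactly this rescaled $\calF$, since its argument uses only $\|\calF x\|_1 \ge \sqrt{n}/4$ and $\opnorm{\calF x}^2 \le 1$, both of which hold after the $\sqrt{B}$ scaling. So everything stays consistent with the normalization fixed above, and the statement follows by simply combining Theorem~\ref{thm:indyk-multi-level} and Corollary~\ref{cor:indyk-multi-level} with the one-line homogeneity argument.

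The hard part, such as it is, lives upstream and is already carried out in those two results, so I do not expect a real obstacle here: one needs that the recursion of Theorem~\ref{thm:indyk-multi-level} produces $B \ge n/2$ blocks at level $N = \Theta(\log\log n)$, each of size $\epll{n}$, so that the per-block $\ell_1$ and $\ell_2$ guarantees of Lemma~\ref{lma:indyk-embedding-lemma} compound over all levels with only a constant-factor loss (this is where $\zeta = \Theta(1/\log\log n)$ and $N = \Theta(\log\log n)$ are balanced so that $(1-O(\zeta))^N = \Omega(1)$); and one needs the Cauchy--Schwarz ``spreading'' step of Corollary~\ref{cor:indyk-multi-level} to be run with $\eta$ small enough relative to $m = n\cdot\epll{n}$ that the small-coordinate contribution $m\eta$ to $\|\calF x\|_1$ is at most, say, $\frac18\sqrt{n}$, which then forces $\sqrt{m'} \ge \frac18\sqrt{n}$ and hence $m' = \Omega(n)$. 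Granting those, the present theorem is just the packaging step.
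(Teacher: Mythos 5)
Your proposal is correct and matches the paper's approach: Theorem~\ref{thm:final-indyk-version} is, as its bracketed name indicates, a direct restatement of Theorem~\ref{thm:indyk-multi-level} and Corollary~\ref{cor:indyk-multi-level}, and the only ``new'' content is the trivial homogeneity argument converting the unit-vector bound $\frac12 \le \opnorm{\calF x}^2 \le 1$ into $\frac12\opnorm{x} \le \opnorm{\calF x} \le \opnorm{x}$ (weakening $1/\sqrt2$ to $1/2$), which you handle correctly.
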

Combining $\calF$, $\bS_2, \bS_1$, we obtain that with probability $\ge 98/100$, for all vectors $x$,
\begin{equation*}
\frac{1}{4}\opnorm{Ax} \le \opnorm{\calF \cdot \bS_2 \cdot \bS_1 \cdot Ax} \le \frac{3}{2}\opnorm{Ax}.
\end{equation*}
The matrix $\calF \cdot \bS_2 \cdot \bS_1 \cdot  A$ can be computed in time $\nnz(A) + k^{2.1+o(1)}$. As the matrix $\bS_{2} \cdot \bS_1 \cdot A$ has $O(k \cdot \log(k))$ rows, the matrix $\calF$ has $O(k \log(k) \cdot \epll{k}) = k\cdot \epll{k}$ rows and we also obtain that for any unit vector $x$ in the column space of $\calF \cdot S_2 \cdot S_1 \cdot A$, at least $\Theta(k\log k)$ coordinates have an absolute value of at least $1/(\sqrt{k\log k}\,\epll{k}) = 1/(\sqrt{k}\,\epll{k})$. The following theorem shows that a sparse sign matrix is a subspace embedding for a subspace with every unit vector in the subspace having a large number of large entries.
\begin{theorem}
\label{thm:sparse-embedding}
	Let $A \in \R^{m \times k}$, with $m = k\cdot \epll{k}$, be a matrix such that for all unit vectors $x \in \text{colspan}(A)$, the set 
	\begin{equation*}
		\text{Large}(x) := \left\{i \in [m]\,\mid \, |x_i| \ge \eta = \frac{1}{\sqrt{k} \cdot \epll{k}}\right\}
	\end{equation*}
	satisfies $|\text{Large}(x)| \ge Ck$ for some constant $C$. There is a distribution $\calG$  over matrices with $M = k \cdot \poly(\log\log(k))$ rows such that for $\bG \sim \calG$, with probability $\ge 9/10$, for all vectors $x \in \R^k$,
	\begin{equation*}
		\opnorm{Ax} \le \opnorm{\bG Ax} \le \epll{k}\opnorm{Ax}.
	\end{equation*}
	With probability $\ge 9/10$, the matrix $\bG A$ can be computed in time $k^2\cdot\epll{k}$.
\end{theorem}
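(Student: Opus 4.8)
The plan is to instantiate $\bG$ as an $M\times r$ matrix, where $r=k\cdot\epll{k}$ is the number of rows of $A$ and $M=k\cdot\poly(\log\log k)$, with i.i.d.\ entries that are $0$ with probability $1-p$ and $\pm1$ with probability $p/2$ each, taking $p=\Theta(1/k)$ (equivalently $p=\epll{k}/r$, so the expected number of nonzeros in a row is $pr=\epll{k}$ and $pk=\Theta(1)$). After establishing (i) $\opnorm{\bG}\le\epll{k}$ with high probability and (ii) a per-vector ``no-contraction'' estimate $\opnorm{\bG y}^2=\Omega(M\eta^2)$ for a fixed unit $y\in\text{colspan}(A)$, with failure probability $e^{-\Omega(M)}$, I will combine (ii) over a net of $\text{colspan}(A)$ and rescale: the final embedding is $\kappa\bG$ with $\kappa=\Theta(1/\sqrt{M\eta^2})=\Theta(\sqrt{\epll{k}})$, and since $M\eta^2=k\poly(\log\log k)/(k\cdot\epll{k})=1/\epll{k}$, the stated two-sided bound with distortion $\opnorm{\kappa\bG}\le\kappa\cdot\epll{k}=\epll{k}$ will follow.

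For (ii), fix a unit $y\in\text{colspan}(A)$, let $T:=\text{Large}(y)$ (so $|T|\ge Ck$), and fix a subset $T'\subseteq T$ with $|T'|=\lceil Ck\rceil$, so $p|T'|=\Theta(1)$. Condition on the support $S$ of the $i$-th row of $\bG$: then $\langle\bG_{i*},y\rangle=\sum_{\ell\in S}\sigma_\ell y_\ell$ is a Rademacher sum with coefficient vector $v:=(y_\ell)_{\ell\in S}$, and $S\cap T'\neq\emptyset$ (whence $\opnorm{v}\ge\eta$) with probability $\ge 1-(1-p)^{|T'|}\ge 1-e^{-p|T'|}=\Omega(1)$. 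Conditioned on such an $S$, Khintchine's inequality bounds the fourth moment of $\langle\sigma,v\rangle$ by $3\opnorm{v}^4$, so the Paley--Zygmund inequality gives $\Pr_\sigma[\,|\langle\sigma,v\rangle|\ge\opnorm{v}/\sqrt2\,]\ge 1/12$; hence $|\langle\bG_{i*},y\rangle|\ge\eta/\sqrt2$ with probability at least a universal constant $q>0$. These events are independent over the $M$ rows, so a Chernoff bound shows that at least $qM/2$ of them occur except with probability $e^{-\Omega(M)}$, and on that event $\opnorm{\bG y}^2\ge(q/4)M\eta^2$.

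For (i), I use $\opnorm{\bG}\le\sqrt{\|\bG\|_1\,\|\bG\|_\infty}$, where $\|\bG\|_\infty$ and $\|\bG\|_1$ are the largest $\ell_1$-norms of a row and of a column, i.e.\ the largest numbers of nonzeros. A row has $\text{Binomial}(r,p)$ nonzeros with mean $pr=\epll{k}$ and a column has $\text{Binomial}(M,p)$ nonzeros with mean $pM=\poly(\log\log k)$, so Chernoff and union bounds give $\|\bG\|_\infty=O(pr+\log M)=O(\epll{k})$ and $\|\bG\|_1=O(pM+\log r)=O(\log k)$ with probability $\ge 99/100$, whence $\opnorm{\bG}\le\sqrt{\epll{k}\cdot\log k}=\epll{k}$; the same concentration gives $\nnz(\bG)=O(pMr)=O(k\cdot\epll{k})$ with probability $\ge 99/100$. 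Now let $\mathcal{N}$ be a $\delta$-net of the unit sphere of $\text{colspan}(A)$ ($\dim\le k$) with $\delta=1/(\kappa\cdot\epll{k})$, so $|\mathcal{N}|\le(3/\delta)^k=\exp(k\cdot\poly(\log\log k))$; choosing the $\poly(\log\log k)$ factor in $M$ large enough makes $|\mathcal{N}|\cdot e^{-\Omega(M)}\le 1/100$. Choosing $\kappa$ so that $\kappa^2\cdot(q/4)M\eta^2\ge 4$, we get with probability $\ge 9/10$ simultaneously $\opnorm{\bG}\le\epll{k}$, $\nnz(\bG)=O(k\cdot\epll{k})$, and $\kappa\opnorm{\bG y_0}\ge 2$ for every $y_0\in\mathcal{N}$. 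For an arbitrary unit $y\in\text{colspan}(A)$, picking $y_0\in\mathcal{N}$ with $\opnorm{y-y_0}\le\delta$ gives $\kappa\opnorm{\bG y}\ge\kappa\opnorm{\bG y_0}-\kappa\opnorm{\bG}\cdot\delta\ge 2-1=1$, while $\kappa\opnorm{\bG y}\le\kappa\opnorm{\bG}\le\epll{k}$ for all $y$. Rescaling and homogenizing, $\opnorm{Ax}\le\opnorm{(\kappa\bG)Ax}\le\epll{k}\opnorm{Ax}$ for every $x\in\R^k$ (the case $Ax=0$ being trivial). Computing $(\kappa\bG)A$ reduces to $\nnz(\bG)$ scaled additions of rows of $A$, each of length $k$, for a total of $O(k\cdot\nnz(\bG))=O(k^2\cdot\epll{k})$ time, which holds with probability $\ge 9/10$ via the bound on $\nnz(\bG)$.

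The crux is step (ii). Flatness only guarantees that a given row hits a \emph{large} coordinate of $y$ with constant probability; the other coordinates hit by that row---of which there may be many, each small---could, through the random signs, cancel that large contribution. Ruling this out requires an anti-concentration bound for Rademacher sums (here Paley--Zygmund together with the fourth-moment Khintchine estimate, yielding $|\langle\sigma,v\rangle|=\Omega(\opnorm{v})\ge\Omega(\eta)$ with constant probability), rather than the matrix Chernoff inequality used in standard subspace-embedding arguments; it is precisely this substitution that removes the extra $\Theta(\log k)$ factor in the number of rows that matrix Chernoff would otherwise impose.
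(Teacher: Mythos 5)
Your proof follows essentially the same route as the paper: the same sparse sign matrix with $p=\Theta(1/k)$, the operator-norm bound via the Schur test $\opnorm{\bG}\le\sqrt{\|\bG\|_1\|\bG\|_\infty}$ with Chernoff on row/column sparsity, a per-row constant-probability anti-concentration estimate, Chernoff over the $M$ rows, a net of radius $\approx 1/\opnorm{\bG}$, and a final rescaling; the time bound via $\nnz(\bG)$ is also the same. The only substantive difference is the per-row anti-concentration step: the paper conditions on the row's support hitting a coordinate $j_0\in\text{Large}(x)$, fixes all other signs, and observes that $\bG_{ij_0}x_{j_0}$ agrees in sign with the rest of the sum with probability $1/2$, giving $|\bG_{i*}x|\ge|x_{j_0}|\ge\eta$ directly with probability $\ge 1/2$; you instead apply Khintchine's fourth-moment bound plus Paley--Zygmund to get $|\bG_{i*}x|\ge\opnorm{v}/\sqrt2\ge\eta/\sqrt2$ with probability $\ge 1/12$. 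Both give the needed $\Omega(1)$ constant so the rest of the argument is unchanged; the paper's ``favorable sign'' trick is a touch more elementary and gives a larger constant, while yours is a more generic black-box Rademacher anti-concentration bound. Everything else in your writeup checks out, including the point you emphasize at the end---that the anti-concentration step is exactly what replaces matrix Chernoff and removes the extra $\log k$ in the sketching dimension.
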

\begin{proof}
	Define the $M \times m$ random matrix $\bG$ as follows:
	\begin{equation*}
		\bG_{ij} = \begin{cases}
					+1 & \text{with probability $p/2$}\\
					-1 & \text{with probability $p/2$}\\
					0 & \text{with probability $1-p$}
			\end{cases}
	\end{equation*}
	for some values of $M \le m$ and $p$ to be chosen later. The random variables $\bG_{ij}$ are mutually independent. Let $\bX_i$ be the number of nonzero entries in the $i$-th row of $\bG$ and let $\bY_j$ be the number of nonzero entries in the $i$-th column of $\bG$. By the Chernoff bound, for $\delta > 1$, 
	\begin{equation*}
		\Pr[\bX_i \ge (1+\delta) \cdot mp] \le \exp(-\delta mp/4)
\quad \text{and} \quad
		\Pr[\bY_j \ge (1+\delta) \cdot Mp] \le \exp(-\delta Mp/4).
	\end{equation*}
	Let $p$ be such that $p|\text{Large}(x)| \ge 10$ for all $x$. As $|\text{Large}(x)| \ge Ck$, there is a value of $p$ for which $pm \le \epll{k}$. By a union bound, we obtain that with probability $\ge 99/100$, for all $i$ and $j$, $\bX_i \le \epll{k}$ and $\bY_j \le  \epll{k}$. Thus, with probability $\ge 99/100$
	\begin{equation*}
	\max_{i} \sum_j |\bG_{ij}| = \max_i \bX_i \le \epll{k}
\, \text{and} \,
		\max_{j} \sum_i |\bG_{ij}| = \max_j \bY_j \le \epll{k}.
\end{equation*}
We now have that $\opnorm{\bG} \le \sqrt{(\max_i\sum_j|\bG_{ij}|)(\max_j \sum_{i} |\bG_{ij}|)} \le \epll{k}$, which implies that for any vector $y$,
\begin{equation*}
	\opnorm{\bG \cdot Ay} \le \epll{k}\opnorm{Ay}.
\end{equation*}
Let the event that $\opnorm{\bG} \le \epll{k}$ be $\calE$. 

We now show a contraction lower bound. Let $x$ be an arbitrary unit vector in the column space of the matrix $A$. We say a row $\bG_{i*}$ is \emph{good} if $\bG_{ij}$ is nonzero for some $j \in \text{Large}(x)$. We say $\bG_{i*}$ is \emph{bad} if it is not \emph{good}. We have
\begin{equation*}
	\Pr[\text{$\bG_{i*}$ is \emph{bad}}] = (1-p)^{|\text{Large}(x)|} \le \exp(-p|\text{Large}(x)|) \le \exp(-10) \le 1/100.
\end{equation*}
Thus, $\Pr[\text{$\bG_{i*}$ is \emph{good}}] \ge 99/100$. 

We say a row $\bG_{i*}$ is \emph{large} if $|G_{i*}x| \ge \eta$. Condition on the event that $\bG_{i*}$ is \emph{good}. Let $j \in \text{Large}(x) \cap \nnz(\bG_{i*}) \ne \emptyset$. Now, $\bG_{i*}x = \sum_{j' \in \nnz(\bG_{i*})-j}\bG_{ij'}x_{j'} + \bG_{ij}x_{j}$. As entries of the matrix $\bG$ are mutually independent, with probability $1/2$, $\bG_{ij}x_j$ has the same sign as $\sum_{j' \in \nnz(\bG_{i*})-j}\bG_{ij}x_j$,  which implies that with probability $\ge 1/2$, $|\bG_{i*}x| \ge |x_j| \ge \eta$. Thus,
\begin{equation*}
	\Pr[\text{$\bG_{i*}$ is \emph{large}}\,|\, \text{$\bG_{i*}$ is \emph{good}}] \ge 1/2
\end{equation*}
which implies that
\begin{equation*}
	\Pr[|\bG_{i*}x| \ge \eta] = \Pr[\text{$\bG_{i*}$ is \emph{large}}] \ge (1/2) \cdot (99/100) \ge 1/4.
\end{equation*}
Let $l$ denote the number of \emph{large} rows. As rows of the matrix $\bG_{i*}$ are independent, \emph{largeness} of rows is mutually independent. Thus, by the Chernoff bound, 
\begin{equation*}
	\Pr[l \le (1/2) \cdot M \cdot (1/4)] \le \exp(-M/32).
\end{equation*}
We now condition on the event $\calE$. We have
\begin{equation*}
	\Pr[l \le M/8\,|\, \calE] \le \frac{\Pr[l \le M/8]}{\Pr[\calE]} \le 2\exp(-M/32).
\end{equation*}
Therefore, conditioned on the event $\calE$, with probability $\ge 1 - 2\exp(-M/32)$, we have $l \ge M/8$ which implies that
\begin{equation*}
	\opnorm{\bG x}^2 \ge \sum_{\text{\emph{large}}\, i}|\bG_{i*}x|^2 \ge l\eta^2 \ge \frac{l}{k\,\epll{k}} \ge \frac{M}{8k\,\epll{k}}.
\end{equation*}
In what follows, we condition on the event $\calE$. For $M = k \cdot \poly(\log\log(k))$, we obtain that for a unit vector $x$, with probability $\ge 1-\exp(-k\poly(\log\log(k)))$,
\begin{equation*}
	\opnorm{\bG x}^2 \ge \frac{\poly(\log\log(k))}{\epll{k}}.
\end{equation*}
By suitably scaling $\bG$, we obtain that for all vectors $x$,
\begin{equation*}
\opnorm{\bG x} \le \epll{k}\opnorm{x}
\end{equation*}
and for any unit vector $x$, with probability $\ge 1 - \exp(-k \cdot \poly(\log\log(k)))$, 
\begin{equation*}
	\opnorm{\bG x} \ge 2.
\end{equation*}
The column space of the matrix $A$ has dimension at most $k$. Let $\calN$ be a net of the unit vectors in the column space of $A$ such that for any $y \in \text{colspace}(A)$, $\opnorm{y} = 1$, there is an $x_y \in \calN$, $\opnorm{x_y}=1$ such that
\begin{equation*}
	\opnorm{x_y - y} \le \frac{1}{\opnorm{\bG}}.
\end{equation*}
As $\opnorm{\bG} \le \epll{k}$, there exists a net $\calN$ of size $\exp(k \cdot \poly(\log\log(k)))$. We union bound over all the net vectors to obtain that with probability $\ge 99/100$, for all net vectors $x \in \calN$,
\begin{equation*}
	\opnorm{\bG x} \ge 2.
\end{equation*}
Now conditioning on this event, for an arbitrary $y \in \text{colspan}(A)$, $\opnorm{y}=1$, we have
\begin{align*}
	\opnorm{\bG y} &= \opnorm{\bG(x_y + (y-x_y))}\\
				   &\ge \opnorm{\bG x_y} - \opnorm{\bG(y-x_y)}\\
				   &\ge 2 - \opnorm{\bG}\opnorm{y-x_y}\\
				   &\ge 1
\end{align*}
as the net is chosen so that $\opnorm{y-x_y} \cdot \opnorm{\bG} \le 1$.

Conditioned on the event $\calE$, we have that each row of $\bG$ has at most $\epll{k}$ nonzero entries. Thus, each row of the matrix $\bG A$ can be computed in $k\cdot \epll{k}$ time and hence the matrix $\bG A$ can be computed in time $k^2\epll{k}$. As $\Pr[\calE] \ge 99/100$, the claim follows.
\end{proof}
\begin{theorem}[Subspace Embedding]
Given an ${n \times k}$ matrix $A$, we can compute an $m \times k$ matrix $\bS A$ with $m = k \cdot \poly(\log\log(k))$ such that with probability $\ge 9/10$, for all vectors $x \in \R^k$, 
\begin{equation*}
	\opnorm{Ax} \le \opnorm{\bS Ax} \le \epll{k}\opnorm{Ax}.
\end{equation*}
The matrix $\bS \cdot A$ can be computed in time $O(\nnz(A) + k^{2.1 + o(1)})$ or more generally in time $O(\gamma^{-1}\nnz(A) + k^{2+\gamma+o(1)})$ for any constant $\gamma > 0$. Further, for any matrix $M$ with $n$ rows, $$\E[\frnorm{\bS M}^2] \le \epll{k}\frnorm{M}^2.$$
\label{thm:sparse-subspace-embedding}
\end{theorem}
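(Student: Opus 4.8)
The plan is to assemble $\bS$ exactly as in \textsc{FastEmbedding}, namely $\bS=\kappa\cdot\bG\cdot\calF\cdot\bS_2\cdot\bS_1$ for a constant $\kappa$, and to push the distortion, the row count, the runtime, and the Frobenius-norm expectation through the four stages one at a time, citing only results already established.

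First I would dispatch the two \textsf{OSNAP} reductions. By Lemma~\ref{lem embed}, applying \textsf{OSNAP} $\bS_1$ with sparsity parameter $\mu=\gamma$ and a constant target accuracy gives an $O(k^{1+\gamma}\log k)=O(k^{1+\gamma+o(1)})$-row matrix, computable in $O(\gamma^{-1}\nnz(A))$ time, that is a constant-distortion subspace embedding for $\text{colspan}(A)$ with probability $\ge 99/100$. Since $\bS_1A$ has $k$ columns, $\nnz(\bS_1A)=O(k^{2+\gamma+o(1)})$, so a second \textsf{OSNAP} $\bS_2$ with $\mu=\Theta(1/\log k)$ — whose row count depends only on the rank $\le k$ — compresses to $O(k\log k)$ rows in $O(\log k\cdot\nnz(\bS_1A))=O(k^{2+\gamma+o(1)})$ time, and with probability $\ge 98/100$ we get $\opnorm{\bS_2\bS_1Ax}\in(1\pm 3/10)\opnorm{Ax}$ for all $x$.

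Next I would invoke the flattening theorem (Theorem~\ref{thm:final-indyk-version}) on ambient dimension $N=O(k\log k)$: the resulting deterministic linear map $\calF$ has $N\cdot\epll{N}=k\cdot\epll{k}$ rows, satisfies $\frac{1}{2}\opnorm{y}\le\opnorm{\calF y}\le\opnorm{y}$ for every $y$, is applicable to the $k$ columns of $\bS_2\bS_1A$ in $k\cdot N^{1+o(1)}=k^{2+o(1)}$ total time, and — the key point — by Corollary~\ref{cor:indyk-multi-level} makes every unit vector in $\text{colspan}(\calF\bS_2\bS_1A)$ have $\Theta(k\log k)\ge Ck$ coordinates of magnitude at least $1/(\sqrt{k\log k}\,\epll{k})=1/(\sqrt{k}\,\epll{k})=:\eta$. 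Thus $\calF\bS_2\bS_1A$ satisfies verbatim the hypothesis of Theorem~\ref{thm:sparse-embedding} (right row count $k\cdot\epll{k}$, right threshold $\eta$), so its sparse sign matrix $\bG$ — suitably scaled, with $M=k\cdot\poly(\log\log k)$ rows — obeys $\opnorm{y}\le\opnorm{\bG y}\le\epll{k}\opnorm{y}$ on that column span with probability $\ge 9/10$ (in fact $\ge 98/100$, since in its proof both the net size and the per-vector bad event are governed by $\epll{k}$, and the sparsity event $\calE$ costs another $1/100$), and $\bG\cdot(\calF\bS_2\bS_1A)$ is computed in $k^2\cdot\epll{k}=k^{2+o(1)}$ time on the same event. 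Chaining the four bounds and choosing $\kappa=4$ gives $\opnorm{Ax}\le\opnorm{\bS Ax}\le\epll{k}\opnorm{Ax}$ for all $x$; a union bound over the $\bS_2\bS_1$ and $\bG$ failures leaves probability $\ge 9/10$, and the runtimes sum to $O(\gamma^{-1}\nnz(A)+k^{2+\gamma+o(1)})$, which is $O(\nnz(A)+k^{2.1+o(1)})$ for $\gamma=0.1$. The expectation bound follows by the tower rule: scaled \textsf{OSNAP}s are exact Frobenius isometries in expectation, $\calF$ is a deterministic columnwise Frobenius contraction, and the i.i.d.\ $\{0,\pm 1\}$ entries of the scaled $\bG$ give $\E[\frnorm{\bG N}^2]=\kappa^2 Mp\,\frnorm{N}^2$ with $\kappa^2 Mp\le\epll{k}$, so composing yields $\E[\frnorm{\bS M}^2]\le\epll{k}\frnorm{M}^2$.

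The part I expect to require the most care is not any single inequality but the arithmetic of the $\epll{\cdot}$-class: I must verify that the two \textsf{OSNAP} $\log k$ factors, $\calF$'s row blow-up and threshold expressed over an $N=\Theta(k\log k)$-dimensional space, and $\bG$'s scaling constant all remain inside $\exp(\poly(\log\log k))$ after being multiplied together, raised to constant powers, and rewritten in terms of $k$ rather than $k\log k$. This hinges on $\epll{k}$ being closed under products and constant powers, on $\epll{k\log k}=\epll{k}$, and on the containments $\poly(\log\log k)\subseteq\epll{k}$ and $\sqrt{\log k}\in\epll{k}$; everything else is direct substitution into theorems already proved.
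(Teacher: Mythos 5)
Your proposal matches the paper's proof essentially step for step: the same factorization $\bS = \kappa\cdot\bG\cdot\calF\cdot\bS_2\cdot\bS_1$, the same dispatch of the two \textsf{OSNAP} stages via Lemma~\ref{lem embed}, the same invocation of Theorem~\ref{thm:final-indyk-version} and Corollary~\ref{cor:indyk-multi-level} to certify that $\calF\bS_2\bS_1A$ satisfies the hypotheses of Theorem~\ref{thm:sparse-embedding}, and the same tower-rule argument for $\E[\frnorm{\bS M}^2]$. Your probability accounting ($98/100\cdot 98/100\ge 9/10$) is in fact a little sharper than the paper's own proof, which as written only yields $8/10$ before the stated $9/10$ bound.
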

\begin{proof}
	The matrix $\bS$ is defined as follows
	\begin{equation*}
		\bS = 4 \cdot \bG \cdot \calF \cdot \bS_2 \cdot \bS_1
	\end{equation*}
	where $\bS_1$ is \textsf{OSNAP} for $k$ dimensional subspaces with $\gamma = 0.1$, $\bS_2$ is \textsf{OSNAP} for $k$ dimensional subspaces with $\gamma = 1/\log(k)$, $\calF$ is Indyk's embedding for $O(k\log(k))$ dimensional subspaces as in Theorem~\ref{thm:final-indyk-version} and $\bG$ is the sparse embedding matrix with $k \cdot \poly(\log\log(k))$ rows as in Theorem~\ref{thm:sparse-embedding}. We have with probability $\ge 9/10$, for any vector $x \in \R^k$,
	\begin{align*}
		\frac{1}{2}\opnorm{Ax} \le \opnorm{\bS_2 \cdot \bS_1 \cdot Ax} \le \frac{3}{2}\opnorm{Ax}.
	\end{align*}
	Condition on the above event. From Theorem~\ref{thm:final-indyk-version}, we have
	\begin{equation*}
		\frac14\opnorm{Ax} \le \frac{1}{2}\opnorm{\bS_2 \cdot \bS_1 \cdot A x} \le \opnorm{\calF \cdot \bS_2 \cdot \bS_1 \cdot Ax} \le \opnorm{\bS_2 \cdot \bS_1 \cdot Ax} \le \frac{3}{2}\opnorm{Ax}.
	\end{equation*}
	By Theorem~\ref{thm:final-indyk-version}, every unit vector in the span of $\calF$ has at least $Ck$ coordinates with an absolute value of at least $1/(\sqrt{k} \cdot \epll{k})$. Thus, the matrix $\calF \cdot \bS_2 \cdot \bS_1 \cdot A$ satisfies the conditions of Theorem~\ref{thm:sparse-embedding}. Therefore with probability $\ge 9/10$, we have for all vectors $x \in \R^k$,
	\begin{equation*}
		\opnorm{\bG \cdot \calF \cdot \bS_2 \cdot \bS_1 \cdot Ax} \le \epll{k}\opnorm{\calF \cdot \bS_2 \cdot \bS_1 \cdot Ax} \le \epll{k}\opnorm{Ax}
	\end{equation*}
	and
	\begin{equation*}
		\opnorm{\bG \cdot \calF \cdot \bS_2 \cdot \bS_1 \cdot Ax} \ge \opnorm{\calF \cdot \bS_2 \cdot \bS_1 \cdot Ax} \ge \frac{1}{4}\opnorm{Ax}.
	\end{equation*}
	Thus with probability $\ge 8/10$, for all vectors $x$, 
	\begin{equation*}
		\opnorm{Ax} \le \opnorm{\bS \cdot Ax} \le \epll{k}\opnorm{Ax}.
	\end{equation*}
	The matrix $\bS \cdot A$ can be computed as $4\bG(\calF(\bS_2(\bS_1A))))$ in time 
	\begin{equation*}
	O(\nnz(A) + k^{2.1}\log^2(k) + k^{2+o(1)} + k^2\cdot \epll{k})
	\end{equation*}
	where the last term follows from the fact that each of the $k\poly(\log\log(k))$ rows of the matrix $\bG$ has at most $\epll{k}$ nonzero entries.
	
	There is nothing special about $\gamma = 0.1$. We can choose any constant $1 > \gamma > 0$ and use \textsf{OSNAP} with the parameter $\gamma$ which gives an overall running time of $O({\gamma^{-1}}\nnz(A) + k^{2+\gamma + o(1)})$.
	
	We now bound $\E_{\bS}[\frnorm{\bS M}^2]$ for an arbitrary matrix $M$. We have
	\begin{align*}
		\E_{\bS}[\frnorm{\bS M}^2] &= 16\E_{\bG, \bS_2,\bS_1}[\frnorm{\bG \cdot \calF \cdot \bS_1 \cdot \bS_2 M}^2]\\
		&\le 16 \cdot \E_{\bS_1}[\E_{\bS_2}[\E_{\bG}[\frnorm{\bG \cdot \calF \cdot \bS_2 \cdot \bS_1M}^2\,|\, \bS_1, \bS_2]\,|\, \bS_1]].
	\end{align*}
	First, $\E_{\bG}[\frnorm{\bG \cdot \calF \cdot \bS_2 \cdot \bS_1M}^2\,|\, \bS_1, \bS_2] \le Mp \cdot (\text{scale}) \cdot \frnorm{\calF \cdot \bS_2 \cdot \bS_1M}^2$, where $M$ is the number of rows of $\bG$, $p$ is the probability of an entry of $\bG$ being nonzero and $\text{scale} = \epll{k}$ is the scaling factor for the random sign matrix. As $M = k \cdot \poly(\log\log(k))$ and $p = \epll{k}/k$, we have $\E_{\bG}[\frnorm{\bG \cdot \calF \cdot \bS_2 \cdot \bS_1M}^2\,|\, \bS_1, \bS_2] \le\epll{k} \cdot \frnorm{\calF \cdot \bS_2 \cdot \bS_1M}^2 \le \epll{k}\frnorm{\bS_2 \cdot \bS_1 M}^2$ as the matrix $\calF$ does not increase the Euclidean norm of any vector. Thus,
	\begin{equation*}
		\E_{\bS}[\frnorm{\bS M}^2] \le \epll{k}\E_{\bS_1}[\E_{\bS_2}[\frnorm{\bS_2 \cdot \bS_1 M}^2\,|\, \bS_1]] \le \epll{k}\frnorm{M}^2,
	\end{equation*}
	where the last inequality follows from the fact that $\frnorm{\bS_iM}^2$ is an unbiased estimator to $\frnorm{M}^2$ if $\bS_i$ is an \textsf{OSNAP}.
\end{proof}
\section{Applications}\label{sec:applications}

\subsection{Subspace Embeddings}
\begin{algorithm2e}
\caption{\textsc{LeverageScoreSampling}}
\label{alg:leverage-score-sampling}
	\KwIn{$A \in \R^{n \times k}$, $\varepsilon, \gamma > 0$}
	\KwOut{An $\varepsilon$ subspace embedding $\bS_{\textnormal{lev}} A$}
	\DontPrintSemicolon
	$\bS A \gets \textsc{SparseEmbedding}(A)$\;
	$[Q, R^{-1}] \gets \textsc{QR-Decomposition}(\bS A)$ \tcp*{$QR^{-1} = \bS A$}
	$s \gets k\exp(\poly(\log\log k)/\varepsilon^2$\; 
	$\bS_{1} \subseteq [n], f_i$ for $i \in [\bS_1] \gets$  \textsc{SampleFromProduct}($A, R, s,\gamma$)\tcp*{Lemma~\ref{lma:sampling-from-product}}
	For $i \in \bS_1$, set $(\bS_{\lev})_{ii}$ to be equal to $1/\sqrt{f_i}$\;
	\Return{$\bS_{\lev}A$ after removing $0$-value rows}
\end{algorithm2e}

We use the fast subspace embedding construction from previous sections to compute approximate leverage scores and then sample rows using the approximate leverage scores to compute $1+\varepsilon$ subspace embeddings in time $O(\gamma^{-1}\nnz(A) + \varepsilon^{-3}n^{\gamma}k^{2+o(1)}+k^{\omega}\poly(\log\log(k)))$ for any constant $\gamma$. We then compose with an \textsf{OSNAP} to obtain a subspace embedding with $O(\varepsilon^{-2}k\log(k))$ rows.

\begin{theorem}[Leverage Score Sampling]\label{thm:leverage-score-sampling}
Given a full column rank matrix $A \in \R^{n \times k}$, let $\ell_i^2$ for $i \in [n]$ be the leverage score of the $i$-th row. Let $p \in [0,1]^n$ be a vector of probabilities such that for all $i \in [n]$, $\min(1, r \cdot (\ell_i^2/k)) \ge p_i \ge \min(1, r \cdot \beta \cdot (\ell_i^2/k))$ for some $\beta < 1$, and let the $n \times n$ diagonal random matrix $\bS_{\lev}$ be defined as follows: for each $i \in [n]$, the entry $(\bS_{\lev})_{ii}$ is set to be equal to $1/\sqrt{p_i}$ with probability $p_i$, and is set to be $0$ with probability $1-p_i$. If $r \ge Ck\log(k)/\beta\varepsilon^2$ for an absolute constant $C$, then with probability $\ge 99/100$, for all vectors $x \in \R^{d}$
\begin{equation*}
    \opnorm{\bS_{\lev}Ax}^2 \in (1 \pm \varepsilon)\opnorm{Ax}^2.
\end{equation*}
With probability $\ge 1 - \exp(-\Theta(k))$, the matrix $\bS_{\lev}$ has at most $\Theta(Ck\log(k)/\beta\varepsilon^2)$ nonzero entries. 
\end{theorem}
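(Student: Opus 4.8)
The plan is to reduce the uniform multiplicative guarantee to a single spectral-norm deviation bound for a sum of independent positive semidefinite matrices, and then invoke the matrix Chernoff (equivalently, matrix Bernstein) inequality. Let $A = U\Sigma\T{V}$ be a thin SVD, so $U \in \R^{n \times k}$ has orthonormal columns with $\text{colspan}(U) = \text{colspan}(A)$, $\T{U}U = I_k$, $\ell_i^2 = \opnorm{U_{i*}}^2$, and $\sum_i \ell_i^2 = \frnorm{U}^2 = k$. Since $A$ has full column rank, $\Sigma\T{V}$ is an invertible $k\times k$ map, so as $x$ ranges over $\R^k$ the vector $z := \Sigma\T{V}x$ ranges over all of $\R^k$ and $Ax = Uz$. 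Hence the claim ``$\opnorm{\bS_{\lev}Ax}^2 = (1\pm\varepsilon)\opnorm{Ax}^2$ for all $x$'' is equivalent to ``$\T{z}\,(\T{U}\T{\bS_{\lev}}\bS_{\lev}U)\,z = (1\pm\varepsilon)\opnorm{z}^2$ for all $z$'', i.e.\ to $\opnorm{\T{U}\T{\bS_{\lev}}\bS_{\lev}U - I_k} \le \varepsilon$.

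Next I would write $\T{U}\T{\bS_{\lev}}\bS_{\lev}U = \sum_{i=1}^n \bX_i$ with $\bX_i := (\bS_{\lev})_{ii}^2\,\T{(U_{i*})}U_{i*}$; these are independent positive semidefinite rank-one matrices with $\bX_i = \frac{1}{p_i}\T{(U_{i*})}U_{i*}$ with probability $p_i$ and $\bX_i = 0$ otherwise, so $\E[\bX_i] = \T{(U_{i*})}U_{i*}$ and $\sum_i \E[\bX_i] = \T{U}U = I_k$. Split $[n]$ into the (deterministic) set $D = \{i : p_i = 1\}$ and its complement; it suffices to control $\sum_{i \notin D}(\bX_i - \E[\bX_i])$, since the part indexed by $D$ contributes exactly the same deterministic matrix to $\sum_i \bX_i$ and to $I_k$. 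For $i \notin D$, the lower bound $p_i \ge r\beta\ell_i^2/k$ together with $p_i < 1$ gives $\ell_i^2/p_i \le k/(r\beta)$, hence $\opnorm{\bX_i} \le k/(r\beta)$ and $\opnorm{\bX_i - \E[\bX_i]} \le 2k/(r\beta) =: R$. For the variance, using $(\T{(U_{i*})}U_{i*})^2 = \ell_i^2\,\T{(U_{i*})}U_{i*}$ one gets $\E[\bX_i^2] = \frac{\ell_i^2}{p_i}\T{(U_{i*})}U_{i*} \preceq \frac{k}{r\beta}\T{(U_{i*})}U_{i*}$, so $\sum_{i\notin D}\E[(\bX_i - \E\bX_i)^2] \preceq \sum_{i}\E[\bX_i^2] \preceq \frac{k}{r\beta}\,I_k$, giving variance parameter $\sigma^2 \le k/(r\beta)$.

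Then matrix Bernstein yields $\Pr\big[\opnorm{\sum_{i\notin D}(\bX_i - \E\bX_i)} \ge \varepsilon\big] \le 2k\exp\!\big(-\frac{\varepsilon^2/2}{\sigma^2 + R\varepsilon/3}\big)$, and for $\varepsilon \le 1$ both $\sigma^2$ and $R\varepsilon/3$ are $O(k/(r\beta))$, so the exponent is $-\Omega(\varepsilon^2 r\beta/k)$. Substituting $r \ge Ck\log(k)/(\beta\varepsilon^2)$ makes this at most $-\Omega(C\log k)$, so for a large enough absolute constant $C$ the failure probability is at most $1/100$, which proves the first claim. For the nonzero count, let $N = \sum_i \mathbf{1}[(\bS_{\lev})_{ii} \ne 0]$, a sum of independent $\mathrm{Bernoulli}(p_i)$ variables with $\E[N] = \sum_i p_i \le \sum_i r\ell_i^2/k = r$ by the upper bound on $p_i$ and $\sum_i \ell_i^2 = k$. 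A multiplicative Chernoff bound gives $\Pr[N \ge 2r] \le \exp(-\Omega(r)) = \exp(-\Omega(k\log(k)/(\beta\varepsilon^2))) \le \exp(-\Omega(k))$ for $\varepsilon, \beta \le 1$, and $2r = \Theta(Ck\log(k)/(\beta\varepsilon^2))$, as claimed.

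These steps are all routine; the one point requiring care — rather than a genuine obstacle — is isolating the deterministically sampled rows ($p_i = 1$) before applying the matrix tail bound, since otherwise the uniform per-term norm bound $R$ would have to absorb $\ell_i^2$ (which can be as large as $1$, far exceeding $k/(r\beta)$ once $r > k/\beta$) and the resulting exponent would be useless. One also wants to match the constant in the exponent to whichever form of the matrix Chernoff/Bernstein inequality is cited, but since we only need a constant failure probability any of the standard versions suffices.
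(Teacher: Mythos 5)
The paper does not prove this theorem; it is stated and used as a known result, with the leverage-score-sampling bullet of Lemma~\ref{lem embed} pointing to \cite{rudelson1999random, recht2011simpler}. Your matrix Bernstein argument is exactly the standard proof used in those references, and it is correct, including the key observation that rows with $p_i=1$ must be set aside before applying the tail bound so that the uniform term bound $R$ and the variance bound scale with $k/(r\beta)$. One small write-up slip: in the chain $\sum_{i\notin D}\E[(\bX_i-\E\bX_i)^2] \preceq \sum_{i}\E[\bX_i^2] \preceq \tfrac{k}{r\beta}I_k$, the per-term bound $\E[\bX_i^2]\preceq \tfrac{k}{r\beta}\T{(U_{i*})}U_{i*}$ is only justified for $i\notin D$ (for $i\in D$, $\E[\bX_i^2]=\ell_i^2\,\T{(U_{i*})}U_{i*}$ and $\ell_i^2$ can exceed $k/(r\beta)$); this is harmless since you only need $\sum_{i\notin D}\E[\bX_i^2]\preceq \tfrac{k}{r\beta}\sum_{i\notin D}\T{(U_{i*})}U_{i*}\preceq \tfrac{k}{r\beta}I_k$.
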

The following lemma shows that a subspace embedding $S$ for the column space of a matrix $A$ can be used to compute approximate leverage scores which can be used to perform leverage score sampling as described above to obtain a $1+\varepsilon$ subspace embedding.
\begin{lemma}
\label{lma:embedding-to-apx-leverage-scores}
    If $S$ is a $\beta$ subspace embedding for the column space of a full rank matrix $A \in \R^{n \times k}$ i.e., for any vector $x$,
    \begin{equation*}
        {\opnorm{Ax}} \le \opnorm{S Ax} \le \beta\opnorm{Ax}
    \end{equation*}
    and if $SA = QR^{-1}$ for an orthonormal matrix $Q$, then for all $i \in [n]$,
    \begin{equation*}
        \ell_i^2/\beta^2 \le \opnorm{A_{i*}R}^2 \le \ell_i^2,
    \end{equation*}
    where $\ell_i$ is the leverage score of the $i$-th row of $A$.
\end{lemma}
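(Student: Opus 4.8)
\emph{Proof plan.} The strategy is to factor $A$ through an orthonormal basis and observe that the QR factor $R$ of $\bS A$ acts as a well-conditioned change of basis that exactly undoes the distortion of $S$ on $\text{colspan}(A)$. First I would fix an $n\times k$ matrix $U$ with orthonormal columns and $\text{colspan}(U)=\text{colspan}(A)$; since $A$ has full column rank we may write $A=UT$ for a unique invertible $k\times k$ matrix $T$, and by the preliminaries $\ell_i^2=\opnorm{U_{i*}}^2$. For each row $i$ we then have $A_{i*}R=U_{i*}TR$, so setting $M:=TR$ it suffices to show that every singular value of $M$ lies in $[1/\beta,1]$. Indeed, $\opnorm{A_{i*}R}^2=U_{i*}M\T{M}\,\T{U_{i*}}$, and $\lambda_{\min}(M\T{M})\opnorm{U_{i*}}^2\le U_{i*}M\T{M}\,\T{U_{i*}}\le\lambda_{\max}(M\T{M})\opnorm{U_{i*}}^2$, which combined with $\lambda_{\min}(M\T M)=\sigma_{\min}(M)^2\ge 1/\beta^2$ and $\lambda_{\max}(M\T M)=\sigma_{\max}(M)^2\le 1$ yields $\ell_i^2/\beta^2\le\opnorm{A_{i*}R}^2\le\ell_i^2$.

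The core step is translating the embedding hypothesis into a singular-value bound on $M^{-1}$. From $SA=QR^{-1}$ and $A=UT$ we get $SU=SAT^{-1}=QR^{-1}T^{-1}=Q(TR)^{-1}=QM^{-1}$. Since $T$ is invertible, the embedding inequality $\opnorm{Ax}\le\opnorm{SAx}\le\beta\opnorm{Ax}$ for all $x\in\R^k$ is, after the substitution $x=T^{-1}z$, equivalent to $\opnorm{Uz}\le\opnorm{SUz}\le\beta\opnorm{Uz}$ for all $z\in\R^k$. Because $U$ and $Q$ have orthonormal columns, $\opnorm{Uz}=\opnorm{z}$ and $\opnorm{SUz}=\opnorm{QM^{-1}z}=\opnorm{M^{-1}z}$, so $\opnorm{z}\le\opnorm{M^{-1}z}\le\beta\opnorm{z}$ for all $z$. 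This says exactly that the singular values of $M^{-1}$ lie in $[1,\beta]$, hence those of $M$ lie in $[1/\beta,1]$, which is what the first paragraph needed.

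I do not expect a genuine obstacle here; the argument is entirely the observation that QR-factoring $\bS A$ inverts the distortion of $S$ on the $k$-dimensional column space. The only points requiring care are bookkeeping: one must note that $\bS A$ has full column rank $k$ (immediate from $\opnorm{SAx}\ge\opnorm{Ax}>0$ for $x\ne 0$), so that the thin factorization $\bS A=QR^{-1}$ with $R^{-1}$ square and invertible is well defined, and that the factorization $A=UT$ with invertible $T$ exists (again from full column rank). No quantitative estimate beyond the Rayleigh-quotient bounds above is involved.
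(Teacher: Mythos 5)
Your proof is correct and takes essentially the same approach as the paper's own: the paper factors $AR = UT$ with $U$ orthonormal (so its $T$ is exactly your $M = TR$) and shows directly that $\opnorm{Tx}=\opnorm{ARx}$ has singular values in $[1/\beta,1]$ via the embedding applied to $ARx$, then concludes row-wise from $A_{i*}R=U_{i*}T$. You reach the identical singular-value bound by showing $SU=QM^{-1}$ and bounding $\sigma(M^{-1})\in[1,\beta]$ before inverting, which is a cosmetic reordering of the same computation.
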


The proof of the lemma is in Appendix~\ref{apx:proof-embedding-to-apx-leverage-scores}. Using our fast subspace embedding with $k\poly(\log\log(k))$ rows and $\beta = \epll{k}$, the above lemma shows that if we can compute the values $\opnorm{A_{i*}R}^2$, then we can obtain a $1+\varepsilon$ subspace embedding with $k\cdot \epll{k}/\varepsilon^2$ rows.

Often, the row norms $\opnorm{A_{i*}R}^2$ are approximated with $\opnorm{A_{i*}R\bG}^2$, where $\bG$ is a Gaussian matrix with $O(\log n)$ columns using the fact that for an arbitrary vector $x$, $\opnorm{\T{x}\bG}^2 \in (1/2,2)\opnorm{x}^2$ with probability $1 - 1/\poly(n)$. However, computing the matrix $AR\bG$ takes $O((\nnz(A)+k^2)\log(n))$ time. 

The following simple lemma shows that instead of obtaining constant approximations to $\opnorm{A_{i*}R}^2$ for all the rows by using a Gaussian matrix $\bG$ with $O(\log(n))$ columns, we can use a Gaussian matrix $\bG'$ with only $O(1/\gamma)$ columns to obtain $O(n^\gamma \log(n))$ factor approximations to $\opnorm{A_{i*}R}^2$. We sample the rows using these coarse approximations and then compute constant-factor approximations to $\opnorm{A_{i*}R}^2$ only for the rows that are sampled in the first stage and then reject each of the sampled rows with appropriate probabilities to obtain a leverage score sample.

\begin{lemma}\label{lma:sampling-from-product}
    Let $A \in \R^{n \times d}$ and $R \in \R^{d \times d}$ be such that for any vector $x \in \R^{d}$, the matrix-vector products $ARx, Rx$ can be computed in time at most $T_1$ and $T_2$ respectively. Given parameters $\gamma$ and $s$, there is an algorithm conditioned on an event $\calE$, $\Pr[\calE] \ge 95/100$, that samples indices $i \in [n]$ to obtain a random subset $\bS \subseteq [n]$, such that each $i \in [n]$ is in the set $\bS$ independently with probability $f_i$, where
\begin{equation*}
\min(1, s\frac{\opnorm{A_{i*}R}^2}{\frnorm{AR}^2}) \ge    f_i \ge \min(1, (s/16)\frac{\opnorm{A_{i*}R}^2}{\frnorm{AR}^2}).
\end{equation*}
The algorithm returns the random subset $\bS$ along with the probabilities $f_i$ for $i \in \bS$. The algorithm runs in time $O(\gamma^{-1}T_1 + T_2\log(n) + sdn^{\gamma}\log^2(n))$.
\end{lemma}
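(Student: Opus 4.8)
The plan is a two-stage \emph{over-sample-then-reject} scheme, forced by the fact that forming $AR\bG$ for a Gaussian $\bG$ with $\Theta(\log n)$ columns --- the usual way to estimate every row norm $\opnorm{A_{i*}R}^2$ to a constant factor --- would cost $\Theta(T_1\log n)$, which the budget forbids. Write $q_i := \opnorm{A_{i*}R}^2$ and $F := \frnorm{AR}^2 = \sum_i q_i$; the target is that each $i$ lie in $\bS$ independently with a probability $f_i$ between $\min(1,\tfrac{s}{16}\tfrac{q_i}{F})$ and $\min(1,s\tfrac{q_i}{F})$. First I get cheap \emph{coarse} over-estimates of every $q_i$ (and of $F$) from a rank-$\Theta(1/\gamma)$ Gaussian sketch, use them to draw a mildly inflated candidate set $\bS_1$, and only \emph{then} spend $\Theta(\log n)$ Gaussian columns on the \emph{small} matrix $A_{\bS_1}R$ to obtain constant-factor estimates, from which I reject rows down to the right probabilities.

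\textbf{Stage 1.} Let $\bG'\in\R^{d\times t}$ have i.i.d.\ $N(0,1/t)$ entries with $t=\Theta(1/\gamma)$ (implied constant large enough), form $AR\bG'$ in $O(\gamma^{-1}T_1)$ time, and set $\tilde q_i:=\opnorm{(AR\bG')_{i*}}^2$ and $\hat F:=\frnorm{AR\bG'}^2=\sum_i\tilde q_i$. Using the anti-concentration bound $\Pr[|N(0,\sigma^2)|\le\sigma/n^\gamma]\le n^{-\gamma}$ applied to each of the $t$ independent coordinates of $(AR\bG')_{i*}$, one gets $\tilde q_i\ge q_i\cdot n^{-O(\gamma)}$ for every $i$ except with probability $\le n^{-\gamma t}\cdot n\le 1/n$; and by a Hanson--Wright bound on $\sum_{j\le t}\opnorm{AR\bg_j}^2$ (summands with mean $F/t$, Gram matrix $\T{(AR)}(AR)$ of operator and Frobenius norm $\le F$) one gets $\hat F\in(\tfrac12,2)F$ with at least a fixed constant probability. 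Now include each $i$ independently in $\bS_1$ with probability $g_i:=\min(1,\lambda\tilde q_i/\hat F)$, where $\lambda:=n^{O(\gamma)}s$ is chosen so that, on the two events above, $g_i\ge\min(1,sq_i/F)$ for all $i$. Since $\sum_i\tilde q_i=\hat F$, we have $\E[\,|\bS_1|\mid\bG'\,]=\sum_i g_i\le\lambda$, so running the construction with $\gamma$ replaced by $c\gamma$ for a small constant $c$ gives $|\bS_1|=O(sn^\gamma+\log n)$ with probability $\ge 1-1/n$.

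\textbf{Stage 2 and assembly.} Draw a fresh Gaussian $\bG''\in\R^{d\times\Theta(\log n)}$, compute $R\bG''$ in $O(T_2\log n)$ time, then the restricted product $A_{\bS_1}(R\bG'')$ in $O(|\bS_1|\,d\log n)=O(sdn^\gamma\log^2 n)$ time; a Johnson--Lindenstrauss argument and a union bound over $\bS_1$ give $\hat q_i\in(\tfrac12,2)q_i$ for all $i\in\bS_1$ with probability $\ge 1-1/\poly(n)$. For each $i\in\bS_1$, keep it independently with probability $h_i:=\min\bigl(1,\tfrac{s}{4}\tfrac{\hat q_i}{\hat F}\bigr)\big/g_i$, which is $\le 1$ because $\min(1,\tfrac{s}{4}\hat q_i/\hat F)\le\min(1,sq_i/F)\le g_i$ on the relevant events (using $\hat q_i\le 2q_i$, $\hat F\ge F/2$). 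Let $\calE$ be the intersection of the estimate events ``$\tilde q_i\ge q_in^{-O(\gamma)}$ for all $i$'', ``$\hat F\in(\tfrac12,2)F$'', and ``$\hat q_i\in(\tfrac12,2)q_i$ for all $i\in\bS_1$''; $\calE$ is a function of $(\bG',\bG'')$ only, and a union bound --- with the single constant-probability event dominating --- gives $\Pr[\calE]\ge 95/100$. Conditioned on $(\bG',\bG'')$, the indicators ``$i\in\bS_1$'' and ``$i$ kept'' use disjoint fresh coins and are independent across $i$, so $\Pr[i\in\bS\mid\bG',\bG'']=g_ih_i=\min(1,\tfrac{s}{4}\hat q_i/\hat F)=:f_i$ independently across $i$, and on $\calE$, since $\hat q_i/\hat F\in[\tfrac14 q_i/F,\ 4q_i/F]$,
\[
\min\!\Bigl(1,\tfrac{s}{16}\tfrac{q_i}{F}\Bigr)\ \le\ \min\!\Bigl(1,\tfrac{s}{4}\tfrac{\hat q_i}{\hat F}\Bigr)\ =\ f_i\ \le\ \min\!\Bigl(1,s\tfrac{q_i}{F}\Bigr).
\]
Summing the three stages gives running time $O(\gamma^{-1}T_1+T_2\log n+sdn^\gamma\log^2 n)$.

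\textbf{Main obstacle.} The delicate point is estimating the normalizer $F=\frnorm{AR}^2$: because $AR$ may have stable rank $\Theta(1)$, a constant-factor estimate of $F$ with failure probability $1/\poly(n)$ provably requires $\Omega(\log n)$ Gaussian columns and hence $\Omega(T_1\log n)$ time, which is unaffordable --- this is precisely why the lemma only promises its guarantee on an event $\calE$ of probability $95/100$ rather than high probability, every \emph{other} estimate being high-probability. A secondary but essential structural point is the deferral: $\hat q_i$ is computed only for the $O(sn^\gamma)$ indices of $\bS_1$, never for all $n$ rows, which is what keeps the $\Theta(\log n)$-column Gaussian off the expensive $AR$ product. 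Beyond these, the remaining work is routine bookkeeping: pinning down $\lambda$ and the Hanson--Wright / $\chi^2$ / JL constants so the factor-$16$ window is respected, checking $h_i\le1$, and the $\gamma$-rescaling that turns the $n^{O(\gamma)}$ overhead into $n^\gamma$.
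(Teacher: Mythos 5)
Your proof follows essentially the same over-sample-then-reject scheme as the paper, with the same three Gaussian ingredients: a small $O(1/\gamma)$-column sketch applied to $AR$ for coarse per-row over-estimates, a constant-accuracy Frobenius-norm estimate (the single event that forces the whole lemma to only hold with $95/100$ probability), and an $O(\log n)$-column sketch whose product with $R$ is precomputed in $O(T_2\log n)$ time and then hit against $A_{\bS_1}$ only. Your sole structural variation is that you reuse one matrix $\bG'$ for both the coarse estimates $\tilde q_i$ and the normalizer $\hat F$, so that $\sum_i \tilde q_i = \hat F$ holds identically; this lets you bound $\E[|\bS_1|]\le\lambda$ without the Laurent--Massart upper tail that the paper needs (the paper uses separate $\bG_4$ for row norms and a bilateral $\bG_1 AR\bG_2$ for $\frnorm{AR}^2$, paying a $\log n$ factor in $|\bS_1|$). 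This is a small, clean optimization.

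One bookkeeping slip: with the normalization $\bG'\sim N(0,(1/t)I)$, the anti-concentration lower bound you obtain is $\tilde q_i \ge q_i/(t\,n^{O(\gamma)})$, not $q_i\, n^{-O(\gamma)}$ — each coordinate of $(AR\bG')_{i*}$ has standard deviation $\sqrt{q_i/t}$, so the best single term only contributes $q_i/(t\,n^{O(\gamma)})$. Consequently $\lambda$ must be $\Theta(s\,t\,n^{O(\gamma)}) = \Theta(s\,n^{O(\gamma)}/\gamma)$, not $n^{O(\gamma)}s$, and $\E[|\bS_1|] = O(sn^\gamma/\gamma)$, not $O(sn^\gamma)$. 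This does not break the final running time, since $1/\gamma = O(\log n)$ whenever $n^\gamma$ exceeds a constant (and when $n^\gamma = O(1)$ the $T_1/\gamma$ term dominates), so the Stage-2 cost $|\bS_1|\,d\log n$ is still $O(sdn^\gamma\log^2 n)$ — in fact it exactly recovers the paper's $|\bS_1| = O(sn^\gamma\log n)$. But the constant-$\lambda$ claim as written is off by that $1/\gamma$ factor, and the ``rescale $\gamma\to c\gamma$'' remark does not repair it, since the $1/\gamma$ comes from $t=\Theta(1/\gamma)$ and not from the exponent. Everything else — the $h_i\le 1$ check, the independence across $i$ conditioned on $(\bG',\bG'')$, the factor-16 window on $\calE$ — is correct and matches the paper's argument step for step.
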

\begin{proof}
Let $p_i := \opnorm{A_{i*}R}^2/\frnorm{AR}^2$ for $i \in [n]$. Let $\bG_1$ be a Gaussian matrix with $O(1)$ rows and $n$ columns and $\bG_2$ be a Gaussian matrix with $d$ rows and $O(1)$ columns. We have
\begin{equation*}
\frac12\frnorm{AR}^2 \le \frnorm{\bG_1 AR \bG_2}^2 \le 2\frnorm{AR}^2 \quad {(\text{Event }\calE_1)}
\end{equation*}
with probability $\ge 99/100$. The matrix $\bG_1 AR\bG_2$ can be computed in $O(T_1 + n)$ time. Let $\bG_3$ be a Gaussian matrix with $O(\log(n))$ columns. With probability $\ge 99/100$,
\begin{equation*}
	\text{for all $i \in [n],$}\quad \frac{1}{2}\opnorm{A_{i*}R}^2 \le \opnorm{A_{i*}R\bG_3}^2 \le 2\opnorm{A_{i*}R}^2 \quad \text{(Event $\calE_2$)}.
\end{equation*}
We note that we \emph{do not} compute the matrix $AR\bG_3$ but we only compute the matrix $R\bG_3$ which can be done in time $O(T_2\log(n))$.

Now, let $\bG_4$ be a Gaussian matrix with $t = O(1/\gamma)$ columns. Let $\bg_1, \bg_2,\ldots, \bg_t$ be the columns of the matrix $\bG_4$. For each $i \in [n]$, with probability $\ge 1-1/100n^2$, $\max_{j \in [t]} |\langle A_{i*}R, \bg_j \rangle| \ge \opnorm{A_{i*}R}/n^{\gamma/2}$ using the fact that $|\langle A_{i*}R, \bg_j \rangle|_{j \in [t]}$ are independent half-Gaussians with standard deviation $\opnorm{A_{i*}R}$. By a union bound, with probability $\ge 1 - 1/100n$, for all $i \in [n]$, we have $\opnorm{A_{i*}R\bG_4}^2 \ge \max_{j \in [t]}\langle A_{i*}R, \bg_j\rangle^2 \ge \opnorm{A_{i*}R}^2/n^{\gamma}$. By Lemma~1 of \cite{lm2000}, we also obtain that with probability $\ge 1 - 1/100n$, for all $i \in [n]$, $\opnorm{A_{i*}R\bG_4}^2 \le O(\log(n))\opnorm{A_{i*}R}^2$. Thus, with probability $\ge 1 - 2/100n$, for all $i \in [n]$:
\begin{equation*}
	\frac{\opnorm{A_{i*}R}^2}{n^\gamma} \le \opnorm{A_{i*}R\bG_4}^2 \le C\log(n)\opnorm{A_{i*}R}^2 \quad \text{(Event $\calE_3$)}.
\end{equation*}
We compute $AR\bG_4$ and all squared row norms $\opnorm{A_{i*}R\bG_4}^2$ in time $O(T_1\gamma^{-1})$. Condition on the event $\calE := \calE_1 \cap \calE_2 \cap \calE_3$. We have $\Pr[\calE] \ge 95/100$.

Define $z_i := 2n^{\gamma}\opnorm{A_{i*}R\bG_4}^2/\frnorm{\bG_1AR\bG_2}^2$. We have $4Cn^{\gamma}\log(n)p_i \ge z_i \ge p_i$ and define $q_i := \min(1, sz_i)$. Sample $i \in [n]$ independently, each with probability $q_i$ to obtain a random subset $\bS_1 \subseteq [n]$. If $i \in \bS_1$, compute the value $\opnorm{A_{i*}(R\bG_3)}^2$ in time $O(d\log(n))$ and reject $i$ with probability $1 - \min(1, (s/4)\opnorm{A_{i*}R\bG_3}^2/\frnorm{\bG_1 AR\bG_2}^2)/q_i$. 

We need to show that this procedure is well-defined. We have $(s/4)\opnorm{A_{i*}R\bG_{3}}^2/\opnorm{\bG_1 AR\bG_2}^2 \le (s/4)(4p_i) = sp_i \le sz_i$ which implies that $\min(1, (s/4)\opnorm{A_{i*}R\bG_3}^2/\frnorm{\bG_1 AR\bG_2}^2) \le q_i$ and therefore the  rejection probability as defined is valid. Let $\bS_2$ be the subset obtained after performing the rejection step on $\bS_1$. The probability that a row $i \in \bS_2$ is 
\begin{equation*}
	f_i = q_i \cdot \frac{\min(1, (s/4)\opnorm{A_{i*}R\bG_3}^2/\frnorm{\bG_1 AR\bG_2}^2)}{q_i} \ge \min(1, (s/4)(p_i/4)) = \min(1, (s/16)p_i).
\end{equation*}
We also have that $f_i \le \min(1, sp_i)$. Thus with probability $\exp(-s)$ only $O(s)$ rows survive the rejection.

Now, with probability $\ge 1 - \exp(-s)$, $|\bS_1| = O(\sum_i q_i) = O(sn^{\gamma}\log(n))$  and therefore the squared row norm $\opnorm{A_{i}R\bG_3}^2$ has to be computed only for $O(sn^{\gamma}\log(n))$ rows. Thus, the time complexity of sampling  is $O(\gamma^{-1}T_1 + T_2\log(n) + O(sdn^{\gamma}\log^2(n)))$. Therefore, conditioned on the event $\calE$, the algorithm returns a subset $\bS \subseteq [n]$ sampled from the desired probability distribution in time $O(\gamma^{-1}T_1 + T_2\log(n) + sdn^{\gamma}\log^2(n))$.
\end{proof}

Using these lemmas, the following theorem shows that  Algorithm~\ref{alg:leverage-score-sampling} gives a $1+\varepsilon$ subspace embedding by sampling using approximate leverage scores.
\begin{theorem}
\label{thm:final-epsilon-subspace-embedding}
	Given a full rank matrix $A \in \R^{n \times k}$, a constant $\gamma$ and a parameter $\varepsilon > 0$, we have the following:
	\begin{enumerate}
		\item 	Algorithm~\ref{alg:leverage-score-sampling} computes a matrix $\bS_{\textnormal{lev}} A$ with $\Theta(\varepsilon^{-2}k\cdot \epll{k})$ rows such that with probability $\ge 9/10$, for all vectors $x$,
		\begin{equation*}
			\opnorm{\bS_{\textnormal{lev}} Ax}^2 \in (1 \pm \varepsilon)\opnorm{Ax}^2.
		\end{equation*}
		This matrix $\bS_{\textnormal{lev}} A$ can be computed in time 
		\begin{equation*}
		O(\gamma^{-1}\nnz(A)+ \varepsilon^{-2}n^{\gamma}k^{2+ o(1)}+k^{\omega}\poly(\log\log(k))).			
		\end{equation*}
		\item Composing $\bS_{\textnormal{lev}}$ with the matrix $\bS_{\textsf{OSNAP}}$, an \textsf{OSNAP} with $O(\varepsilon^{-2}k\log(k))$ and at most $O(\varepsilon^{-1}\log(k))$ nonzero entries in each column, we obtain that with probability $\ge 9/10$, for all vectors $x$,
		\begin{equation*}
			\opnorm{\bS_{\textsf{OSNAP}} \cdot \bS_{\textnormal{lev}} \cdot Ax}^2 \in (1 \pm O(\varepsilon))\opnorm{Ax}^2.
		\end{equation*}
		The matrix $\bS_{\textsf{OSNAP}} \cdot (\bS_{\textnormal{lev}}A)$ can be computed in time $O(\varepsilon^{-3}k^{2+o(1)})$ and hence, overall, the matrix $\bS_{\textsf{OSNAP}} \cdot \bS_{\textnormal{lev}} \cdot A$ can be computed in time 
		\begin{equation*}
			O(\gamma^{-1}\nnz(A)+k^{\omega}\poly(\log\log(k)) + \varepsilon^{-3}k^{2+o(1)} + \varepsilon^{-2}n^{\gamma+o(1)}k^{2 + o(1)})
		\end{equation*}
		for any constant $\gamma$.
	\end{enumerate}
\end{theorem}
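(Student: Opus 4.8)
The plan is to follow Algorithm~\ref{alg:leverage-score-sampling} literally: use the $\epll{k}$-distortion embedding of Theorem~\ref{thm:sparse-subspace-embedding} to build crude over-estimates of the leverage scores of $A$, feed $(A,R)$ into the two-stage sampling of Lemma~\ref{lma:sampling-from-product}, conclude via Theorem~\ref{thm:leverage-score-sampling} that the sample is a $(1\pm\varepsilon)$-embedding, and then (for the second bullet) append an \textsf{OSNAP}. \emph{Part 1.} First I would compute $\bS A$ via Theorem~\ref{thm:sparse-subspace-embedding}: it has $k\cdot\epll{k}$ rows, satisfies $\opnorm{Ax}\le\opnorm{\bS Ax}\le\beta_0\opnorm{Ax}$ for all $x$ with $\beta_0=\epll{k}$, and is produced in $O(\gamma^{-1}\nnz(A)+k^{2+\gamma+o(1)})$ time. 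Since $\bS A$ is $(k\epll{k})\times k$ with full column rank, orthogonalizing it (form $\T{(\bS A)}(\bS A)$, take its SVD $V\Sigma^2\T V$, set $R=V\Sigma^{-1}$, so $\bS A=QR^{-1}$ with $Q$ orthonormal) costs $O(k^{\omega}\epll{k})$, and afterwards $Rx$ and $ARx$ cost $O(k^2)$ and $O(\nnz(A)+k^2)$ respectively. By Lemma~\ref{lma:embedding-to-apx-leverage-scores}, $\ell_i^2/\beta_0^2\le\opnorm{A_{i*}R}^2\le\ell_i^2$, so summing and using $\sum_i\ell_i^2=k$ gives $k/\beta_0^2\le\frnorm{AR}^2\le k$; hence $\opnorm{A_{i*}R}^2/\frnorm{AR}^2$ is a $\beta_0^2$-factor approximation of $\ell_i^2/k$. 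I would then run Lemma~\ref{lma:sampling-from-product} on $(A,R)$ with $s=\Theta(\varepsilon^{-2}k\,\epll{k})$ and the given constant $\gamma$, obtaining (on an event of probability $\ge 95/100$) a random set in which each $i$ appears independently with probability $f_i$ satisfying $\min(1,(s/16)\opnorm{A_{i*}R}^2/\frnorm{AR}^2)\le f_i\le\min(1,s\opnorm{A_{i*}R}^2/\frnorm{AR}^2)$, hence $\min(1,(s/16\beta_0^2)\ell_i^2/k)\le f_i\le\min(1,s\beta_0^2\ell_i^2/k)$.

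This is exactly the hypothesis of Theorem~\ref{thm:leverage-score-sampling} with $r:=s\beta_0^2$ and $\beta:=1/(16\beta_0^4)$, and the requirement $r\ge Ck\log(k)/(\beta\varepsilon^2)$ reduces to $s\ge 16Ck\log(k)\beta_0^2/\varepsilon^2$, which holds for $s=\Theta(\varepsilon^{-2}k\epll{k})$ because $\log k=\epll{k}$ and $\beta_0=\epll{k}$. So setting $(\bS_{\lev})_{ii}=1/\sqrt{f_i}$ on the sampled set gives, with probability $\ge 99/100$, a $(1\pm\varepsilon)$ subspace embedding, and with probability $1-\exp(-\Theta(k))$ at most $\Theta(\varepsilon^{-2}k\epll{k})$ nonzero entries, i.e.\ that many rows of $\bS_{\lev}A$. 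The running time is $O(\gamma^{-1}\nnz(A)+k^{2+\gamma+o(1)})$ for $\bS A$, plus $O(k^{\omega}\poly(\log\log k))$ for the orthogonalization, plus the Lemma~\ref{lma:sampling-from-product} cost $O(\gamma^{-1}(\nnz(A)+k^2)+k^2\log n+sk\,n^{\gamma}\log^2 n)$; absorbing $k^{2+\gamma}\le n^{\gamma}k^2$, $\epll{k}=k^{o(1)}$, and $\log^{O(1)}n=n^{o(1)}$ (and re-indexing $\gamma$ if one insists on a bare $n^{\gamma}$), this is $O(\gamma^{-1}\nnz(A)+\varepsilon^{-2}n^{\gamma}k^{2+o(1)}+k^{\omega}\poly(\log\log k))$. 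A union bound over the three random events — whose failure probabilities can each be pushed below any fixed constant, e.g.\ by a constant number of repetitions in the Gaussian norm-estimation steps, without changing the asymptotics — yields overall success probability $\ge 9/10$.

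\emph{Part 2.} Now $\bS_{\lev}A$ has $N=\Theta(\varepsilon^{-2}k\epll{k})$ rows, full column rank $k$, and $\nnz(\bS_{\lev}A)\le Nk=O(\varepsilon^{-2}k^2\epll{k})$. I would take $\bS_{\textsf{OSNAP}}$ to be the $\bar T$ version of \textsf{OSNAP} from Lemma~\ref{lem embed}, applied to the rank-$k$ subspace that is the column span of $\bS_{\lev}A$, with $O(\varepsilon^{-2}k\log N)=O(\varepsilon^{-2}k\log k)$ rows and $O(\varepsilon^{-1}\log N)=O(\varepsilon^{-1}\log k)$ nonzeros per column (here $\log(k/\varepsilon)$ is $O(\log k)$ up to lower-order terms). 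Being a $(1\pm\varepsilon)$-embedding for that span, it gives, for all $x$, $\opnorm{\bS_{\textsf{OSNAP}}\bS_{\lev}Ax}^2=(1\pm\varepsilon)\opnorm{\bS_{\lev}Ax}^2=(1\pm\varepsilon)^2\opnorm{Ax}^2=(1\pm O(\varepsilon))\opnorm{Ax}^2$. Computing $\bS_{\textsf{OSNAP}}(\bS_{\lev}A)$ costs $O(\varepsilon^{-1}\log k\cdot\nnz(\bS_{\lev}A))=O(\varepsilon^{-3}k^2\epll{k})=O(\varepsilon^{-3}k^{2+o(1)})$, so the overall time is $O(\gamma^{-1}\nnz(A)+k^{\omega}\poly(\log\log k)+\varepsilon^{-3}k^{2+o(1)}+\varepsilon^{-2}n^{\gamma+o(1)}k^{2+o(1)})$, with the success probability again following by a union bound.

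\emph{Main obstacle.} The one delicate point is tracking how the $\beta_0=\epll{k}$ distortion of the sparse embedding compounds: it enters $\opnorm{A_{i*}R}^2$ once through Lemma~\ref{lma:embedding-to-apx-leverage-scores} and a second time through $\frnorm{AR}^2$, so the sampling weights $\opnorm{A_{i*}R}^2/\frnorm{AR}^2$ are only $\beta_0^2$-accurate; combined with the constant slack of Lemma~\ref{lma:sampling-from-product} and Theorem~\ref{thm:leverage-score-sampling}'s requirement $r\ge Ck\log(k)/(\beta\varepsilon^2)$, the oversampling $s$ must grow by a $\beta_0^4\log k$ factor. The observation that keeps the row count at $\varepsilon^{-2}k\cdot\epll{k}$ rather than acquiring a genuine $\log k$ is simply that $\log k=\epll{k}$, so all of these accumulated factors stay within $\epll{k}$. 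The remaining pieces — the orthogonalization cost, the matrix-vector-product time bounds, the \textsf{OSNAP} composition, and driving the several failure probabilities below a constant — are routine.
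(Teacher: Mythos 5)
Your proposal follows the paper's own proof essentially step for step: build the crude $\epll{k}$-distortion embedding of Theorem~\ref{thm:sparse-embedding}, orthogonalize to get $R$, invoke Lemma~\ref{lma:embedding-to-apx-leverage-scores} to turn $\opnorm{A_{i*}R}^2$ into leverage-score approximations, feed $(A,R)$ into the two-stage sampling of Lemma~\ref{lma:sampling-from-product}, verify the hypotheses of Theorem~\ref{thm:leverage-score-sampling}, and compose with an \textsf{OSNAP} for the second bullet — the key observation that $\log k$ and $\beta_0^{O(1)}$ all stay within $\epll{k}$ is exactly how the paper keeps the row count and oversampling under control. One minor slip to fix: you state that $\bS A$ from Theorem~\ref{thm:sparse-subspace-embedding} has $k\cdot\epll{k}$ rows, which would make the QR/SVD step cost $O(k^\omega\epll{k})$ rather than the $O(k^\omega\poly(\log\log k))$ the theorem claims; in fact that theorem produces a sketch with $m = k\cdot\poly(\log\log k)$ rows (it is the \emph{distortion}, not the row count, that is $\epll{k}$), so the orthogonalization cost is indeed $O(k^\omega\poly(\log\log k))$ as you later (correctly) use in your final tally. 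You are also slightly more careful than the paper's own write-up in explicitly checking both the upper and lower bound on $f_i$ required by Theorem~\ref{thm:leverage-score-sampling}, which is a welcome bit of extra rigor, but it does not change the route.
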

\begin{proof}
From Theorem~\ref{thm:sparse-embedding}, we have a subspace embedding $\bS_{\fast}$ with $O(k\poly(\log\log k))$ rows and distortion $\epll{k}$  that can be applied to matrix $A$ in time $O(\gamma^{-1}\nnz(A) + k^{2+\gamma+o(1)})$ for any constant $\gamma > 0$. Compute the matrices $Q, R^{-1}$ such that $Q$ has orthonormal columns and $\bS_{\fast}A = QR^{-1}$ which can be done in time $O(k^{\omega}\poly(\log\log(k)))$. By Lemma~\ref{lma:embedding-to-apx-leverage-scores}, we have
\begin{equation*}
\frac{\ell_i^2}{\epll{k}} \le \opnorm{A_{i*}R}^2 \le \ell_i^2
\end{equation*}
which implies, using the fact $\sum_i \ell_i^2 = k$, that
\begin{equation*}
\frac{\ell_i^2}{k\cdot \epll{k}}	\le \frac{\opnorm{A_{i*}R}^2}{\frnorm{AR}^2}.
\end{equation*}
Using Lemma~\ref{lma:sampling-from-product}, conditioned on the event $\calE$, we can sample a random subset $\bS$ along with probabilities $f_i$ for $i \in \bS$ such that each $i \in [n]$ is independently in the subset $\bS$ with probability $f_i$,
\begin{equation*}
	f_i \ge \min(1, (s/4) \cdot \frac{\opnorm{A_{i*}R}^2}{\frnorm{AR}^2}) \ge \min(1, (s/4) \cdot \frac{\ell_i^2}{k\cdot \epll{k}}).
\end{equation*}
For $s = \Theta(k\log(k)\exp(\poly(\log \log k))/\varepsilon^2)$, we have $f_i \ge \min (1, C\ell_i^2\log(k)/\varepsilon^2)$ which implies that the matrix $\bS_{\lev}$ constructed by Algorithm~\ref{alg:leverage-score-sampling} is a $1+\varepsilon$ subspace embedding, with probability $\ge 9/10$, for the column space of $A$ by Theorem~\ref{thm:leverage-score-sampling}. In the notation of Lemma~\ref{lma:sampling-from-product}, for the matrices $A$ and $R$, $T_1 = \nnz(A) + k^2$ and $T_2 = k^2$. Thus, the sampling process runs in time 
\begin{align*}
	O(\gamma^{-1}\nnz(A) + k^{2}\log(n) + \varepsilon^{-2}n^{\gamma}k^2\exp(\poly(\log \log k))) &= O(\gamma^{-1}\nnz(A) + \varepsilon^{-2}n^{\gamma+o(1)}k^{2+o(1)}).
\end{align*}
Thus, overall, in time $O(\gamma^{-1}\nnz(A) + \varepsilon^{-2}n^{\gamma+o(1)}k^{2+ o(1)} + k^{\omega}\poly(\log\log k))$, we can compute a leverage score sampling matrix $\bS_{\lev}$ with $O(\varepsilon^{-2}k\exp(\log \log k))$ rows such that for all $x \in \R^k$,
\begin{equation*}
	\opnorm{\bS_{\lev}Ax}^2 \in (1 \pm \varepsilon)\opnorm{Ax}^2.
\end{equation*}

As $\nnz(\bS_{\lev}A) \le (\varepsilon^{-1}k)^2\epll{k}$, the \textsf{OSNAP} embedding $\bS_{\textsf{OSNAP}}$ can be applied to $\bS_{\lev}A$ in $O(\varepsilon^{-3}k^2\epll{k})$ time and the fact that $\bS_{\textsf{OSNAP}} \cdot \bS_{\lev}$ is a subspace embedding follows from the composability. Thus, we can compute $\bS_{\textsf{OSNAP}} \cdot \bS_{\lev} \cdot A$ which has $O(\varepsilon^{-2}k\log k)$ rows in $O(\gamma^{-1}\nnz(A) + k^{\omega}\poly(\log\log k) + \varepsilon^{-3}k^{2+o(1)}+\varepsilon^{-2}n^{\gamma+o(1)}k^{2+o(1)})$ time.

\end{proof}

\subsection{Linear Regression}
Let $A \in \R^{n \times k}$ and $b \in \R^{n}$. By the linear regression problem $(A,b)$, we mean $\min_x \opnorm{Ax-b}$ and $\text{OPT}(A,b)$ denotes the optimum value of this problem. We prove the following theorem.
\begin{theorem}
\label{thm:linear-regression}
	Given a full-rank matrix $A \in \R^{n \times k}$ and $b \in \R^n$, we obtain a solution $x^*$ such that
	\begin{equation*}
		\opnorm{Ax^* - b} \le (1+\varepsilon)\OPT(A,b)
	\end{equation*}
	in time $O(\gamma^{-1}\nnz(A) + \varepsilon^{-3}n^{\gamma+o(1)}k^{2+o(1)} + k^{\omega}\poly(\log\log k))$ for any constant $\gamma.$
\end{theorem}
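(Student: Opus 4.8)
The plan is a \emph{sketch--precondition--refine} pipeline that reads $A$ only $O(\gamma^{-1})$ times and performs all remaining work on matrices of $\poly(k)$ size. At the end we rescale $\varepsilon$ by a small constant. We may assume $b\notin\mathrm{colspan}(A)$: otherwise $\OPT(A,b)=0$ and the same pipeline, run with a subspace embedding for the $k$-dimensional space $\mathrm{colspan}(A)$, recovers an exact solution.

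\textbf{Preconditioner.} First run the fast embedding $\bS_{\fast}$ of Theorem~\ref{thm:sparse-subspace-embedding} on $A$ -- time $O(\gamma^{-1}\nnz(A)+k^{2+\gamma+o(1)})$, with $k\poly(\log\log k)$ rows and distortion $\epll{k}$ -- and compute $\bS_{\fast}A=QR^{-1}$ with $Q$ having orthonormal columns, in time $O(k^{\omega}\poly(\log\log k))$, which is affordable since $\bS_{\fast}A$ has only $k\poly(\log\log k)$ rows. Because $\opnorm{ARx}\le\opnorm{\bS_{\fast}ARx}=\opnorm{Qx}=\opnorm{x}\le\epll{k}\opnorm{ARx}$ for all $x$, the $k\times k$ matrix $R$ preconditions $\mathrm{colspan}(A)$: $\kappa(AR)\le\epll{k}$.

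\textbf{$(1+\varepsilon)$ sketch of $[A\mid b]$.} Apply Theorem~\ref{thm:final-epsilon-subspace-embedding} (part 2, \textsf{OSNAP} composed with the leverage-score sampling of Algorithm~\ref{alg:leverage-score-sampling}) to the $n\times(k+1)$ matrix $[A\mid b]$, which has $\le 2\nnz(A)$ nonzeros (using $\nnz(A)\ge n$) and rank $k+1$. This yields, with probability $\ge 9/10$, a matrix $\bS$ with $m=O(\varepsilon^{-2}k\log k)$ rows with $\opnorm{\bS v}=(1\pm\varepsilon)\opnorm{v}$ simultaneously for all $v\in\mathrm{colspan}([A\mid b])$; in particular $\opnorm{\bS(Ax-b)}=(1\pm\varepsilon)\opnorm{Ax-b}$ for every $x$, since $Ax-b$ lies in that span. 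By Theorem~\ref{thm:final-epsilon-subspace-embedding} this costs $O(\gamma^{-1}\nnz(A)+\varepsilon^{-3}k^{2+o(1)}+\varepsilon^{-2}n^{\gamma+o(1)}k^{2+o(1)}+k^{\omega}\poly(\log\log k))$, and the embedding $\bS_{\fast}$ and factor $R$ that Algorithm~\ref{alg:leverage-score-sampling} needs internally are exactly those from the previous step. Write $M_A:=\bS A\in\R^{m\times k}$ and $M_b:=\bS b\in\R^m$.

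\textbf{Refinement on the small system.} It remains to $(1+\varepsilon)$-approximately solve $\min_x\opnorm{M_A x-M_b}$; a direct QR on the $m\times k$ matrix $M_A$ would cost $\varepsilon^{-2}k^{\omega}\log k$, too much, so instead solve the equivalent problem $\min_y\opnorm{(M_A R)y-M_b}$ by a preconditioned iterative least-squares method (LSQR / CG on the normal equations), started from $\hat y_0=\T{(M_A R)}M_b$. Since $\bS$ is a $(1+\varepsilon)$-embedding of $\mathrm{colspan}(A)$ and $\kappa(AR)\le\epll{k}$, we get $\kappa(M_A R)\le\epll{k}=k^{o(1)}$, so the standard analysis gives a residual within a $(1+\varepsilon)$ factor of the optimum after $t=O\big(\sqrt{\kappa(M_A R)}\,\log(\kappa(M_A R)/\varepsilon)\big)=k^{o(1)}\log(1/\varepsilon)$ iterations, each a multiplication by $M_A R$ and its transpose costing $O(mk)=O(\varepsilon^{-2}k^{2}\log k)$, for a total of $O(\varepsilon^{-2}k^{2+o(1)}\log(1/\varepsilon))\le O(\varepsilon^{-3}k^{2+o(1)})$. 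Output $x^*=R\hat y$. Combining the $(1\pm\varepsilon)$ distortion of $\bS$ on residuals, the $(1+\varepsilon)$-optimality of $\hat y$ for $\min_y\opnorm{(M_A R)y-M_b}=\min_x\opnorm{\bS(Ax-b)}$, and $\min_x\opnorm{\bS(Ax-b)}=(1\pm\varepsilon)\OPT(A,b)$ gives $\opnorm{Ax^*-b}\le\frac{(1+\varepsilon)^2}{1-\varepsilon}\OPT(A,b)$; rescaling $\varepsilon$ and summing the three running times proves the theorem.

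\textbf{Main obstacle.} The crux is the refinement step: one must (i) insist that the preconditioner $R$ come from the \emph{low-row-count} embedding $\bS_{\fast}$ rather than from $\bS$ (whose QR is too expensive) and verify it still preconditions $M_A$, and (ii) carry out the standard-but-technical CG/LSQR analysis converting the solution-error contraction rate $1-1/\sqrt{\kappa}$ into a $(1+\varepsilon)$-\emph{multiplicative} guarantee on the \emph{residual}, with a starting iterate good enough that the logarithmic factor contributes only $k^{o(1)}\log(1/\varepsilon)$. Everything else is bookkeeping: checking $A$ is read $O(\gamma^{-1})$ times and that the running-time terms collapse to the claimed bound.
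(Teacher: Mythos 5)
Your overall plan matches the paper's: precondition with $R$ from the low-row-count embedding $\bS_{\fast}$, sketch $[A\mid b]$ down to $m=O(\varepsilon^{-2}k\log k)$ rows with Theorem~\ref{thm:final-epsilon-subspace-embedding}, and iteratively solve $\min_y\opnorm{\bS AR\,y-\bS b}$ on the preconditioned small system. The gap is in the warm start.

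You take $\hat y_0=\T{(M_AR)}M_b$ and claim $t=O\bigl(\sqrt{\kappa}\log(\kappa/\varepsilon)\bigr)$ iterations suffice. But CG/LSQR's guarantee is multiplicative in the \emph{excess} residual: writing $\rho=\frac{\sqrt\kappa-1}{\sqrt\kappa+1}$ and $r_t=\opnorm{M_AR\,y_t-M_b}$, one has $r_t^2-\OPT^2\le 4\rho^{2t}\bigl(r_0^2-\OPT^2\bigr)$, so to reach $r_t\le(1+\varepsilon)\OPT$ you need $t=\Omega\bigl(\sqrt\kappa\,\log\!\bigl(\tfrac{r_0^2-\OPT^2}{\varepsilon\,\OPT^2}\bigr)\bigr)$. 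For $\hat y_0=\T{(M_AR)}M_b$, writing $M_b=\sum_i\alpha_iu_i+r$ in the left singular basis of $M_AR$, one computes $r_0^2-\OPT^2=\sum_i(\sigma_i^2-1)^2\alpha_i^2$, which is $\Theta\bigl(\opnorm{P_{\mathrm{col}(M_AR)}M_b}^2\bigr)$ since the $\sigma_i$ lie roughly in $[1/\epll{k},1]$. The ratio $\opnorm{P M_b}/\OPT(\bS A,\bS b)$ is the cotangent of the angle between $\bS b$ and $\mathrm{colspan}(\bS A)$ and can be unboundedly large (e.g.\ when $b$ is nearly in $\mathrm{colspan}(A)$), so this warm start does \emph{not} give $t=k^{o(1)}\log(1/\varepsilon)$, and the running-time bound breaks. (Starting from $y_0=0$ has the same problem.)

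The paper avoids this by constructing a warm start whose residual is within an $\epll{k}$ \emph{multiplicative} factor of $\OPT(\bS A,\bS b)$: it exactly solves the tiny fast-sketched problem $x^{(0)}=(\bS_{\fast}A)^+(\bS_{\fast}b)$ (cheap because $\bS_{\fast}A$ has only $k\poly(\log\log k)$ rows, and one reuses the $QR$ factorization already computed for $R$), and sets $x_{\mathrm{start}}=R^{-1}x^{(0)}$. Then
\begin{equation*}
\opnorm{\bS AR\,x_{\mathrm{start}}-\bS b}\le(1+\varepsilon)\opnorm{\bS_{\fast}Ax^{(0)}-\bS_{\fast}b}\le(1+\varepsilon)\opnorm{\bS_{\fast}Ax_{\bS}-\bS_{\fast}b}\le\epll{k}\,\OPT(\bS A,\bS b),
\end{equation*}
after which $\log(\epll{k}/\varepsilon)=k^{o(1)}\log(1/\varepsilon)$ is all you pay in iterations (the paper runs $O(\epll{k}/\varepsilon)$ plain gradient-descent steps; your $\sqrt\kappa$ CG speedup is fine too and is a minor variant). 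If you replace your $\hat y_0$ with $R^{-1}(\bS_{\fast}A)^+\bS_{\fast}b$ and carry out this bound, your argument matches the paper's; as written, it does not go through.
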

\begin{proof}
	We first find a $1+\varepsilon$ subspace embedding $\bS$ for the column space of $[A, b]$. From Theorem~\ref{thm:final-epsilon-subspace-embedding}, $\bS A$ and $\bS b$ can be computed in at most $O(\gamma^{-1}\nnz(A) + \varepsilon^{-3}n^{\gamma+o(1)}k^{2+o(1)}+k^{\omega}\poly(\log\log k))$ time. We can also compute a preconditioner $R$ using the fast subspace embedding from Theorem~\ref{thm:sparse-embedding} such that 
\begin{equation*}
	\kappa(AR) = \epll{k}
\end{equation*}
by first computing $\bS_{\text{fast}}A = QR^{-1}$ and then inverting $R^{-1}$ to obtain $R$. The matrix $R$ can be computed in time $O(\gamma^{-1}\nnz(A) + k^{2+\gamma+o(1)} + k^{\omega}\poly(\log\log(k)))$ for any constant $\gamma$. We also have that
\begin{equation*}
	\kappa(\bS AR) = \epll{k}. 
\end{equation*}
Let $x^*$ be a solution such that
$
	\opnorm{\bS ARx^* - \bS b} \le (1+\varepsilon)\min_{x}\opnorm{\bS ARx - \bS b}.
$
Then, we have
\begin{equation*}
	\opnorm{ARx^* - b} \le \frac{1}{1-\varepsilon}\opnorm{\bS ARx^* - \bS b} \le \frac{1+\varepsilon}{1-\varepsilon}\opnorm{\bS Ax_{\text{opt}} - \bS b} \le \frac{(1+\varepsilon)^2}{1-\varepsilon}\opnorm{Ax_{\text{opt}} - b}.
\end{equation*}
Thus, $Rx^*$ is a $1+O(\varepsilon)$ approximate solution for the linear regression problem $(A,b)$. Now, we focus on obtaining a $1+\varepsilon$ approximate solution for the regression problem $(\bS AR, \bS b)$. 

We first compute an approximate solution for the regression problem as follows: let $\bS_{\text{fast}}$ be the subspace embedding with $k \poly(\log\log(k))$ rows for the column space of $[A, b]$. Let $x^{(0)} = (\bS_{\text{fast}} A)^{+} (\bS_{\text{fast}} b)$. This solution can be computed in time $O(\nnz(A) + k^{2+\gamma+o(1)} + k^{\omega}\poly(\log\log(k)))$. Let $x_{\text{start}} = R^{-1}x^{(0)}$ which can also be computed in time $O(k^2)$. Now, we have
\begin{equation*}
	\opnorm{\bS ARx_{\text{start}} - \bS b} \le (1+\varepsilon)\opnorm{ARx_{\text{start}} - b} = (1+\varepsilon)\opnorm{Ax^{(0)} - b} \le (1+\varepsilon)\opnorm{\bS_{\text{fast}}Ax^{(0)} - \bS_{\text{fast}} b}.
\end{equation*}
Let $x_{\bS}$ be the optimal solution for the regression problem $(\bS A, \bS b)$. By optimality of $x^{(0)}$ for the regression problem $(\bS_{\text{fast}} A, \bS_{\text{fast}}b)$, we have
\begin{align*}
	\opnorm{\bS ARx_{\text{start}} - \bS b} &\le (1+\varepsilon)\opnorm{\bS_{\text{fast}} Ax^{(0)} - \bS_{\text{fast}}b}\\
	&\le (1+\varepsilon)\opnorm{\bS_{\text{fast}} Ax_{\bS} - \bS_{\text{fast}} b}\\
	&\le (1+\varepsilon) \cdot \epll{k} \cdot \opnorm{Ax_{\bS} - b}\\
	&\le \epll{k} \cdot \text{OPT}((\bS A, \bS b)).
\end{align*}
Thus, $x_{\text{start}}$ is an $\epll{k}$ approximate solution for the linear regression problem $(\bS AR, \bS b)$. Using the solution $x_{\text{start}}$, we can obtain a $1+\varepsilon$ approximate solution in $O(\epll{k}/\varepsilon)$ iterations of gradient descent where each iteration can be performed in time $O(k^2\log(k)/\varepsilon^2)$. Thus, overall, in time 
\begin{equation*}
	O(\gamma^{-1}\nnz(A) + \varepsilon^{-3}n^{\gamma+o(1)}k^{2+o(1)} + k^{\omega}\poly(\log\log k)),
\end{equation*}
we can compute a $1+O(\varepsilon)$ approximate solution for the linear regression problem $(A,b)$.
\end{proof}

\subsection{Rank Computation and Independent Row Selection}
We give an algorithm to compute a maximal set of independent rows of an $n \times n$ matrix $A$ of rank $k = n^{\Omega(1)}$ in time $O(\gamma^{-1}\nnz(A) + k^{2+\gamma+o(1)} + k^{\omega}\poly(\log\log(k)))$ for any constant $\gamma > 0$, improving upon the earlier running time of $O((\nnz(A) + k^{\omega})\log(k))$ from \citet{CKL13} for any constant $\omega > 2$. 
\begin{Definition}[Rank Preserving Sketches]
A distribution $\calS$ over $z_S \times n$ matrices  is a rank preserving sketch if there exists a constant $c$ such that for $\bS \sim \calS$, with high probability, for a given matrix $A \in \R^{n \times d}$, $\min{(\rank(\bS A), z_S/c)} = \min{(\rank(A), z_S/c)}$ i.e., multiplying $A$ with the matrix $\bS$ preserves the rank if $\rank(A) \le z_S/c$. 
\end{Definition}

\begin{theorem}[\cite{CKL13}]
	There are rank-preserving sketching distributions as above with $c = 11$ such that
\begin{itemize}
	\item $\bS A$ can be computed in $O(\nnz(A))$ time
	\item $\bS$ has at most $2$ nonzero entries in a column
	\item $\bS$ has at most $2n/z_S$ nonzero entries in a row
\end{itemize}
\label{thm:ckl-rank-preserving}
\end{theorem}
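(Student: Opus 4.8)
My plan is to make the sketch explicit and reduce rank preservation to a bipartite matching statement. Take $\bS\in\R^{z_S\times n}$ to be the sum of two independent sparse pieces: pick two balanced random partitions $\pi_1,\pi_2\colon[n]\to[z_S]$ (so every bucket $j$ has $|\pi_b^{-1}(j)|\le\lceil n/z_S\rceil$), draw independent values $\sigma_i^{(1)},\sigma_i^{(2)}$ from a continuous distribution, and let column $i$ of $\bS$ have the value $\sigma_i^{(1)}$ in row $\pi_1(i)$ and $\sigma_i^{(2)}$ in row $\pi_2(i)$ (and $0$ elsewhere). Then each column has at most $2$ nonzeros, each row $j$ has at most $|\pi_1^{-1}(j)|+|\pi_2^{-1}(j)|\le 2n/z_S$ nonzeros, and $\bS A$ is formed in $O(\nnz(A)+n)=O(\nnz(A))$ time by scattering each entry $A_{i\ell}$ into its two target rows (using $\nnz(A)\ge n$ and $\nnz(\bS A)\le 2\nnz(A)$); in the real-RAM model of the paper the $\sigma$'s may be treated as exact reals. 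So only rank preservation remains.

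\textbf{Reduction to a matching.} Fix $A$ with $\rank(A)=r\le z_S/11$ and an $n\times r$ matrix $V$ with the same column space; since $\rank(\bS A)=\rank(\bS V)\le r$ always, we must show $\rank(\bS V)=r$ with high probability. Now $\rank(\bS V)=r$ iff some $r\times r$ minor $\det((\bS V)_R)$ is nonzero, and by Cauchy--Binet $\det((\bS V)_R)=\sum_{|I|=r}\det((\bS_R)^{I})\det(V_I)$, a polynomial in the $\sigma$'s once the partitions are fixed. Expanding $\det((\bS_R)^{I})$ by the Leibniz formula, its surviving terms are exactly the bijections $\phi\colon I\to R$ with $\phi(i)\in\{\pi_1(i),\pi_2(i)\}$, so $\det((\bS_R)^{I})\not\equiv 0$ precisely when $I$ has a perfect matching into $R$ in the degree-$2$ bipartite graph $G$ given by $i\sim\pi_1(i),\pi_2(i)$; moreover every monomial of $\det((\bS_R)^{I})$ uses exactly one $\sigma$-variable from each column in $I$ and none from outside $I$, so distinct $I$ contribute monomials of disjoint support and cannot cancel. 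Hence $\det((\bS V)_R)\not\equiv 0$ — and therefore is nonzero almost surely over the continuous $\sigma$'s — as soon as there is an $r$-set $I$ with $\det(V_I)\ne 0$ that matches into some $R$ in $G$. Since $\mathcal I:=\{I:\det(V_I)\ne 0\}$ is nonempty, $\Pr[\rank(\bS V)<r]\le\Pr_G[\text{no }I\in\mathcal I\text{ matches into }[z_S]]\le\Pr_G[I_0\text{ does not match into }[z_S]]$ for any fixed $I_0\in\mathcal I$, and by the column-symmetry of the construction this last quantity depends only on $|I_0|=r$, so we may take $I_0=\{1,\dots,r\}$.

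\textbf{Matching bound.} It remains to bound the probability that a \emph{fixed} left-set $L=\{1,\dots,r\}$ has no perfect matching into $[z_S]$ in $G$. By Hall's theorem this forces $|N(I')|<|I'|$ for some $I'\subseteq L$, so a union bound gives at most $\sum_{m=2}^{r}\binom{r}{m}\binom{z_S}{m-1}\Pr[N(I')\subseteq T_0]$ for a fixed $T_0$ of size $m-1$; negative association of the balanced partitions yields $\Pr[N(I')\subseteq T_0]\le(|T_0|/z_S)^{2m}$, and with $\binom{r}{m}\le(er/m)^m$, $\binom{z_S}{m-1}\le(ez_S/(m-1))^{m-1}$ the $m$-th term is $O\!\big(\tfrac{m}{z_S}(e^2r/z_S)^{m-1}\big)$. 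Since $z_S\ge 11r>e^2r$, the geometric factor $e^2r/z_S<1$ decays and the whole sum is $O(1/r)$, which is $o(1)$ whenever $r\to\infty$ (in particular for $r=n^{\Omega(1)}$ as in all our applications; for bounded $r$ one either pads or enlarges the constant $c$ to push the failure probability below any prescribed $1/\poly$). This gives all three bulleted properties with $c=11$.

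\textbf{Main obstacle.} I expect the crux to be the middle step: establishing the exact equivalence ``rank is preserved $\iff$ $G$ matches some coordinate set on which $V$ is nonsingular,'' which requires both the Cauchy--Binet expansion and the monomial-support argument to exclude cancellations, together with the easy-to-botch exchangeability reduction to a single fixed $r$-set. Once that is in place the Hall/union-bound estimate and the choice $c=11>e^2$ are routine, as is checking the sparsity and $O(\nnz(A))$ running time of the construction.
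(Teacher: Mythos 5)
The paper does not prove this statement — it is cited verbatim from CKL13 (their Lemma on ``magical graphs'' and its use to build rank-preserving sketches) — so there is no internal proof to compare against. Your argument is essentially a reconstruction of that CKL13 approach: a random degree-$2$ bipartite graph on $[n]\times[z_S]$ with generic edge weights, Cauchy--Binet plus Leibniz to reduce rank preservation to the existence of a left-perfect matching from some coordinate set $I$ with $\det(V_I)\ne 0$, the observation that distinct $I$ contribute disjoint monomial supports so there is no cancellation, exchangeability to reduce to a fixed $I_0=\{1,\dots,r\}$, and then Hall plus a union bound to control the probability of a matching failure. All of this is sound, and your derived failure probability $O(1/r)$ is in fact exactly what the paper later invokes in the proof of Lemma~\ref{lma:row-reduction} (``with probability $\ge 1-O(1/k)$''), so it is the right bound for how the theorem is used here.

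Two points are worth tightening. First, the parenthetical remedy for small $r$ is not right as stated: enlarging the constant $c$ only improves your bound by a constant factor, since the $m=2$ term already contributes $\Theta(r^2/z_S^3)=\Theta(1/(c^3 r))$, so it stays $\Theta(1/r)$ rather than dropping to any prescribed $1/\mathrm{poly}(n)$; the paper sidesteps the small-$k$ regime entirely in Theorem~\ref{thm rank est} (case~1), so nothing is actually lost, but the claim as written is wrong. Second, you only treat $\rank(A)\le z_S/c$; the definition also requires $\rank(\bS A)\ge z_S/c$ when $\rank(A)>z_S/c$, which follows by restricting $V$ to a rank-$\lfloor z_S/c\rfloor$ column subset and reapplying your argument, but it deserves a sentence. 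A very minor third point: if $\pi_1(i)=\pi_2(i)$ then that column of $\bS$ has a single entry $\sigma_i^{(1)}+\sigma_i^{(2)}$ rather than two monomial variables, so the ``one $\sigma$-variable per column'' claim in the no-cancellation step needs a word (e.g.\ condition on $\pi_1(i)\ne\pi_2(i)$, or observe that the relevant polynomial identity still holds after a linear change of variables). None of these affect the overall soundness or the choice $c=11>e^2$.
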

They use rank preserving sketches to give an algorithm to compute the rank of an arbitrary matrix and an algorithm to compute a maximal set of linearly independent rows of the matrix. 
\begin{theorem}[Theorem~2.6 of \cite{CKL13}]
    Let $A \in \R^{n \times d}$ be an arbitrary matrix with $n \ge d$. There is a randomized algorithm to compute $k = \rank(A)$ in time $O(\nnz(A) \log(k) + \min(k^{\omega} , k \cdot \nnz(A)))$ with failure probability at most $O(1/n^{1/3})$. There is also an algorithm to find $k$ linearly independent rows of the matrix $A$ in time $O((\nnz(A) + k^{\omega})\log(n))$ with failure probability at most $O(\log(n)/n^{1/3})$.
    \label{thm:ckl-rank-computation}
\end{theorem}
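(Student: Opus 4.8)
For the rank computation, the plan is to combine a geometric search over the unknown rank with the rank-preserving sketches of Theorem~\ref{thm:ckl-rank-preserving} applied on \emph{both} sides of $A$. I would try guesses $\hat{k} = 1, 2, 4, 8, \dots$; for a guess $\hat{k}$, draw a rank-preserving sketch $\bS$ with $\Theta(\hat{k})$ rows, form $\bS A$ in $O(\nnz(A))$ time, then apply a second rank-preserving sketch to the columns of $\bS A$ to obtain an $O(\hat{k}) \times O(\hat{k})$ matrix $M$, again in $O(\nnz(A))$ time since $\bS$ has at most two nonzeros per column and hence $\nnz(\bS A) \le 2\nnz(A)$. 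Compute $\rank(M)$ exactly in $O(\hat{k}^\omega)$ time. Two applications of the rank-preserving guarantee give $\min(\rank(M), \hat{k}) = \min(\rank(A), \hat{k})$, so if $\rank(M) < \hat{k}$ then $\rank(A) = \rank(M)$ and I stop, and otherwise $\rank(A) \ge \hat{k}$ and I double $\hat{k}$. The process ends at $\hat{k} = \Theta(k)$ after $O(\log k)$ rounds, costing $O(\nnz(A)\log k)$ for the sketching and $\sum_{\hat{k}} O(\hat{k}^\omega) = O(k^\omega)$ for the exact computations. To get the $k\cdot\nnz(A)$ alternative in the $\min$, I would instead compute the Gram matrix $(\bS A)\T{(\bS A)}$, which is $O(\hat{k}) \times O(\hat{k})$, has the same rank as $\bS A$ over $\R$, and can be formed in $O(\hat{k}\cdot\nnz(A))$ time; this is the better choice when $A$ is so sparse that $\nnz(A) < k^{\omega-1}$. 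The failure probability is the per-sketch failure probability of Theorem~\ref{thm:ckl-rank-preserving}, which is $O(1/n^{1/3})$ for a suitable setting, union-bounded over the $O(\log k)$ rounds.

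For finding $k$ linearly independent rows, I would first reduce the number of columns once and then repeatedly shrink the candidate set of rows. Multiply $A$ on the right by a rank-preserving sketch $\bR$ with $\Theta(k)$ columns; since $\rank(A\T{\bR}) = \rank(A) = k$, the map $\T{\bR}$ is injective on the $k$-dimensional row space of $A$, so a subset of rows of $A$ is linearly independent if and only if the corresponding subset of rows of $A\T{\bR}$ is. Thus I may work with $A_0 := A\T{\bR}$, which has $O(k)$ columns and $\nnz(A_0) \le 2\nnz(A)$, and lift the final answer back to $A$. Given $A_t$ with $n_t$ rows, $O(k)$ columns, and rank $k$, draw a rank-preserving sketch $\bS$ with $z_S = \Theta(k)$ rows and form $\bS A_t$; it has rank $k$ and $\text{rowspace}(\bS A_t) = \text{rowspace}(A_t)$, since its rows are combinations of rows of $A_t$ and the dimensions match. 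Compute a maximal linearly independent subset $J$ of the $z_S = \Theta(k)$ rows of $\bS A_t$ in $O(k^\omega)$ time; because $\bS$ has at most $2n_t/z_S$ nonzeros per row, each selected row $(\bS A_t)_{j*}$ is a combination of at most $2n_t/z_S$ rows of $A_t$, so the union of the supports over $j \in J$ is a set $S_{t+1}$ of at most $2k\cdot n_t/z_S \le n_t/2$ rows of $A_t$ (taking $z_S \ge 4k$), and since the selected rows span $\text{rowspace}(A_t)$ the submatrix $A_{t+1} := (A_t)_{S_{t+1}}$ still has rank $k$. Iterating for $O(\log n)$ rounds brings the row count down to $O(k)$, after which I compute a maximal independent set of the last $A_t$ directly in $O(k^\omega)$ time and trace the chosen indices back through the $S_t$ to $k$ rows of $A$.

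Each round costs $O(\nnz(A_t)) + O(k^\omega)$, and since each $A_{t+1}$ is a \emph{submatrix} of $A_t$ — actual rows are selected, no products are formed — we have $\nnz(A_t) \le \nnz(A_0) \le 2\nnz(A)$ throughout, so summing over the $O(\log n)$ rounds gives $O((\nnz(A) + k^\omega)\log n)$, which also absorbs the $O(\nnz(A))$ column reduction. Union-bounding the $O(1/n^{1/3})$ per-sketch failure probability over the $O(\log n)$ rounds gives the stated $O(\log n / n^{1/3})$.

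The step I expect to be the main obstacle is getting the row-shrinking recursion and its accounting right: the naive alternative — split the rows in half, recurse on each half, and recombine the two returned bases — pays an $O(k^\omega)$ recombination at each of $\Omega(n/k)$ internal nodes and hence costs $\Omega(nk^{\omega-1})$, so it is essential that the procedure be a \emph{chain} of $O(\log n)$ sketch-and-select steps, each reducing the row count by a constant factor at cost $O(\nnz(A) + k^\omega)$. This forces us to use the sparsity guarantees of Theorem~\ref{thm:ckl-rank-preserving} on two fronts at once — the ``at most two nonzeros per column'' bound to keep $\nnz(\bS A_t) \le 2\nnz(A_t)$, and the ``at most $2n_t/z_S$ nonzeros per row'' bound to ensure that selecting $k$ sketched rows pulls back to at most $n_t/2$ original rows — and to notice that the one-time column reduction preserves linear independence of \emph{every} subset of rows (not just the rank of $A$ itself), which is exactly what the rank-preserving property of $\bR$ buys once one observes that all rows of $A$ lie in its $k$-dimensional row space.
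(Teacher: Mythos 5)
This theorem is cited verbatim as Theorem~2.6 of \cite{CKL13}; the paper does not prove it, so there is no in-paper proof to compare against. Judging your attempt on its own merits: the two-sided geometric-doubling argument for the $O(\nnz(A)\log k + k^\omega)$ bound on rank computation is sound, and the chain of sketch-and-select row reductions for finding $k$ independent rows is also correct (the column compression by $\bR$ is injective on the $k$-dimensional row space, the row-sparsity bound $2n_t/z_S$ on $\bS$ does give $|S_{t+1}|\le n_t/2$ for $z_S\ge 4k$, and the cost telescopes to $O((\nnz(A)+k^\omega)\log n)$ because each $A_{t+1}$ is a genuine submatrix). Your observation that a recombine-at-internal-nodes recursion would blow up to $\Omega(nk^{\omega-1})$, forcing the linear chain, is the right structural point.

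The gap is in your route to the $\min(k^\omega, k\cdot\nnz(A))$ term. Forming the Gram matrix $(\bS A)(\bS A)^{\mathsf T}$ in $O(\hat k\cdot\nnz(A))$ time does not help you \emph{avoid} the $\hat k^\omega$ cost: you have produced a dense (generically) $\Theta(\hat k)\times\Theta(\hat k)$ matrix whose rank still has to be computed, which costs $\Theta(\hat k^\omega)$ (and at least $\Omega(\hat k^2)$ just to read). So over the doubling schedule your variant pays $O(k\cdot\nnz(A) + k^\omega)$, which is the \emph{sum} rather than the \emph{minimum}, and is never better than the plain $O(k^\omega)$ bound you already established. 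The $k\cdot\nnz(A)$ alternative in \cite{CKL13} instead comes from running a Krylov-type iterative method (Wiedemann/Lanczos-style) directly on the sketched matrix, where each of the $O(\hat k)$ iterations is a matrix--vector product costing $O(\nnz(\bS A)) = O(\nnz(A))$ and \emph{no} $\hat k\times\hat k$ dense object is ever formed; that is what lets the bound drop below $k^\omega$ when $\nnz(A) < k^{\omega-1}$. To repair the proof, you would replace the Gram-matrix step with such an iterative rank estimator (and, at each doubling level, take the cheaper of the fast-matrix-multiplication and iterative routes).

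Two smaller points worth flagging: the Gram-matrix rank identity $\rank(MM^{\mathsf T})=\rank(M)$ is real-field--specific, whereas \cite{CKL13} also treat finite fields, so even as a heuristic it is narrower than their setting; and in the row-reduction chain the per-round sketch failure probability should be pinned to the original $n$ (not the shrinking $n_t$) before union-bounding, otherwise the sum $\sum_t O(1/n_t^{1/3})$ is dominated by the last, smallest round rather than giving $O(\log n/n^{1/3})$.
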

We show that the $\log(k)$ factor can be removed from the time required to compute the rank of the matrix.
\begin{theorem}[Rank computation] \label{thm rank est}
Given $A \in \R^{n \times d}$, let $k = \rank(A)$. Let $\omega$ be the matrix multiplication constant and assume $\omega > 2$. Consider two cases:
\begin{enumerate}
 	\item If $k \le \log(n)^{2/(\omega-2)}$, $k$ can be computed in time $O(\nnz(A) + \log(n)^{6/(\omega-2)}) = O(\nnz(A))$.
 	\item If $k \ge \log(n)^{2/(\omega-2)}$, $k$ can be computed using Algorithm~\ref{alg:rank} (\textsc{Rank}) in time $O(\nnz(A) + \min(k^{\omega}, k \cdot \nnz(A)))$.
 \end{enumerate}
 \end{theorem}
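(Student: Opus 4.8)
The plan is to work inside the rank-preserving-sketch framework of~\cite{CKL13} (Theorems~\ref{thm:ckl-rank-preserving} and~\ref{thm:ckl-rank-computation}), after transposing so that $n\ge d$. Recall the \cite{CKL13} strategy: for a guess $z$ for (eleven times) the rank, take a rank-preserving sketch $\bS$ with $z$ rows on the left and $\bT$ with $z$ columns on the right; since each has $O(1)$ nonzeros per column (or row), $M_z := \bS A\bT$ is a $z\times z$ matrix that can be formed in $O(\nnz(A))$ time and has $O(\min(\nnz(A),z^2))$ nonzeros. We compute $\rank(M_z)$ in time $O(\min(z^\omega, z\cdot\nnz(M_z)))$; by the rank-preserving property, if $\rank(M_z)<z/11$ then $\rank(M_z)=\rank(A)$, and otherwise $\rank(A)\ge z/11$. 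Doubling $z$ over $2,4,8,\dots$ terminates correctly once $z=\Theta(k)$, and the sum of the rank-computation costs is a geometric series bounded by $O(\min(k^\omega,k\,\nnz(A)))$; the only issue is that re-forming $M_z$ in every round costs $\Theta(\nnz(A))$, for a total of $\Theta(\nnz(A)\log k)$. Our entire task is to remove this $\log k$ factor.

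\textit{Case~1} ($k\le K:=\log(n)^{2/(\omega-2)}$) needs only a single guess. Take the rank-preserving sketch at scale $z=11K$; since $k\le K=z/11$ the sketch preserves the rank, so $\rank(M_{11K})=\rank(A)=k$, read off by computing the rank of an $11K\times 11K$ matrix in time $O((11K)^\omega)=O(\log(n)^{2\omega/(\omega-2)})$. As $\omega<3$ we have $2\omega<6$, so this is $O(\log(n)^{6/(\omega-2)})$, which is $O(\nnz(A))$ since $\nnz(A)\ge n$; forming $M_{11K}$ is also $O(\nnz(A))$. If instead the computed rank is $\ge K$, we have certified $\rank(A)\ge K$ and fall through to Case~2 without having spent more than $O(\nnz(A))$.

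\textit{Case~2} ($k\ge K$): here Algorithm~\ref{alg:rank} runs the doubling search starting from $z=\Theta(K)$, but must avoid paying $\Theta(\nnz(A))$ in each of the $\Theta(\log k)$ rounds. The key is to carry the reductions forward rather than re-sketch $A$ at every scale: compose the rank-preserving sketches across rounds so that at scale $z$ we operate on a matrix whose nonzero count is already $O(\min(\nnz(A),z^2))$, obtained cheaply from the matrices built at smaller scales, re-forming a full $O(\nnz(A))$-cost sketch of $A$ only a constant number of times (e.g.\ whenever $z$ passes a fixed geometric sequence of checkpoints). Then the total sketch-formation cost across all scales telescopes to $O(\nnz(A))$, and, added to the geometric sum $O(\min(k^\omega,k\,\nnz(A)))$ of the rank computations, gives the claimed running time. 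Correctness is inherited from the rank-preserving guarantee of~\cite{CKL13} at the terminating scale $z=\Theta(k)$; a union bound over the $O(\log k)$ scales, with each scale's success probability boosted by a constant number of independent repetitions, keeps the overall failure probability $O(n^{-1/3})$.

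The main obstacle is precisely the ``carry forward'' step in Case~2: one must arrange the composition of rank-preserving sketches so that (i)~the composite still preserves $\rank(A)$ up to the threshold $z/11$ at every scale $z$ simultaneously, and (ii)~the aggregate nonzero count of the intermediate matrices over all $\Theta(\log k)$ scales is $O(\nnz(A))$ rather than $\Theta(\nnz(A)\log k)$. Item~(ii) is delicate because $\nnz$ can grow by a constant factor under each one-sided sketch, so naively composing $\Theta(\log k)$ of them blows $\nnz$ up; the two regimes $k^2\le\nnz(A)$ and $k^2>\nnz(A)$ have to be analyzed separately to bound $\sum_z \nnz(M_z)$, using the $z^2$ cap in the first regime and the $O(\nnz(A))$ cap in the second. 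Everything else --- the base case, the fall-through from Case~1, and the final exact rank computation on the $\Theta(k)\times\Theta(k)$ sketch in $O(\min(k^\omega,k\,\nnz(A)))$ time --- is routine.
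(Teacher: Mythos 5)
Your Case~1 is fine and matches the paper (modulo using $K^\omega$ where the paper uses $K^3$; both are $O(\nnz(A))$). The issue is Case~2, where you have identified the right difficulty (the $\log k$ overhead of re-sketching $A$ at every scale) but proposed a fix that does not work and is not the paper's.

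The ``carry the reductions forward / compose rank-preserving sketches across rounds / re-sketch $A$ only at constant-many checkpoints'' plan has a structural problem: a rank-preserving sketch at scale $z$ deliberately discards all rank information above $z/c$. Once you have $M_z = \bS A\bT$ of size $z\times z$, you cannot manufacture a scale-$2z$ sketch of $A$ from $M_z$ or from any composition of downstream matrices; you must go back to $A$ (or to a strictly larger intermediate), costing $\Theta(\nnz(A))$ each time. So a doubling search that needs to \emph{increase} $z$ across $\Theta(\log k)$ rounds cannot be made to touch $A$ only $O(1)$ times, and the composition idea has no footing. The paragraph you flag as ``the main obstacle'' is, in fact, a gap, not a technicality, and the two-regime $\nnz$ accounting you gesture at does not rescue it.

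The paper sidesteps the doubling search entirely with a one-shot choice of scale. Set $z := c\,(\nnz(A)/\log n)^{1/2}$ and form $M_z = SAR$ once, in $O(\nnz(A))$ time; then $\nnz(M_z)\le z^2 = O(\nnz(A)/\log n)$, so running the full CKL rank algorithm of Theorem~\ref{thm:ckl-rank-computation} \emph{on $M_z$} costs $O(\nnz(M_z)\log k + \min(k^\omega, k\,\nnz(M_z))) = O(\nnz(A) + \min(k^\omega, k\,\nnz(A)))$ --- the $\log k$ is absorbed because the sketched matrix is a $\log n$ factor sparser than $A$. If $\rank(M_z) < z/c$, the rank-preserving property certifies $\rank(A) = \rank(M_z)$ and we are done. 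Otherwise $k \ge z/c = (\nnz(A)/\log n)^{1/2}$, hence $\nnz(A)\log n \le k^2\log^2 n \le k^\omega$ using the Case~2 hypothesis $k\ge\log(n)^{2/(\omega-2)}$, so one may now run CKL \emph{directly on $A$}: its $\nnz(A)\log k$ term is bounded by both $k^\omega$ and $k\cdot\nnz(A)$, giving $O(\nnz(A)+\min(k^\omega,k\,\nnz(A)))$ again. No composition of sketches, no union bound over $\Theta(\log k)$ scales, and only two passes over $A$.
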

 \begin{proof}
 If $k \le \log(n)^{2/(\omega-2)}$, then we have rank preserving sketches $S,R$ such that $SAR$ can be computed in time $\nnz(A)$, $SAR$ is an $O(\log(n)^{2/(\omega-2)}) \times O(\log(n)^{2/(\omega-2)})$ matrix and $\rank(SAR) = \rank(A)$. Now the rank of $SAR$ can be computed in time $O(\log(n)^{6/(\omega-2)})$. Thus, $\rank(A)$ can be computed in time $O(\nnz(A) + \log(n)^{6/(\omega-2)})$.

 In the case of $k\ge \log(n)^{2/(\omega-2)}$, consider Algorithm~\ref{alg:rank}. As $z \ge \Theta(\sqrt{n/\log(n)})$, with failure probability at most $\Theta(\sqrt{\log(n)/n})$, the sketch $SAR$ is rank preserving. As $SAR$ is a $z \times z$ matrix, we have $\nnz(SAR) \le z^2 \le O(\nnz(A)/\log(n))$. So, the rank $k_1$ of $SAR$ can be computed in time $O(\nnz(SAR)\log(k_1) + \min(k_1^{\omega}, k_1\cdot \nnz(SAR))$ by Theorem~\ref{thm:ckl-rank-computation}. As $k_1 \le k$, we have that the rank $k_1$ can be computed in time $O(\nnz(A) + \min(k^{\omega}, k\cdot \nnz(A)))$.

 We now have two cases. In the case that $k_1 < (\nnz(A)/\log(n))^{1/2}$, as we have $$\min(\rank(A), (\nnz(A)/\log(n))^{1/2}) = \min(\rank(S_1AR_1), (\nnz(A)/\log(n))^{1/2}),$$ we obtain that $\rank(A) = \rank(SAR) = k_1$.

 If $(\nnz(A)/\log(n))^{1/2} \le k_1$, we have $k=\rank(A) \ge k_1 \ge (\nnz(A)/\log n)^{1/2}$ which shows that $\nnz(A)\log(n) \le k^2\log^2(n) \le k^{\omega}$ for any $\omega > 2$ and $k \ge \log(n)^{2/(\omega-2)}$. We can now compute $\rank(A)$ in time $O(\nnz(A)\log(k) + \min(k^{\omega}, k\cdot \nnz(A)))$ by Theorem~\ref{thm:ckl-rank-computation}. As $\nnz(A)\log(k) = O(\min(\nnz(A) \cdot k, k^{\omega}))$, we obtain that the running time is $O(\nnz(A) + \min(k^{\omega},k \cdot \nnz(A)))$.
 \end{proof}


 \begin{algorithm2e}[H]
 \caption{$\textsc{Rank}(A)$}
 \label{alg:rank}
 \DontPrintSemicolon
 \KwIn{$A \in \R^{n \times d}$, $\rank(A) \ge (\log(n))^{6/(\omega-2)}$}
 \KwOut{$k  := \rank(A)$}
 \tcp{{\tt CKL-RE}, the algorithm of Theorem~2.6 of \cite{CKL13}}
 $z\gets c \cdot (\nnz(A)/\log n)^{1/2}$ \tcp*{$c\ge 1$ is a constant}
 Generate rank-preserving sketches $S\in\R^{z\times n}$ and $\T{R} \in\R^{z \times d}$\;
 Compute $SAR$ \tcp*{using Theorem~\ref{thm:ckl-rank-preserving}}
 $k_1 \gets \rank(SAR)$ \tcp*{using {\tt CKL-RE}}
 \If{$k_1 < z/c$}{\Return{$k_1$}}
$k_2\gets \rank(A)$ \tcp*{using {\tt CKL-RE}}
 \Return{$k_2$}\;
 \end{algorithm2e}


We now describe an algorithm to compute $k$ linearly independent rows of a matrix $A \in \R^{n \times d}$ of rank $k$ in time $O(\nnz(A) + k^{\omega}\poly(\log\log(n)))$, replacing the $\log(n)$ factor in the running time of \cite{CKL13} with $\poly(\log\log(n))$. Thus for matrices $A$ with $k^{\omega-1} \le \nnz(A) \le k^{\omega}/\log(n)$, we can now compute the rank $k$ and a set of $k$ linearly independent rows in time $O(k^{\omega}\poly(\log\log(k)))$ instead of $O(k^{\omega}\log(k))$ time. 

Without loss of generality, using the rank-preserving sketch, we can assume that $d = ck$ for a constant $c$. The following lemma describes a reduction to a  sparse sub-matrix of $A$ which also has rank equal to $\rank(A)$.
\begin{algorithm2e}
\caption{\textsc{RowReduction}($A,k$)}
\KwIn{$A \in \R^{n \times ck}$, $\rank(A) = k$}
\KwOut{$A_Q \in \R^{m \times ck}, m \le (3n/11)k, \nnz(A_Q) \le \max((2/5)\nnz(A), \Theta(k^2)), \rank(A_Q) = k$}
\DontPrintSemicolon
$\bS \gets \R^{ck \times n}$ be a rank-preserving sketch\;
Compute $\bS A$\;
Compute $P \subseteq [ck]$, $|P| = k$ such that $(\bS A)_P$ has $k$ linearly independent rows\;
Let $Q \gets \set{i \in [m]\,|\, \bS_{ji} \ne 0\,\text{for some $j \in P$}}$\;
\Return{$A_Q$}\;
\end{algorithm2e}

\begin{algorithm2e}
\caption{\textsc{IndependentRows}($A, k$)}	
\label{alg:lin-independent}
\DontPrintSemicolon
\KwIn{$A \in \R^{n \times d}, \rank(A) = k$}
\KwOut{$A_Q \in \R^{k \times d}, \rank(A_Q) = k$}
$\bS \gets \R^{ck \times d}$ be a rank preserving sketch\;
$B \gets A\T\bS$\;
Compute $B'$ by applying \textsc{RowReduction}  $\Theta(\log\log(n))$ times\;
Compute $\bS_{\text{lev}}$, a leverage score subspace embedding for $B'$ using Theorem~\ref{thm:final-epsilon-subspace-embedding} with $\gamma = 1/\log(n)$ and $\varepsilon = 0.1$\;
Compute $B''$ with $O(k)$ rows by applying \textsc{RowReduction} to the matrix $\bS_{\text{lev}}A$,  $\Theta(\log\log(k))$ times\;
Compute $k$ linearly independent rows of $B''$ and return $A_Q$ corresponding to these $k$ rows\;
\end{algorithm2e}

\begin{lemma}
    Let $A \in \R^{n \times ck}$ be an arbitrary matrix of rank $k$. There is a submatrix $A_Q \in \R^{m \times ck}$ that can be computed in time $O(\nnz(A) + k^{\omega})$ such that
    \begin{itemize}
        \item $m = |Q| \le (3n/11)$,
        \item $\nnz(A_Q) \le  \max((2/5) \cdot \nnz(A), \Theta(k^2))$, and
        \item $\rank(A_Q) = k$.
    \end{itemize}
    \label{lma:row-reduction}
\end{lemma}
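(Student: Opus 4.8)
The plan is to analyze the subroutine \textsc{RowReduction}. First I would apply a rank-preserving sketch $\bS \in \R^{ck \times n}$ with $c = 11$ from Theorem~\ref{thm:ckl-rank-preserving}: the matrix $\bS A \in \R^{ck \times ck}$ is formed in $O(\nnz(A))$ time, and since $\rank(A) = k \le (ck)/11$, with high probability $\rank(\bS A) = k$. Next I would find, in $O((ck)^\omega) = O(k^\omega)$ time by computing a row echelon form, a set $P \subseteq [ck]$ with $|P| = k$ such that the rows of $\bS A$ indexed by $P$ are linearly independent, and set $Q = \{\, i \in [n] : \bS_{ji} \ne 0 \text{ for some } j \in P \,\}$, which is identified in $O(\nnz(\bS)) = O(n)$ time. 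The output is $A_Q$, so (using $n \le \nnz(A)$) the running time is $O(\nnz(A) + k^\omega)$; it remains to verify the three structural guarantees.

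Two of them are immediate. For the rank, let $\bS_P \in \R^{k \times n}$ be the rows of $\bS$ indexed by $P$; every nonzero column of $\bS_P$ lies in $Q$ by construction, so $\bS_P A = (\bS_P)^{Q} A_Q$, and hence $k = \rank(\bS_P A) \le \rank(A_Q) \le \rank(A) = k$, forcing $\rank(A_Q) = k$. For the row count, Theorem~\ref{thm:ckl-rank-preserving} gives at most $2n/(ck)$ nonzeros in each row of $\bS$, so $|Q| \le \sum_{j \in P} \nnz(\bS_{j*}) \le k \cdot 2n/(ck) = 2n/11 \le 3n/11$.

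The one real obstacle is the bound on $\nnz(A_Q)$, since a priori $P$ could be selected so that $A_Q$ collects all the dense rows of $A$. The key point is that $Q$ is the union of the $k$ ``buckets'' of $\bS$ indexed by $P$, so $A_Q$ behaves like a uniform subsample of a $(1/c)$-fraction of the rows of $A$. I would make this rigorous with a union bound over the $\binom{ck}{k} \le e^{O(k)}$ candidate sets $\Pi$: for a fixed $\Pi$ of size $k$, with $Q(\Pi)$ defined analogously, each row $i \in [n]$ lies in $Q(\Pi)$ with probability at most $2|\Pi|/(ck) = 2/c$, essentially independently across $i$ (each column of $\bS$ has at most two nonzeros, in nearly independent positions), so $\E[\nnz(A_{Q(\Pi)})] \le (2/c)\nnz(A)$; and since each row of $A$ has at most $ck$ nonzeros, $\nnz(A_{Q(\Pi)})$ is a sum of independent terms each bounded by $ck$ and with variance $O(ck \cdot \nnz(A))$, so Bernstein's inequality bounds $\Pr[\nnz(A_{Q(\Pi)}) > (2/c)\nnz(A) + t]$ by $\exp(-\Omega(t^2/(k\,\nnz(A) + kt)))$. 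Taking $t = \Theta(\max(k^2, k\sqrt{\nnz(A)}))$ with a large enough constant makes this $e^{-\Omega(k)}$, small enough to survive the union bound over the $e^{O(k)}$ sets $\Pi$; hence with high probability $\nnz(A_Q) \le (2/11)\nnz(A) + \Theta(\max(k^2, k\sqrt{\nnz(A)}))$. A short case split finishes: if $\nnz(A) \ge C k^2$ for a large enough constant $C$, the additive term is at most $(1/5)\nnz(A)$ and so $\nnz(A_Q) \le (2/5)\nnz(A)$; while if $\nnz(A) < C k^2$ then trivially $\nnz(A_Q) \le \nnz(A) = \Theta(k^2)$.

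I expect the only delicate point to be reconciling the two uses of the randomness of $\bS$: the ``at most $2n/(ck)$ nonzeros per row'' balance property (underlying the row count) and the ``nearly independent across rows'' property (feeding Bernstein's inequality). In the regime where the bound on $\nnz(A_Q)$ has content, namely $k$ large, this should be harmless if $\bS$ hashes each row of $A$ to an essentially uniform random bucket, since then balance holds with high probability while row-wise independence is retained; alternatively one may appeal to a sampling-without-replacement Bernstein inequality applied to the explicit balanced construction of \cite{CKL13}.
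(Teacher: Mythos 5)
Your proof mirrors the paper's: same rank-preserving sketch with $c=11$, the same construction of $Q$ from the rows of $\bS$ that hit the chosen set $P$, the same rank and row-count arguments, and the same union bound over the $\binom{ck}{k}\le e^{O(k)}$ candidate subsets combined with a Chernoff-type tail and the case split on whether $\nnz(A)\gtrsim k^2$. The dependence concern you flag at the end is exactly where the paper's proof does its one nontrivial step: it observes that the indicators $\bX_i$ are negatively associated and invokes the Chernoff--Hoeffding bound for negatively associated random variables (citing \cite{dubhashi1996balls}), which is essentially the second alternative you propose; so your argument is correct once that option is made the primary one rather than the ``essentially independent'' heuristic.
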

\begin{proof}
Let $\bS \in \R^{ck \times n}$ be a rank-preserving sketch for $c = 11$. We have $\rank(\bS A) = \rank(A) = k$ with probability $\ge 1 - O(1/k)$. Consider a set $L$ of $k$ linearly independent rows of the matrix $\bS A$ which can be determined in $O(k^\omega)$ time. Let $Q \subseteq [n]$ be the set of rows of $A$ that contribute to the construction of the submatrix $(\bS A)_{L}$ which implies that $k \ge \rank(A_Q) \ge \rank((\bS A)_L) = k$ and hence $\rank(A_Q) = k$. We therefore have that the sub-matrix $A_{Q}$ consists of $k$ linearly independent rows. The reduction $A \rightarrow A_Q$ can be performed in $O(\nnz(A) + k^\omega)$ time. As each row of the matrix $\bS$ has at most $2n/11k$ nonzero entries, we have $|Q| \le (2n/11k) \cdot k \le 2n/11$. We now bound $\nnz(A_Q)$. 
	
	Let $P \subseteq [ck]$ be an arbitrary subset of size $k$. We show that if $Q_P \subseteq [n]$ is the subset of rows of $A$ that contribute to the construction of the sub-matrix $(\bS A)_P$, then $\nnz(A_{Q_P}) \le (2/5) \cdot \nnz(A)$ with high probability. 
	
	Let $\bX_i$ be the random variable that indicates if $A_{i*}$ contributes to the construction of $(\bS A)_P$ i.e., if $i \in Q_P$. By inspecting the proof of Theorem~\ref{thm:ckl-rank-preserving}, we obtain that $\Pr[\bX_i = 0] = (1-1/c)^2$. Thus, for $c = 11$, we obtain that $\Pr[\bX_i = 1] = 1 - (1 - 1/11)^2 = 21/121$. We also note that the random variables $\bX_1,\ldots, \bX_n$ are negatively associated \cite{wajc}. Let $a_i$ denote the number of nonzero entries of the row $A_{i*}$ which implies that $\sum_i a_i = \nnz(A)$. Now, we have $\nnz(A_{Q_P}) = \sum_{i}a_i\bX_i$. Using the Chernoff-Hoeffding bound for negatively associated random variables \cite{dubhashi1996balls},
	\begin{equation*}
		\Pr[\nnz(A_{Q_{P}}) = \sum_i a_i \bX_i \ge \nnz(A) \cdot 21/121 + t] \le 2\exp\left(-\frac{2t^2}{\sum_i a_i^2}\right).
	\end{equation*}
	By a union bound over all $\binom{11k}{k} \le (11e)^k$ subsets $P$, we obtain that for a constant $C$,
	\begin{equation*}
		\Pr[\text{There is a subset $P \subseteq [11k], |P|=k$ with $\nnz(A_{Q_P}) \ge \nnz(A)/5 + t$}] \le 2\exp\left(Ck - \frac{2t^2}{\sum_i a_i^2}\right).
	\end{equation*}
	Now, we have $\sum_i a_i^2 \le \max_i a_i \cdot \sum_{i} a_i \le 11k \cdot (\nnz(A))$ since the matrix $A$ is assumed to have only $ck=11k$ columns. For $t \ge \Theta(k \sqrt{\nnz(A)})$, we obtain that with probability $\ge 1 - \exp(-\Theta(k))$, for all $P\subseteq [11k], |P|=k$, we have that $\nnz(A_{Q_P}) \le \nnz(A)/5 + t$. For $\nnz(A) \ge \Theta(k^2)$, we have $\nnz(A)/5 \ge \Theta(k\sqrt{\nnz(A)})$ which implies that for all $P$, $\nnz(A_{Q_P}) \le (2/5)\nnz(A)$. This, in particular, implies that for $M = Q_{L}$, that corresponds to the set of rows contributing to a linearly independent set of rows of $(\bS A)$, we have $\nnz(A_M) \le (2/5) \cdot \nnz(A)$ if $\nnz(A) \ge \Theta(k^2)$. 
\end{proof}

Recursively applying the above lemma, we obtain the following.
\begin{corollary}
   	Let $A \in \R^{n \times d}$ be an arbitrary matrix of rank $k$. There is a matrix $A' \in \R^{m \times ck}$ with either $\nnz(A') \le \nnz(A)/\log(n)$ or $\nnz(A') \le \Theta(k^2)$ such that 
	\begin{itemize}
		\item $\rank(A') = \rank(A) = k$, and
		\item $m \le n/\poly(\log(n))$
		\item linearly independent rows of $A'$ correspond to linearly independent rows of $A$.
	\end{itemize}
	The reduction $A \rightarrow A'$ can be performed in $O(\nnz(A) + k^{\omega}\log\log(n))$ time. 
\end{corollary}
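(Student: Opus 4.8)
The plan is to obtain $A'$ by iterating the single-step reduction of Lemma~\ref{lma:row-reduction}. First I would dispose of the column count: apply an initial rank-preserving sketch on the right (Theorem~\ref{thm:ckl-rank-preserving}) to replace $A$ by $A_0 := A\T\bS \in \R^{n \times ck}$, which costs $O(\nnz(A))$ time, preserves the rank, and has the property that any set of linearly independent rows of $A_0$ indexes a set of linearly independent rows of $A$. This step is needed only because Lemma~\ref{lma:row-reduction} assumes exactly $ck$ columns. Then set $A_{i+1} := (A_i)_Q$, the submatrix returned by one invocation of \textsc{RowReduction} on $A_i$, for $i = 0,1,\ldots,t-1$ with $t = \Theta(\log\log n)$, and return $A' := A_t$.

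The analysis tracks the three quantities through the iterations. By Lemma~\ref{lma:row-reduction}, $m_{i+1} \le (3/11)\,m_i$, so $m_t \le n\,(3/11)^t = n/\poly(\log n)$, giving the row bound. For sparsity, $\nnz(A_{i+1}) \le \max\!\big((2/5)\nnz(A_i),\,\Theta(k^2)\big)$; hence, as long as the $\Theta(k^2)$ floor has not been reached, $\nnz$ contracts by the factor $2/5$ each step, so after $t = \Theta(\log\log n)$ steps either $\nnz(A_t) \le (2/5)^t \nnz(A) \le \nnz(A)/\log n$, or the floor was reached at some intermediate step and thereafter $\nnz$ stays at $\Theta(k^2)$ (exactly because of the $\max$), so $\nnz(A_t) \le \Theta(k^2)$. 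In both cases the stated dichotomy holds. Rank preservation and the lift-back of independent rows follow by transitivity: each $A_{i+1}$ is a row-submatrix of $A_i$ with $\rank(A_{i+1}) = \rank(A_i) = k$ by Lemma~\ref{lma:row-reduction}, so any $k$ linearly independent rows of $A_t$ are linearly independent rows of $A_{i}$ for every $i$, and of $A$.

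For the running time, the $i$-th invocation costs $O(\nnz(A_i) + k^\omega)$. Since the $\nnz(A_i)$ decrease geometrically (factor $2/5$) until they hit $\Theta(k^2) \le k^\omega$, the sum $\sum_{i<t}\nnz(A_i)$ telescopes to $O(\nnz(A))$, while the $t = \Theta(\log\log n)$ copies of the $k^\omega$ term contribute $O(k^\omega \log\log n)$; together with the initial right-sketch the total is $O(\nnz(A) + k^\omega \log\log n)$. A union bound over the $t$ iterations — each succeeding with probability $1 - O(1/k)$ for rank preservation and $1 - \exp(-\Theta(k))$ for the sparsity bound — shows all guarantees hold simultaneously with probability $1 - O(\log\log(n)/k)$.

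I expect the only real subtlety to be the bookkeeping around the two stopping conditions: one must fix the hidden constant in $t = \Theta(\log\log n)$ large enough that $(2/5)^t \le 1/\log n$ and $(3/11)^t \le 1/\poly(\log n)$ simultaneously, and argue carefully that once $\nnz$ drops to the $\Theta(k^2)$ floor further iterations cannot inflate it, so that the returned matrix satisfies one of the two $\nnz$ bounds no matter at which iteration (if any) the floor is reached. Everything else is a routine geometric-series estimate.
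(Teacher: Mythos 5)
Your proposal matches the paper's argument essentially step for step: both iterate Lemma~\ref{lma:row-reduction} for $\Theta(\log\log n)$ rounds, track the row count, sparsity (with the $\max$/floor dichotomy), and rank through the iterations, and bound the running time via a geometric series plus $\Theta(\log\log n)$ copies of the $k^\omega$ term. The paper disposes of the column count earlier via a ``without loss of generality $d = ck$'' remark, whereas you spell out the initial right rank-preserving sketch explicitly; you also handle the floor case and the union bound over failure probabilities somewhat more carefully than the paper's terse proof, but these are exposition rather than a different route.
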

\begin{proof}
	Let $N = \Theta(\log\log(n))$ and $A^{(0)} = A$. Starting with $i = 0$, we apply the above reduction $A^{(i)} \rightarrow A^{(i+1)}$ to obtain a matrix with $\nnz(A^{(i+1)}) \le (2/5)\cdot \nnz(A^{(i)})$. Then $$\nnz(A^{(N)}) \le \max((2/5)^N\nnz(A), \Theta(k^2)) \le \max(\nnz(A)/\log(n), \Theta(k^2)).$$ The time complexity is $O(\sum_{i=1}^{N} (\nnz(A^{(i)}) + k^{\omega})) = O(\nnz(A) + k^{\omega}\log\log(n))$.
\end{proof}

We have now reduced the general problem of computing $k$ linearly independent rows of a rank-$k$ $n \times d$ matrix $A$ to computing $k$ linearly independent rows of a rank-$k$ $m \times ck$ matrix $A'$ with $m \le n/\poly(\log(n))$ and $\nnz(A') \le O(\max(k^2, \nnz(A)/\log(n)))$. Using these reductions, we have the following theorem.

\begin{theorem}
\label{thm:independent_rows}
	Given an arbitrary matrix $A \in \R^{n \times d}$ of rank $k$, Algorithm~\ref{alg:lin-independent} computes a set of $k$ linearly independent rows of the matrix $A$ in time $O(\nnz(A)+k^{\omega}\poly(\log\log(n))+ k^{2+o(1)})$.
\end{theorem}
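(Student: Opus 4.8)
The plan is to track Algorithm~\ref{alg:lin-independent} line by line, using the correctness of the sub-routines established above, and to account for the running time of each stage, showing that the dominant terms are $\nnz(A)$, $k^\omega\poly(\log\log n)$, and $k^{2+o(1)}$. First I would handle the initial reduction $B \gets A\T{\bS}$, where $\bS$ is the rank-preserving sketch of Theorem~\ref{thm:ckl-rank-preserving} with $z_S = ck$: with high probability $\rank(B) = \rank(A) = k$, the matrix $B$ has $ck$ columns, and forming $B$ costs $O(\nnz(A))$ time. I would then observe that, since $\bS$ has at most two nonzero entries per column, a set of $k$ linearly independent rows of $B$ can be pulled back to a set of $k$ linearly independent rows of $A$ (each row of $B$ is a $\{\pm1\}$-combination of at most two rows of $A$, so the row space of the selected submatrix of $A$ contains that of the selected submatrix of $B$, forcing rank $k$); this bookkeeping is exactly what is recorded in the $\nnz$ bound of \textsc{RowReduction}.

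Next I would apply the corollary that recursively invokes Lemma~\ref{lma:row-reduction}: applying \textsc{RowReduction} $\Theta(\log\log n)$ times to $B$ yields $B'$ with $\rank(B') = k$, at most $n/\poly(\log n)$ rows, $\nnz(B') \le \max(\nnz(A)/\log n,\ \Theta(k^2))$, and with the property that independent rows of $B'$ lift to independent rows of $B$ and hence of $A$; this stage costs $O(\nnz(A) + k^\omega\log\log n)$. Now I would run Theorem~\ref{thm:final-epsilon-subspace-embedding} on $B'$ with $\varepsilon = 0.1$ and $\gamma = 1/\log n$ to obtain a leverage-score sampling matrix $\bS_{\lev}$ with $O(k\cdot\epll{k})$ nonzero rows satisfying $\opnorm{\bS_{\lev}B'x}^2 = (1\pm 1/10)\opnorm{B'x}^2$ for all $x$; in particular $\rank(\bS_{\lev}B') = k$, so $\bS_{\lev}B'$ has $k$ independent rows among its $O(k\cdot\epll{k})$ rows, and these correspond to $k$ independent rows of $B'$ (a nonzero row of $\bS_{\lev}B'$ is a scaling of a row of $B'$). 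The running time here is $O(\gamma^{-1}\nnz(B') + \varepsilon^{-2}(m')^{\gamma+o(1)}k^{2+o(1)} + k^\omega\poly(\log\log k))$ with $m' \le n$; since $\nnz(B') \le \max(\nnz(A)/\log n, \Theta(k^2))$ and $\gamma^{-1} = \log n$, the first term is $O(\nnz(A) + k^2\log n) = O(\nnz(A) + k^{2+o(1)})$, and $n^{\gamma} = n^{1/\log n} = O(1)$, so the middle term is $k^{2+o(1)}$.

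Finally I would apply \textsc{RowReduction} $\Theta(\log\log k)$ more times to $\bS_{\lev}B'$ (which has $O(k\cdot\epll{k})$ rows and $\nnz$ at most $O(k^2\cdot\epll{k})$), producing $B''$ with $O(k)$ rows, $\rank(B'') = k$, in time $O(k^{2+o(1)} + k^\omega\poly(\log\log k))$; then compute $k$ independent rows of the $O(k)\times ck$ matrix $B''$ directly via Gaussian elimination / QR in $O(k^\omega)$ time, and pull the selection back through $\bS_{\lev}$, the $\log\log n$ layers of \textsc{RowReduction}, and the sketch $\T\bS$ to recover $k$ independent rows of $A$. Summing the four stages gives $O(\nnz(A) + k^\omega\poly(\log\log n) + k^{2+o(1)})$, as claimed. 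The main obstacle I anticipate is not any single step but the careful propagation of the two invariants through every reduction layer — that rank is exactly preserved (so the final $O(k)$ rows really do contain a rank-$k$ set) and that a maximal independent set in the reduced matrix lifts to one in $A$ — together with verifying that setting $\gamma = 1/\log n$ genuinely kills the $n^\gamma$ factor in Theorem~\ref{thm:final-epsilon-subspace-embedding} while the $\gamma^{-1}\nnz$ term stays within budget because $\nnz(B')$ has already been driven down by a $\log n$ factor. I would also need to union-bound the $O(\log\log n)$ high-probability events from the rank-preserving sketches and the one from the leverage-score embedding, which is routine since each failure probability is $1/\poly(k)$ or smaller.
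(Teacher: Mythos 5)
Your proposal follows the paper's proof step by step: $B = A\T{\bS}$ to reduce to $ck$ columns, $\Theta(\log\log n)$ rounds of \textsc{RowReduction} to get $B'$, leverage-score sampling via Theorem~\ref{thm:final-epsilon-subspace-embedding} with $\gamma = 1/\log n$ so that $n^{\gamma} = O(1)$ and $\gamma^{-1}\nnz(B') = O(\nnz(A) + k^{2+o(1)})$, more rounds of \textsc{RowReduction} to shrink $\bS_{\lev}B'$ to $B''$ with $O(k)$ rows, and a final $O(k^{\omega})$ extraction; the time accounting and the two invariants you track (exact rank preservation and lifting of an independent row set at each layer) are exactly those in the paper. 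One correction is worth flagging, although it does not change the conclusion: in the first step $B = A\T{\bS}$ the sketch is applied on the \emph{right}, so it compresses columns and the $n$ rows of $B$ are in bijection with the rows of $A$ (row $i$ of $B$ is $A_{i*}\T{\bS}$). The justification for lifting independent rows here is the rank inequality $k \ge \rank(A_M) \ge \rank(A_M\T{\bS}) = k$, not the ``$\pm 1$-combination of at most two rows'' argument you give — that description is also not how \textsc{RowReduction} works: there the sketch $\bS$ is applied on the \emph{left}, each row of $\bS A$ is a combination of up to $2n/z_S$ rows of $A$ (the \emph{rows} of $\bS$ are not $2$-sparse), and the ``at most two nonzeros per column'' property is used for $\nnz$ bookkeeping (each row of $A$ feeds into at most two rows of $\bS A$), not to identify the rows themselves. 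With this repaired, your argument is the paper's.
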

\begin{proof}
	Let $\bS \in \R^{ck \times d}$ be a rank preserving sketch which implies $\rank(A\T\bS) = \rank(A) = k$ with probability $1 - O(1/k)$. Condition on this event. Let $M \subseteq [n]$, $|M| = k$ be such that rows of the sub-matrix $(A\T\bS)_{M} = A_M\T\bS$ are linearly independent. Then, $k \ge \rank(A_M) \ge \rank(A_M \T\bS) = k$ which implies $\rank(A_M) = k$. Thus, we only have to find $k$ linearly independent rows of the $n \times ck$ matrix $B = A\T\bS$. We also have $\nnz(B) = O(\nnz(A))$. Using the above corollary, we can find an $m \times ck$ sub-matrix $B'$ such that $\rank(B') = k$, $\nnz(B') \le O(\max(\nnz(B)/\poly(\log(n)), \Theta(k^2))$ and $m = n/\poly(\log(n))$. 
	
	From Theorem~\ref{thm:final-epsilon-subspace-embedding}, using $\gamma = 1/\log(n)$, in time $O(\nnz(B')\log(n) + k^{\omega}\poly(\log\log(n))+k^{2+o(1)} + m\gamma^{-1}) = O(\nnz(A) + k^{\omega}\poly(\log\log(n)) + k^{2+o(1)})$, we can compute a row sampling matrix $\bS_{\text{lev}}$ that samples $O(k\cdot \epll{k})$ rows such that
	\begin{equation*}
		\opnorm{\bS_{\text{lev}}B'x}^2 \in (1 \pm 1/10)\opnorm{B'x}^2
	\end{equation*}
	for all vectors $x$. This, implies that the matrix $\bS_{\text{lev}}B'$ has rank $k$ and hence has $k$ linearly independent rows.

As $\bS_{\textnormal{lev}}$ is a leverage score sampling matrix, the rows of $\bS_{\textnormal{lev}} B'$ are multiples of rows of the matrix $B'$. Thus, a set of $k$ linearly independent rows of the matrix $\bS_{\textnormal{lev}} B'$ directly corresponds to a set of $k$ linearly independent rows of $B$ which corresponds to a set of $k$ linearly independent rows of the matrix $A$.

Applying the row reduction $\poly(\log\log(k))$ times to the matrix $\bS_{\textnormal{lev}} B'$, we obtain a matrix $B''$ of dimension $O(k) \times k$ from which we can determine a set of $k$ linearly independent rows in time $O(k^{\omega})$. This concludes the proof. \end{proof}
\subsection{Low-Rank Approximation}
Let $A \in \R^{n \times d}$ be an arbitrary matrix. We want to compute a matrix $B$ of rank at most $k$ such that
\begin{equation*}
	\frnorm{A - B}^2 \le (1+\varepsilon)\frnorm{A - [A]_k}^2.
\end{equation*}
Let $\text{OPT}_A$ denote $\frnorm{A - [A]_k}^2$. Our main theorem for Low-Rank Approximation (LRA) is as follows.
\begin{theorem}
	Let $A \in \R^{n \times d}$, $k < \min(n,d)$ be a rank parameter and $\varepsilon > 0$ be an accuracy parameter. There is an algorithm that outputs matrices $V \in \R^{n \times k}$ and $X \in \R^{k \times d}$, $\T{V}V = I_k$, such that with $\Omega(1)$ probability,
	\begin{equation*}
		\frnorm{A - VX}^2 \le (1+\varepsilon)\frnorm{A - [A]_k}^2.
	\end{equation*}
	The algorithm runs in time $O(\gamma^{-1}\nnz(A) + {\varepsilon^{-1}}(n+d)k^{\omega-1} + \varepsilon^{-1}k(nd^{\gamma+o(1)}+dn^{\gamma+o(1)}) + \poly(\varepsilon^{-1}k))$ for any constant $\gamma > 0$.
	\label{thm:final-low-rank-approximation}
\end{theorem}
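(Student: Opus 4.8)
I would run the algorithm in two phases matching the technical overview: first produce an $n\times k$ matrix $V$ with $\T V V = I_k$ whose column space spans a $(1+\varepsilon)$-approximate rank-$k$ subspace for $A$, then produce a right factor $X$. Note this application, unlike the others, need not invoke Theorem~\ref{thm:sparse-subspace-embedding}; the content is in assembling standard primitives — projection-cost preserving sketches (Definition~\ref{def PCP}), the adaptive (residual) sampling of \cite{deshpande2006matrix}, affine embeddings, and the input-sparsity CUR decomposition of \cite{boutsidis2017optimal} — so that $A$ is touched only a constant number of times, each touch costing $O(\gamma^{-1}\nnz(A))$ with no $\log$ factor. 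For a cheap $O(1)$-approximate column subspace I would take $\bR^\top$ a CountSketch (or \textsf{OSNAP}) with $O(k)$ columns — a subspace embedding, hence an affine/approximate-matrix-product embedding, for the row space of $[A]_k$ — so that $A\bR^\top\in\R^{n\times O(k)}$ is formed in $O(\nnz(A))$ time and, by the standard sketch-and-solve argument, its column span contains a rank-$k$ subspace of cost $O(1)\cdot\OPT_A$; composing on the left with a projection-cost preserving sketch having $\poly(k)$ rows (computed in $O(\gamma^{-1}\nnz(A))$ time) would let the best such rank-$k$ compression be extracted by $\poly(k)$-sized dense linear algebra, and orthonormalizing would give $U\in\R^{n\times O(k)}$, $\T U U=I$, with $\frnorm{A-UU^+A}^2\le O(1)\cdot\OPT_A$, in $O(\gamma^{-1}\nnz(A)+nk^{\omega-1}+\poly(k))$ time.

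Next I would boost to $1+\varepsilon$. Starting from $U$, I would run a constant number of rounds of adaptive sampling \cite{deshpande2006matrix}: each round draws $O(k/\varepsilon)$ columns of $A$ with probabilities proportional to the squared column norms of the current residual $A-PP^+A$, where $P$ spans $U$ and the columns sampled so far, so that their span together with $\mathrm{colspan}(U)$ contains, with constant probability, a $(1+\varepsilon)$-approximate rank-$k$ subspace. Each round is one pass over $A$ plus $\poly(\varepsilon^{-1}k)$ work, and — this is the step where the extra log would otherwise appear — I would estimate the residual column norms rather than compute them exactly, via a Gaussian sketch with $O(1/\gamma)$ columns exactly as in Lemma~\ref{lma:sampling-from-product}, refining the estimates only on the $O(\varepsilon^{-1}k\,n^{\gamma}\log n)$ columns that survive an initial oversample; this is what produces the $\varepsilon^{-1}k(nd^{\gamma+o(1)}+dn^{\gamma+o(1)})$ term, the $n\leftrightarrow d$ symmetry reflecting the analogous computation needed for the right-factor column selection in Phase~II. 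Gathering $U$ and the sampled columns into $W\in\R^{n\times O(k/\varepsilon)}$, I would read off $V$ by applying an affine embedding $\bT$ for $(W,A)$ with $\poly(\varepsilon^{-1}k)$ rows (so $\bT W,\bT A$ cost $O(\nnz(A)+\poly(\varepsilon^{-1}k))$), forming $Y^\star=[(\bT W)^+(\bT A)]_k$ via a truncated SVD of a $\poly(\varepsilon^{-1}k)\times d$ matrix, and invoking the Pythagorean inequality $\frnorm{A-WC}^2\ge\frnorm{A-AC^+C}^2$ from the preliminaries to conclude that $\mathrm{colspan}(WY^\star)$ still spans a $(1+\varepsilon)$-approximation; orthonormalizing $WY^\star$ (an $n\times O(k/\varepsilon)$ by $O(k/\varepsilon)\times k$ product plus a QR of an $n\times k$ matrix) costs $O(\varepsilon^{-1}nk^{\omega-1}+dk^{\omega-1}+\poly(\varepsilon^{-1}k))$ and yields $V$.

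For Phase~II the optimal right factor $\T V A$ is too costly ($\Omega(\nnz(A)\cdot k)$), so I would feed $V$ to the input-sparsity CUR algorithm of \citet{boutsidis2017optimal}: it selects $O(k/\varepsilon)$ columns of $A$ (sampling relative to $V$, again with the coarse-Gaussian estimates) and solves a regression inside that column span to output $\tilde X\in\R^{k\times d}$ with $\frnorm{V\tilde X-A}^2\le(1+\varepsilon)\OPT_A$, in $O(\gamma^{-1}\nnz(A)+\varepsilon^{-1}dk^{\omega-1}+\varepsilon^{-1}k\,dn^{\gamma+o(1)}+\poly(\varepsilon^{-1}k))$ time. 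Rescaling $\varepsilon$ by constants at the three places a $(1+\varepsilon)$ is incurred, and union-bounding over the constantly many $O(1)$-probability failure events, gives error $(1+O(\varepsilon))\OPT_A$ with $\Omega(1)$ probability; summing running times gives the claimed bound.

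\textbf{Main obstacle.} Every ingredient is off-the-shelf, so essentially all the work is in the accounting, and the delicate point is uniform: each pass over $A$ must cost $O(\gamma^{-1}\nnz(A))$ and never $O(\nnz(A)\log n)$, which forces replacing every exact row/column-norm or leverage-type evaluation by the Lemma~\ref{lma:sampling-from-product}-style $O(1/\gamma)$-column Gaussian sketch and then checking that the resulting $O(n^{\gamma}\log n)$-coarse estimates still fall within the multiplicative slack the Deshpande--Vempala and \cite{boutsidis2017optimal} analyses tolerate once one oversamples and refines. Alongside this, the $\poly(k/\varepsilon)$-sized dense linear algebra must be organized so that the only term growing faster than quadratically in $k$ is the single $\varepsilon^{-1}(n+d)k^{\omega-1}$ coming from the two lifts back to $\R^n$ and $\R^d$, with everything else either $O(\gamma^{-1}\nnz(A))$, of the form $\varepsilon^{-1}k(nd^{\gamma+o(1)}+dn^{\gamma+o(1)})$, or $\poly(\varepsilon^{-1}k)$ independent of $n$ and $d$.
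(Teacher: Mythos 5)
Your plan matches the paper's at the level of ideas: a sketched $O(1)$-approximate $O(k)$-dimensional subspace $U$, one round of Deshpande--Vempala residual sampling to reach a $(1+\varepsilon)$-span $W$, an affine embedding to extract a $k$-dimensional $V$ from $W$, the \citet{boutsidis2017optimal} CUR machinery for the right factor, and Lemma~\ref{lma:sampling-from-product}'s $O(1/\gamma)$-column Gaussian estimator everywhere a norm evaluation would otherwise cost $\nnz(A)\log n$. The paper happens to build the $O(1)$-approximation differently (a right-side projection-cost preserving CountSketch $\bT$ with $\Theta(k^2)$ columns, a left subspace-embedding CountSketch $\bS$, and then column selection from $\bS A\bT$ lifted to $A\bT\Omega$), while you sketch on the right and orthonormalize directly; both are fine. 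Two of your parameter choices are slightly off but harmless: a CountSketch with only $O(k)$ columns is not a $k$-dimensional subspace embedding (you want $\Theta(k^2)$, as the paper uses), and in your Phase~II description you say ``columns'' where the right-factor CUR step selects rows.

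The one step where the accounting genuinely breaks is the extraction of $V$ from $W$. You propose a single (left) affine embedding $\bT$ and then ``$Y^\star=[(\bT W)^+(\bT A)]_k$ via a truncated SVD of a $\poly(\varepsilon^{-1}k)\times d$ matrix.'' But $\bT A$ is a $\poly(\varepsilon^{-1}k)\times d$ matrix with $\nnz(\bT A)\le\nnz(A)$, and $(\bT W)^+$ is a dense $O(k/\varepsilon)\times\poly(\varepsilon^{-1}k)$ matrix, so computing the product $(\bT W)^+(\bT A)$ already costs $\Omega\bigl((k/\varepsilon)\,\nnz(A)\bigr)$, which is not $O(\gamma^{-1}\nnz(A))$; alternatively computing $B\T{B}$ for the SVD of $B=(\bT W)^+(\bT A)$ costs $\Omega(d\cdot\poly(\varepsilon^{-1}k))$, which can exceed $\varepsilon^{-1}dk^{\omega-1}$. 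The paper avoids exactly this by noting that the rows of the minimizer of $\min_{\mathrm{rank}\text{-}k\,X}\frnorm{\bT_1 MX-\bT_1 A}^2$ lie in $\mathrm{rowspan}(\bT_1 A)$, rewriting the constrained problem as $\min_{\mathrm{rank}\text{-}k\,X'}\frnorm{MX'\bT_1 A - A}^2$, sketching \emph{both} sides (a second CountSketch $\bT_2$ on the right, making the sketched problem $\poly(\varepsilon^{-1}k)\times\poly(\varepsilon^{-1}k)$), solving it in $\poly(\varepsilon^{-1}k)$ time, factoring $X'=X_1X_2$ with $X_1\in\R^{O(k/\varepsilon)\times k}$, and only then lifting back via the single product $MX_1$ in $O(\varepsilon^{-1}nk^{\omega-1})$ time. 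Your one-sided version omits this right sketch and therefore does not achieve the stated running time; adding the $\bT_2$ sketch (or any equivalent device that keeps the small SVD at $\poly(\varepsilon^{-1}k)\times\poly(\varepsilon^{-1}k)$ size) closes the gap.
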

In the following sections, we will describe how to compute the left factor $V$ and the right factor $X$. We are not very careful with probabilities, as we only have to condition over the success of $O(1)$ events, and all these events can be chosen to have a success probability $1-c$ for any absolute constant $c > 0$ without affecting the time complexity.

We start with a residual sampling algorithm that lets us obtain a subspace containing a $1+\varepsilon$ approximation given a subspace that is only $O(1)$ approximate.
\subsubsection{Residual Sampling}
Suppose we have a subspace $V \in \R^{d}$ such that
\begin{equation*}
	\frnorm{A - A\mathbb{P}_V}^2 \le K\frnorm{A - [A]_k}^2.
\end{equation*}
The following theorem of \cite{deshpande2006matrix} shows that sampling $O(K \cdot k/\varepsilon)$ rows of the matrix $A$ with probabilities proportional to the squared distances of the rows to the subspace $V$ gives a subspace that along with $V$ contains a $1+\varepsilon$ rank-$k$ approximation to the matrix $A$.

\begin{theorem}[Theorem~2.1 of \cite{deshpande2006matrix}]
Let $A \in \R^{n \times d}$ and $V \in \R^{d}$ be a subspace. Let $E = A - A\mathbb{P}_V$, the matrix formed by projecting each row of $A$ away from the subspace $V$. Let $\bS$ be a random sample of $s$ rows of $A$ from a distribution $\calD$ such that row $i$ is chosen with probability $p_i \ge  \alpha\opnorm{E_{i*}}^2/\frnorm{E}^2$. Then for any non-negative integer $k$,
\begin{equation*}
	\E_{\bS}[\min_{\substack{\rank\text{-}k\, B\\
	\text{rowspan}(B) \subseteq V+\text{rowspan}(A_{\bS})}} \frnorm{A - B}^2] \le \frnorm{A - A_k}^2 + \frac{k}{s\alpha}\frnorm{E}^2.
\end{equation*}
\end{theorem}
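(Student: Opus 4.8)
The plan is to prove the bound by exhibiting, for each realization of the random sample $\bS$, a single explicit rank-$k$ matrix with rowspan inside $W := V + \mathrm{rowspan}(A_{\bS})$, and then bounding its error in expectation. The reduction uses the Pythagorean fact from the preliminaries: for any matrix $C$ with $\mathrm{rowspan}(C)\subseteq W$ and any matrix $Y$, one has $\frnorm{A-YC}^2 \ge \frnorm{A-AC^+C}^2$, where $AC^+C = A\mathbb{P}_{\mathrm{rowspan}(C)}$ has rank at most the number of rows of $C$ and rowspan contained in $W$. So it suffices to build a $C$ with exactly $k$ rows in $W$ together with a $Y$ such that
\[
\E_{\bS}\!\left[\frnorm{A-YC}^2\right] \le \frnorm{A-[A]_k}^2 + \frac{k}{s\alpha}\frnorm{E}^2 ,
\]
and then return $B := AC^+C$. (If $\rank(A)<k$ the inequality is vacuous, so we may assume the $k$-th singular value $\sigma_k>0$, and we take the $s$ sampled rows to be drawn independently from $\calD$.)

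For the construction I would use importance-sampled unbiased estimates of the top-$k$ right singular vectors of $A$. Write the thin SVD $A=\sum_i \sigma_i u_i \T{v_i}$. For $i\in[k]$, let $Y$ have $i$-th column $\sigma_i u_i$ and let $C$ have $i$-th row $\T{w_i}$, where, with $j_1,\dots,j_s$ the indices drawn into $\bS$,
\[
w_i := \mathbb{P}_V v_i + \frac1s\sum_{t=1}^{s}\frac{(u_i)_{j_t}}{\sigma_i\, p_{j_t}}\,\T{E_{j_t*}} .
\]
Each $\T{E_{j*}}=(I-\mathbb{P}_V)\T{A_{j*}}$ lies in $\mathrm{span}(V,A_{j*})$, so $w_i\in W$ and hence $\mathrm{rowspan}(C)\subseteq W$. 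The reason for splitting off the $\mathbb{P}_V v_i$ term is that $v_i=\tfrac1{\sigma_i}\T{A}u_i=\tfrac1{\sigma_i}\sum_j (u_i)_j\,\T{A_{j*}}$, so $(I-\mathbb{P}_V)v_i=\tfrac1{\sigma_i}\sum_j (u_i)_j\,\T{E_{j*}}$, and a single importance-weighted draw estimates this sum without bias; thus $\E_{\bS}[w_i]=\mathbb{P}_V v_i+(I-\mathbb{P}_V)v_i=v_i$.

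The last step is to expand the error and bound a variance. Since $\sigma_i u_i=Av_i$ and the $u_i,v_i$ are orthonormal, $A-YC = A - A\sum_{i\le k}v_i\T{w_i} = (A-[A]_k) + \sum_{i\le k}\sigma_i u_i\T{(v_i-w_i)}$, and the two summands are orthogonal in the Frobenius inner product because $\T{u_i}u_\ell=0$ whenever $i\le k<\ell$; therefore $\frnorm{A-YC}^2 = \frnorm{A-[A]_k}^2 + \sum_{i\le k}\sigma_i^2\,\opnorm{v_i-w_i}^2$. As $w_i-\mathbb{P}_V v_i$ is an average of $s$ i.i.d.\ unbiased estimators of $(I-\mathbb{P}_V)v_i$, a routine second-moment calculation gives $\E_{\bS}\opnorm{v_i-w_i}^2 \le \tfrac1s\sum_j \tfrac{(u_i)_j^2\,\opnorm{E_{j*}}^2}{\sigma_i^2\,p_j}$; substituting $p_j\ge \alpha\,\opnorm{E_{j*}}^2/\frnorm{E}^2$ and $\sum_j (u_i)_j^2=1$ yields $\E_{\bS}\opnorm{v_i-w_i}^2\le \frnorm{E}^2/(s\alpha\sigma_i^2)$. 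Multiplying by $\sigma_i^2$ and summing over $i\in[k]$ gives exactly $\tfrac{k}{s\alpha}\frnorm{E}^2$, which closes the argument.

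The genuinely delicate part is the design of the estimator $w_i$: it must simultaneously be \emph{certain} to lie in $W$ (forcing the split into $\mathbb{P}_V v_i$ plus a sum of sampled rows of $E$, not of $A$) and be \emph{exactly} unbiased for $v_i$, and one must check the Frobenius-orthogonality that lets the $\frnorm{A-[A]_k}^2$ term fall out with no slack. Once $w_i$ is set up correctly, the variance bound and the invocation of the Pythagorean inequality are routine. A secondary point to keep straight is the sampling model: the independence of the $s$ draws is what makes the variance scale as $1/s$, and is the intended reading of ``random sample of $s$ rows''.
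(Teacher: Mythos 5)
Your proof is correct and follows essentially the same route as the paper's Appendix~\ref{apx:proof-residual-sampling} proof of Lemma~\ref{lma:residual-sampling-our-version} (and the original argument in \cite{deshpande2006matrix}): build importance-sampled unbiased estimators of the top-$k$ singular directions living in $V+\text{rowspan}(A_{\bS})$, split off the $\mathbb{P}_V$ part to guarantee membership in that subspace, use Frobenius orthogonality against $A-[A]_k$, and bound the second moment; the only cosmetic difference is that you estimate $v_i$ directly and push $\sigma_i$ into $Y$, whereas the paper estimates $\T{A}u^{(j)}=\sigma_j v^{(j)}$. One small inaccuracy: when $\rank(A)<k$ the inequality is not vacuous (the right side becomes $\tfrac{k}{s\alpha}\frnorm{E}^2$, still a nontrivial claim), but this is trivially repaired by running your construction with $k':=\min(k,\rank(A))$ and noting that the resulting $C$ has rank at most $k'\le k$.
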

Instead of sampling $s$ rows independently from the distribution $p$, we can also sample each $i \in [n]$ with probability $q_i := \min(1, sp_i)$ and obtain the same result for the resulting random subset of rows. Sampling each $i \in [n]$ independently with probability  $q_i$ lets us use the sampling framework from Lemma~\ref{lma:sampling-from-product}.
\begin{lemma}[Sampling each row independently]
	Let $A \in \R^{n \times d}$ and $V$ be a subspace in $\R^{d}$ and let $E = A - A\mathbb{P}_V$. Sample each $i \in [n]$ independently with a probability $q_i := \min(1, sp_i)$, with $p_i \ge \alpha \opnorm{E_{i*}}^2/\frnorm{E}^2$ to obtain a random subset $\bS \subseteq [n]$. For any nonnegative integer $k$,
	\begin{equation*}
	\E_{\bS}[\min_{\substack{\rank\text{-}k\, B\\
	\text{rowspan}(B) \subseteq V+\text{rowspan}(A_{\bS})}} \frnorm{A - B}^2] \le \frnorm{A - A_k}^2 + \frac{k}{s\alpha}\frnorm{E}^2.		
	\end{equation*}
	\label{lma:residual-sampling-our-version}
\end{lemma}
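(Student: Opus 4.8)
The plan is to deduce the lemma from its i.i.d.\ counterpart (Theorem~2.1 above) by a coupling, exploiting that the quantity being bounded,
$$\Phi(\bS):=\min_{\substack{\rank\text{-}k\, B\\ \text{rowspan}(B)\subseteq V+\text{rowspan}(A_{\bS})}}\frnorm{A-B}^2,$$
is \emph{monotone non-increasing} in the sampled set: if $\bS\subseteq\bS'$ then $V+\text{rowspan}(A_{\bS})\subseteq V+\text{rowspan}(A_{\bS'})$, so the feasible region of the minimization only grows and $\Phi(\bS')\le\Phi(\bS)$. Hence it suffices to build a coupling in which the independent-Bernoulli set $\bS$ (with $\Pr[i\in\bS]=q_i=\min(1,sp_i)$) contains the support $T$ of an i.i.d.\ sample that Theorem~2.1 governs with the desired bound; taking expectations then gives $\E[\Phi(\bS)]\le\E[\Phi(T)]$, which inherits that bound.

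The coupling I would use is \emph{Poissonization}. Set $P:=\sum_i p_i$ and $\lambda:=sP$, draw $N\sim\mathrm{Poisson}(\lambda)$, and take $N$ i.i.d.\ rows from the probability distribution $\calD(i):=p_i/P$, which satisfies $\calD(i)\ge(\alpha/P)\,\opnorm{E_{i*}}^2/\frnorm{E}^2$. Then the number of draws landing on row $i$ is an \emph{independent} $\mathrm{Poisson}(\lambda\,\calD(i))=\mathrm{Poisson}(sp_i)$, so the support $T$ is itself a product-Bernoulli set with inclusion probability $r_i=1-e^{-sp_i}\le\min(1,sp_i)=q_i$ (using $1-e^{-x}\le\min(1,x)$). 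Since $\bS$ and $T$ are both product measures with $r_i\le q_i$ coordinatewise, the obvious coordinatewise coupling gives $T\subseteq\bS$ almost surely. Finally, conditioning on the value of $N$, applying Theorem~2.1 to the $N$ i.i.d.\ draws (with lower bound $\alpha/P$), and handling the exponentially unlikely event $\{N=0\}$ with the crude estimate $\Phi(\emptyset)\le\frnorm{E}^2+\OPT_A$ yields
$$\E[\Phi(T)]\le\OPT_A+\frac{kP}{\alpha}\,\frnorm{E}^2\cdot\E\!\big[\tfrac1N\,\mathbb{1}[N\ge1]\big]+e^{-\lambda}\big(\frnorm{E}^2+\OPT_A\big),$$
and since $\E[\tfrac1N\,\mathbb{1}[N\ge1]]=O(1/\lambda)=O(1/(sP))$, the middle term is $O(k/(s\alpha))\cdot\frnorm{E}^2$; the factor $P$ cancels, which is why the choice $\lambda=sP$ is forced.

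The only genuine subtlety I anticipate is the Poissonization step itself, and the fact that it produces the bound with an absolute constant in the $\frnorm{E}^2$ term rather than the constant $1$ stated. Poissonization is needed precisely because the support of a \emph{fixed}-size i.i.d.\ sample is only negatively associated, and one checks (already for $n=2$, $s=1$) that a product-Bernoulli set with the matching marginals need \emph{not} stochastically dominate it, so no coupling with $T\subseteq\bS$ exists in that formulation; trading the fixed sample size for a Poisson one is exactly what turns the support into an honest product measure that $\bS$ can dominate. The constant overhead is harmless because the lemma is only invoked (in the residual-sampling step of the low-rank algorithm) with $s=\Theta(Kk/\varepsilon)$, so a constant factor in $s$ is absorbed; if an exact constant is wanted, one takes $\lambda$ a small constant times $sP$ and uses the sharper asymptotics $\E[\tfrac1N\,\mathbb{1}[N\ge1]]=\tfrac1\lambda(1+O(1/\lambda))$ to drive the overhead to $1+o(1)$.
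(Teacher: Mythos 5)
Your proof is correct but takes a genuinely different route from the paper's. The paper re-runs the Deshpande--Vempala argument directly for the Bernoulli sampling scheme: for each left singular vector $u^{(j)}$ of $A$, it builds the unbiased estimator
\begin{equation*}
\bw^{(j)}=\sum_{i:q_i<1}\frac{u^{(j)}_i}{q_i}\T{(E_{i*})}\,\bI[i\in\bS]+\sum_{i:q_i=1}u^{(j)}_i\T{(E_{i*})}+\mathbb{P}_V\T{A}u^{(j)}
\end{equation*}
of $\sigma_j v^{(j)}=\T{A}u^{(j)}$, uses mutual independence of the inclusion indicators to bound $\E\opnorm{\bw^{(j)}-\sigma_jv^{(j)}}^2\le\frac{1}{s\alpha}\frnorm{E}^2$ minus a nonnegative slack, and then invokes the remaining steps of \cite{deshpande2006matrix} verbatim. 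That route yields the exact constant $1$ in front of the error term. Your route treats Theorem~2.1 as a black box and transports it to the Bernoulli model by Poissonization plus monotonicity of the projection cost in the sampled set. You correctly identify the obstruction forcing Poissonization (the support of a fixed-size i.i.d.\ draw is only negatively associated, not a product measure, so no domination coupling exists with matching marginals), and you correctly assess the quantitative cost: an absolute constant in front of $\frac{k}{s\alpha}\frnorm{E}^2$ plus an $e^{-\lambda}\frnorm{E}^2$ tail from the event $\{N=0\}$. Two small tidy-ups: the $\OPT_A$ contribution you attach to that tail event is already absorbed into the leading $\OPT_A$ (the $\OPT_A$ summand appears for every value of $N$), and since $\lambda=sP\ge s\alpha$ with $e^{-x}\le 1/x$, the tail $e^{-\lambda}\frnorm{E}^2$ can itself be folded into $\frac{k}{s\alpha}\frnorm{E}^2$ for $k\ge 1$. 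So your argument gives the stated inequality with a factor of roughly $3$ on the error term, which, as you note, is harmless where the lemma is used. The trade-off is modularity (your reduction reuses the i.i.d.\ theorem unchanged) versus matching the stated constant (the paper's direct variance computation).
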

The proof of this lemma is in Appendix~\ref{apx:proof-residual-sampling}

\subsubsection{Computing the left factor of an approximation}
Let $\bT$ be a CountSketch matrix with $\Theta(k^2)$ columns. In \cite{cohen2015dimensionality}, the authors show that $\bT$ is a projection cost preserving sketch, i.e., with probability $9/10$, for all projection matrices $P$ of rank at most $O(k)$,
\begin{equation*}
	\frnorm{(I-P)A\bT}^2 = (1\pm 1/10)\frnorm{(I-P)A}^2.
\end{equation*}
Let $\bS$ be a CountSketch matrix with $\Theta(k^{4})$ rows. Then, with probability $\ge 99/100$, $\bS$ is a subspace embedding for the matrix $A\bT$ and therefore for any matrix $X$,
\begin{equation*}
	\frnorm{\bS A\bT X - \bS A\bT}^2 = (1 \pm 1/10)\frnorm{A\bT X - A\bT}^2.
\end{equation*}
We can relate $\text{OPT}_A$ and $\text{OPT}_{\bS A\bT}$ as follows: 
\begin{equation*}
\OPT_{\bS A\bT} = 	\frnorm{\bS A\bT - [\bS A\bT]_k}^2 = \min_{\text{rank-}k\, X}\frnorm{\bS A\bT - \bS A\bT X}^2 \le \frac{11}{10}\min_{\text{rank-}k\, X}\frnorm{A\bT - A\bT X}^2 = \frac{11}{10}\OPT_{A\bT}
\end{equation*}
where the inequality follows from the subspace embedding property of $\bS$ for the column space of $A\bT$. Now,
\begin{equation*}
	\OPT_{A\bT} = \min_{\text{\rank-$k$ projections}\, P}\frnorm{(I-P)A\bT}^2 \le \frac{10}{9}\min_{\text{\rank-$k$ projections}\, P}\frnorm{(I-P)A}^2 = \frac{10}{9}\OPT_A.
\end{equation*}
Here, the inequality follows as $\bT$ is a projection cost preserving sketch for $k$ dimensional projections. Thus, $\OPT_{\bS A\bT} \le (11/9)\OPT_{A}$.

\citet{boutsidis2017optimal} show that for any matrix $M$, there exists a sub-matrix $M'$ of $M$, with $O(k/\varepsilon)$ columns such that there is a rank $k$ matrix $B$, $\text{colspan}(B) \subseteq \text{colspan}(M')$, and $\frnorm{M - B}^2 \le (1+\varepsilon)\frnorm{M - [M]_k}^2$. They also give an algorithm to find such a subset of columns. As $\bS A\bT$ is a $O(k^4) \times O(k^2)$ matrix, using their algorithm, we can compute in time $\poly(k)$, a column selection matrix $\Omega$ that selects $O(k)$ columns of $\bS A\bT$ such that
\begin{equation*}
\min_{\text{rank-}k\, X}	\frnorm{\bS A\bT - \bS A\bT\Omega X}^2 \le \frac{3}{2}\OPT_{\bS A\bT} \le 2\OPT_A.
\end{equation*}
We now have
$
	\frnorm{(\bS A\bT \Omega)(\bS A\bT\Omega)^+\bS A\bT - \bS A\bT}^2 \le \min_{\text{rank-}k\, X}	\frnorm{\bS A\bT - \bS A\bT\Omega X}^2 \le 2\OPT_A.
$
Using the property that $\bS$ is a subspace embedding for the column space of $A\bT$, we have
\begin{equation*}
	\frnorm{A\bT \Omega(\bS A\bT)^+\bS A\bT - A\bT}^2 \le \frac{20}{11}\OPT_A.
\end{equation*}
Let $U$ be a matrix with orthonormal columns such that $\text{colspan}(A\bT \Omega) = \text{colspan}(U)$. Therefore,
\begin{equation*}
	\frnorm{U\T{U}A\bT - A\bT}^2 \le \frnorm{(A\bT \Omega)(\bS A\bT\Omega)^+\bS A\bT - A\bT}^2 \le \frac{20}{11}\OPT_A
\end{equation*}
which finally implies, as $\bT$ is a projection cost preserving sketch for $O(k)$ dimensional projections, that
$
	\frnorm{U\T{U}A - A}^2 \le ({10}/{9})({20}/{11})\OPT_A \le 3\OPT_A.
$
Thus, $\text{colspan}(U)$ is an $O(k)$ dimensional subspace with $\frnorm{(I-U\T{U})A}^2 \le 3\OPT_A$. As, $\bT$ and $\bS$ are CountSketch matrices, the matrices $A\bT$ and $\bS A\bT$ can be computed in time $\nnz(A)$. The matrix $\Omega$ can be computed in time $\poly(k)$ and the matrix $A\bT\Omega$ is obtained by selecting the appropriate columns of matrix $A\bT$. The orthonormal matrix $U$ can be computed in time $O(nk^{\omega-1})$. Using $U$, we now obtain a larger subspace of dimension $O(k/\varepsilon)$ that spans a $1+\varepsilon$ approximation.

Using Lemma~\ref{lma:residual-sampling-our-version}, we have that if \emph{columns} of the matrix $A$ are sampled independently to obtain a subset $\bS_{\text{res}}\subseteq [d]$ such that $\Pr[j \in \bS_{\text{res}}] \ge \min(1, sp_j)$ for $s = O(k/\varepsilon)$, $p_j = \opnorm{(I - U\T{U})A_{*j}}^2/\frnorm{(I-U\T{U})A}^2$, then with probability $\ge 99/100$, the subspace $\text{colspan}(U)+ \text{colspan}(A^{\bS_{\text{res}}})$ spans columns of a $k$ dimensional matrix that is a $(1+\varepsilon)$ rank-$k$ approximation for $A$.

Lemma~\ref{lma:sampling-from-product} shows how to sample $\bS_{\text{res}}$ from such a distribution. In the notation of  Lemma~\ref{lma:sampling-from-product}, we have $T_1 = O(\nnz(A) + nk)$ and $T_2 = nk$. Therefore, with probability $\ge 95/100$, we can obtain a sample $\bS_{\text{res}}$ from a distribution over subsets of $[d]$ such that independently, $\Pr[j \in \bS_{\text{res}}] \ge \min(1, O(k/\varepsilon)p_j)$ in time $O(\gamma^{-1}(\nnz(A) + nk) + nk\log(d) + \varepsilon^{-1}d^{\gamma}nk\log^2(d)) = O(\gamma^{-1}\nnz(A) + \varepsilon^{-1}nkd^{\gamma+o(1)})$ for any small constant $\gamma$. Let $M = [U\, A^{\bS_{\text{res}}}]$. We have with probability $\ge 9/10$, that
\begin{equation*}
	\min_{\text{rank-}k\, X}\frnorm{MX - A}^2 \le (1+\varepsilon)\OPT_A.
\end{equation*}
To obtain a good $k$-dimensional subspace within the column space of $M$, we can sketch and solve the above problem. Let $\bT_1$ be a CountSketch matrix with $O((k/\varepsilon)^2/\varepsilon^2)$ rows. Then with probability $\ge 99/100$, $\bT_1$ is an affine embedding for $(M, A)$ and therefore for any matrix $X$, $\frnorm{\bT_1 MX - \bT_1 A}^2 \in (1 \pm \varepsilon)\frnorm{MX - A}^2$. Let $X_{\bT_1}$ be the optimal solution for $\min_{{\text{rank-}}k\, X}\frnorm{\bT_1 MX - \bT_1 A}$. As $X_{\bT_1}$ is optimal, the rows of the matrix $X_{\bT_1}$ must be spanned by the rows of the matrix $\bT_1 A$, which implies that
$
	\min_{\text{rank-}k\, X}\frnorm{MX\bT_1 A - A}^2 \le (1+O(\varepsilon))\OPT_A.
$
This problem can now be solved by sketching on the left and the right with $\bT_1$ and $\bT_2$, where $\bT_2$ is a CountSketch matrix with $\poly(k/\varepsilon)$ rows, and then solving the sketched problem optimally. The time complexity of sketching is $O(\nnz(M) + \nnz(A)) = O(\nnz(A) + nk/\varepsilon)$, and the sketched problem can be solved in time $\poly(k/\varepsilon)$. Thus in time $O(\nnz(A) + nk/\varepsilon + \poly(k/\varepsilon))$, we can compute a rank $k$ matrix $X$ such that
\begin{equation*}
	\frnorm{MX\bT_1A - A}^2 \le (1+O(\varepsilon))\OPT_A.
\end{equation*}
We can also compute a decomposition of $X = X_1 \cdot X_2$ where $X_1$ has $k$ columns in time $\poly(k/\varepsilon)$, which implies that the $k$ dimensional column span of $MX_1$ is a $1+O(\varepsilon)$ approximate rank $k$ singular subspace i.e., $\frnorm{(MX_1)(MX_1)^+A - A}^2 \le (1+O(\varepsilon))\OPT_A$. The matrix $MX_1$ can be computed in time $O(nk^{\omega-1}/\varepsilon)$ and a matrix $V$ which is an orthonormal basis for the column space of  the $n \times k$ matrix $MX_1$ can be computed in time $O(nk^{\omega-1})$. Thus, in time $O(\gamma^{-1}\nnz(A) + \epsilon^{-1}nkd^{\gamma+o(1)} +\varepsilon^{-1}nk^{\omega-1} + \poly(\varepsilon^{-1}k))$, we can compute a left factor for a $1+\varepsilon$ rank-$k$ approximation of $A$. Thus, we have the following lemma.
\begin{lemma}
	Given a matrix $A \in \R^{n \times d}$, a rank parameter $k$ and accuracy parameter $\varepsilon$, we can compute a matrix $V$ with $k$ orthonormal columns in time $O(\gamma^{-1}\nnz(A) + \varepsilon^{-1}nkd^{\gamma+o(1)}+\varepsilon^{-(\omega-1)}nk^{\omega-1}+\poly(\varepsilon^{-1}k))$ such that
	\begin{equation*}
		\frnorm{A - V\T{V}A}^2 \le (1+\varepsilon)\frnorm{A - [A]_k}^2.
	\end{equation*}
\end{lemma}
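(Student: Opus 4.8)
The plan is to build $V$ through a four-stage pipeline: (i) shrink $A$ to a tiny matrix while preserving low-rank-approximation cost up to a constant factor, (ii) extract from that tiny matrix an $O(k)$-dimensional subspace that is a constant-factor approximation, (iii) enrich this subspace by residual column sampling so that it captures a $(1+\varepsilon)$-approximation, and (iv) sketch-and-solve to pull out the best rank-$k$ subspace inside it, then orthonormalize. Crucially, every step will touch $A$ only through CountSketches and $O(1/\gamma)$-column Gaussian sketches, so we never pay an $\nnz(A)\cdot k$ or $\nnz(A)\log n$ term.

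First I would compose a CountSketch $\bT$ with $\Theta(k^2)$ columns, which by \cite{cohen2015dimensionality} is a projection-cost-preserving sketch of $A$ for rank-$O(k)$ projections, with a CountSketch $\bS$ with $\Theta(k^4)$ rows, which is a subspace embedding for the $O(k^2)$-dimensional column space of $A\bT$. Chaining the two inequalities yields $\OPT_{\bS A\bT}\le (11/9)\OPT_A$, and both $A\bT$ and $\bS A\bT$ are formed in $O(\nnz(A))$ time. Since $\bS A\bT$ is only $O(k^4)\times O(k^2)$, I would run the column-selection algorithm of \citet{boutsidis2017optimal} on it in $\poly(k)$ time to get a selector $\Omega$ of $O(k)$ columns with $\min_{\text{rank-}k\,X}\frnorm{\bS A\bT-\bS A\bT\Omega X}^2\le \tfrac32\OPT_{\bS A\bT}\le 2\OPT_A$. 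Pulling the associated projection back through $\bS$ and then through $\bT$, using the Pythagorean inequality $\frnorm{M-BC}^2\ge\frnorm{M-MC^+C}^2$ at each step, produces an orthonormal $U\in\R^{n\times O(k)}$ with $\frnorm{(I-U\T{U})A}^2\le 3\OPT_A$; orthonormalizing the $n\times O(k)$ matrix $A\bT\Omega$ to get $U$ costs $O(nk^{\omega-1})$.

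Next I would invoke the independent-sampling residual lemma (Lemma~\ref{lma:residual-sampling-our-version}) applied to the \emph{columns} of $A$ with residual $E=(I-U\T{U})A$ and $s=O(k/\varepsilon)$: sampling each column $j$ independently with probability $\ge\min(1,sp_j)$ for $p_j=\opnorm{E_{*j}}^2/\frnorm{E}^2$ gives $\bS_{\text{res}}\subseteq[d]$ such that $M:=[U\ A^{\bS_{\text{res}}}]$ satisfies $\min_{\text{rank-}k\,X}\frnorm{MX-A}^2\le(1+\varepsilon)\OPT_A$ with constant probability. These probabilities are produced via Lemma~\ref{lma:sampling-from-product} with $T_1=O(\nnz(A)+nk)$ (the cost of applying the implicit residual operator to a vector) and $T_2=O(nk)$, for a total of $O(\gamma^{-1}\nnz(A)+\varepsilon^{-1}nkd^{\gamma+o(1)})$. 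Then sketch-and-solve: apply an affine embedding $\bT_1$ (a CountSketch with $\poly(k/\varepsilon)$ rows) for $(M,A)$; by optimality the best right factor for $\min_{\text{rank-}k\,X}\frnorm{\bT_1 MX-\bT_1 A}$ has rows spanned by $\bT_1 A$, so it suffices to solve $\min_{\text{rank-}k\,X}\frnorm{M X\bT_1 A-A}^2$, which can be sketched on both sides with further $\poly(k/\varepsilon)$-row CountSketches and solved optimally in $\poly(k/\varepsilon)$ time, yielding a rank-$k$ $X$ with $\frnorm{MX\bT_1A-A}^2\le(1+O(\varepsilon))\OPT_A$ in $O(\nnz(A)+nk/\varepsilon+\poly(k/\varepsilon))$ time. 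Writing $X=X_1X_2$ with $X_1$ having $k$ columns, $\text{colspan}(MX_1)$ is a $(1+O(\varepsilon))$-approximate rank-$k$ singular subspace; forming $MX_1$ costs $O(\varepsilon^{-(\omega-1)}nk^{\omega-1})$ and orthonormalizing it to obtain $V$ costs $O(nk^{\omega-1})$. Summing the stage costs and rescaling $\varepsilon$ by a constant gives the claimed running time.

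The main obstacle is controlling error and running time \emph{simultaneously} across this chain: each of the PCP, the subspace-embedding, and the Boutsidis--Woodruff step only gives constant-factor distortion, so all of the $1+\varepsilon$ accuracy must be squeezed out of the residual-sampling and affine-embedding stages, and one must carefully verify that pulling the Boutsidis--Woodruff projection back to the original space twice (through $\bS$ and through $\bT$) truly preserves an $O(1)$-approximation via the Pythagorean inequality. The most delicate bookkeeping is in applying Lemma~\ref{lma:sampling-from-product}: one must identify $T_1$ and $T_2$ correctly when the "matrix" being sampled is the implicit residual operator $(I-U\T{U})A$, so that the residual-sampling cost stays $O(\gamma^{-1}\nnz(A)+\varepsilon^{-1}nkd^{\gamma+o(1)})$ and does not blow up to $\nnz(A)\log d$ or $nk^2$.
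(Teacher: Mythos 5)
Your pipeline is the same as the paper's: the $\Theta(k^2)$-column CountSketch PCP composed with a $\Theta(k^4)$-row CountSketch subspace embedding, Boutsidis--Woodruff column selection on the tiny sketch pulled back via the Pythagorean inequality to get a constant-factor $O(k)$-dimensional $U$, then residual column sampling via Lemma~\ref{lma:residual-sampling-our-version} and Lemma~\ref{lma:sampling-from-product} with $T_1=O(\nnz(A)+nk)$, $T_2=O(nk)$, followed by the affine-embedding sketch-and-solve, factorization $X=X_1X_2$, and orthonormalization of $MX_1$. The argument, the sketch dimensions, the intermediate bounds, and the running-time accounting all match the paper's proof, so this is essentially the same route.
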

\subsubsection{Computing a right factor given a left factor}
Given a matrix $V$ with $k$ orthonormal columns such that
\begin{equation*}
\min_{X}	\frnorm{VX - A}^2 \le (1+O(\varepsilon))\frnorm{A - [A]_k}^2,
\end{equation*}
we want to compute a rank $k$ matrix $\tilde{X}$ that satisfies $\frnorm{V\tilde{X} - A}^2 \le (1+O(\varepsilon))\frnorm{A - [A]_k}^2$.

For $i \in [n]$, let $p_i = \opnorm{V_{*i}}^2/k$. Suppose $\bS_{\lev}$ is a sampling matrix with $s = O(k\log(k))$ rows such that each row of $\bS_{\lev}$ is independently equal to $\T{e_i}/\sqrt{sp_i}$ with a probability $p_i$. Then we have 
\begin{equation*}
	\text{for all vectors $x$}, \opnorm{\bS_{\lev}Vx}^2 \in (1 \pm 1/2)\opnorm{Vx}^2.
\end{equation*}
Let $M_2 = \T{V}\T{\bS_{\lev}}$ and let $V_{M_2}$ be a matrix with $k$ orthonormal columns such that $\text{colspan}(V_{M_2}) = \text{rowspan}(M_2)$. Let $S_2$ be the BSS-Sampling matrix returned by the dual set spectral sparsification algorithm of \cite{boutsidis2017optimal} on the inputs $V_{M_2},\bS_{\lev}(I-V\T{V})A\bT$ with a parameter $4k$, where $\bT$ is a CountSketch matrix with $O(k^2)$ columns. The matrix $S_2$ selects $4k$ rows of the matrix $\bS_{\lev}A$. Let $R_1 = S_2\bS_{\lev}A$. Lemma~6.7 of \cite{boutsidis2017optimal} shows that
\begin{equation*}
	\frnorm{A - AR_1^+R_1}^2 \le O(1)\frnorm{A - [A]_k}^2.
\end{equation*}
As the matrix $R_1$ has $4k$ rows, an orthonormal basis $U$ for the rowspace of $R_1$, with $4k$ orthonormal columns, can be computed  in time $dk^{\omega-1}$. We can then perform residual sampling of rows of $A$ with respect to the subspace $U$ using the Lemma~\ref{lma:sampling-from-product}. Here $T_1 = \nnz(A)+dk$ and $T_2 = dk$. Thus, we can sample rows from a distribution defined by the probabilities $\min(1, (s/16){\opnorm{A_{i*}(I-U\T{U})}^2}/{\frnorm{A(I-U\T{U})}^2})$, for $s = O(k/\varepsilon)$ in time $O(\gamma^{-1}\nnz(A) + \varepsilon^{-1}dkn^{\gamma+o(1)})$. Let $\bS_{\text{res}}' \subseteq [n]$ be the rows sampled. Let $R = \begin{bmatrix}\T{U} \\ A_{\bS_{\text{res}}'}\end{bmatrix}$. The matrix $R$ has $O(k/\varepsilon)$ rows.

Now,  as in proof of the Theorem~5.1 of \cite{boutsidis2017optimal}, we have with proabability $\ge 9/10$,
\begin{equation*}
	\frnorm{A - V\T{V}AR^+R}^2 \le (1+O(\varepsilon))\frnorm{A - [A]_k}^2,
\end{equation*}
which implies
$
	\min_{X}\frnorm{A - VXR}^2 \le (1+O(\varepsilon))\frnorm{A - [A]_k}^2.
$
By sketching the problem on the left and the right with CountSketch matrices $\bT_1$ and $\bT_2$ with $\poly(k/\varepsilon)$ rows and columns respectively, the optimal solution $X_{\bT}$ for the sketched problem satisfies
\begin{equation*}
	\frnorm{A - VX_{\bT}R}^2 \le (1+O(\varepsilon))\frnorm{A - [A]_k}^2.
\end{equation*}
Finally, the product $X_{\bT} \cdot R$ can be computed in time $O(dk^{\omega-1}/\varepsilon) $ to obtain a matrix $\tilde{X}$ such that
\begin{equation*}
	\frnorm{A - V\tilde{X}}^2 \le (1+O(\varepsilon))\frnorm{A - [A]_k}^2.
\end{equation*}
Thus, we can compute two matrices $V,\tilde{X}$ with $k$ columns and $k$ rows respectively, such that the product $V \cdot \tilde{X}$ is a $1+\varepsilon$ approximate rank-$k$  Frobenius norm approximation to the matrix $A$, in time 
\begin{equation*}
	O(\gamma^{-1}\nnz(A) + \varepsilon^{-1}(n+d)k^{\omega-1} + \varepsilon^{-1}k(nd^{\gamma+o(1)}+dn^{\gamma+o(1)}) + \poly(\varepsilon^{-1}k)).
\end{equation*}

\paragraph{Acknowledgments:}  P. Kacham and D. Woodruff research were supported in part by National Institute of Health grant 5R01 HG 10798-2, NSF award CCF-1815840, and a Simons Investigator Award.

\bibliographystyle{plainnat}
\bibliography{main}
\appendix
\section{Missing proofs from Section~\ref{sec:applications}}\label{apx:missing-applications}
\subsection{Proof of Lemma~\ref{lma:embedding-to-apx-leverage-scores}}\label{apx:proof-embedding-to-apx-leverage-scores}
\begin{proof}[Proof of Lemma~\ref{lma:embedding-to-apx-leverage-scores}]
Let $AR = UT$ where $U$ is an orthonormal matrix. As $\text{colspan}(AR) = \text{colspan}(A)$, we have that $\ell_i^2 = \opnorm{U_{i*}}^2$. We first have for any vector $x$,
\begin{equation*}
	\opnorm{Tx} = \opnorm{UTx} = \opnorm{ARx} \le \opnorm{S ARx} = \opnorm{Qx} = \opnorm{x}
\end{equation*}
and
\begin{align*}
	\opnorm{Tx} = \opnorm{UTx} = \opnorm{ARx} \ge ({1}/{\beta})\opnorm{S ARx} &= ({1}/{\beta})\opnorm{Qx} =({1}/{\beta})\opnorm{x}.
\end{align*}
Here we repeatedly used the facts that $Q$ and $U$ are orthonormal matrices. Thus, we obtain $\opnorm{T} \le 1$ and $\sigma_{\min}(T) \ge 1/\beta$. As $A_{i*}R = U_{i*}T$, we obtain that
\begin{equation*}
	\opnorm{A_{i*}R}  = \opnorm{U_{i*}T} \le \opnorm{U_{i*}}\opnorm{T} \le \opnorm{U_{i*}}
\end{equation*}
and
\begin{equation*}
	\opnorm{A_{i*}R} = \opnorm{U_{i*}T} \ge \opnorm{U_{i*}}\simga_{\min}(T) \ge (1/\beta)\opnorm{U_{i*}}. 
\end{equation*}
Thus,
$
	{\ell_i}^2/{\beta^2} \le \opnorm{A_{i*}R}^2 \le \ell_i^2.
$
\end{proof}
\subsection{Proof of Lemma~\ref{lma:residual-sampling-our-version}}\label{apx:proof-residual-sampling}
\begin{proof}[Proof of Lemma~\ref{lma:residual-sampling-our-version}]
	Let $u^{(1)},\ldots, u^{(d)}$ be the left singular vectors and $v^{(1)},\ldots, v^{(d)}$ be the right singular vectors. For $j = 1,\ldots, k$, let
	\begin{equation*}
		\bX^{(j)} = \sum_{i: q_i < 1} \frac{u^{(j)}_i}{q_i}\T{(E_{i*})}\bI[\text{$i$ is sampled}]
	\end{equation*}
	and $\bw^{(j)} = \bX^{(j)} + \sum_{i:q_i=1}u_i^{(j)}\T{(E_{i*})}+ \mathbb{P}_V\T{A}u^{(j)}$. We have $\E[{\bw^{(j)}}] = \T{A}u^{(j)} = \sigma_jv^{(j)}$. Now,
	\begin{equation*}
		\E[\opnorm{\bw^{(j)} - \sigma_j v^{(j)}}^2] = \E[\opnorm{\bX^{(j)} - \sum_{i:q_i < 1}u^{(j)}_i\T{(E_{i*})}}^2] = \E[\opnorm{\bX^{(j)}}^2] - \opnorm{\sum_{i:q_i < 1}u^{(j)}_i\T{(E_{i*})}}^2.
	\end{equation*}
Now,
\begin{align*}
	\E[\opnorm{\bX^{(j)}}^2] &= \E[\opnorm{\sum_{i: q_i < 1}\frac{u^{(j)}_i}{q_i}\T{(E_{i*})}\bI[\text{$i$ is sampled}]}^2]\\
	&= \sum_{i:q_i < 1} \frac{(u_i^{(j)})^2}{q_i^2}\opnorm{E_{i*}}^2q_i + \sum_{i \ne i': q_i,q_{i'} < 1} u_{i}^{(j)}u^{(j)}_{i'}\langle E_{i*}, E_{{i'*}}\rangle\\
\end{align*}
As the values $p_i$ used to define probabilities $q_i$ are such that $p_i \ge \alpha \opnorm{E_{i*}}^2/\frnorm{E}^2$, then we have
\begin{equation*}
	\E[\opnorm{\bX^{(j)}}^2] \le \frac{1}{s\alpha}\frnorm{E}^2 + \opnorm{\sum_{i: q_i < 1} u_{i}^{(j)}\T{(E_{i*})}}^2 - \sum_{i: q_i < 1}\opnorm{u_{i}^{(j)}\T{(E_{i*})}}^2.
\end{equation*}
Thus, $\E[\opnorm{\bw^{(j)} - \sigma_j v^{(j)}}^2] \le (1/s\alpha)\frnorm{E}^2 - \sum_{i: q_i < 1}\opnorm{u_{i}^{(j)}\T{(E_{i*})}}^2$. From here, using the same proof as \cite{deshpande2006matrix}, we obtain that the subspace $V + \text{span}(A_S)$ spans rows of a rank $k$ matrix $B$ such that
\begin{equation*}
	\frnorm{A-B}^2 \le \frnorm{A-A_{k}}^2 + \frac{k}{s\alpha}\frnorm{E}^2.
\end{equation*}
\end{proof}

\end{document}